\newcommand{\ex}[2]{{\ifx&#1& \mathbb{E} \else \underset{#1}{\mathbb{E}} \fi \left[#2\right]}}
\newcommand{\pr}[2]{{\ifx&#1& \mathbb{P} \else \underset{#1}{\mathbb{P}} \fi \left[#2\right]}}
\newcommand{\exc}[3]{{\ifx&#1& \mathbb{E} \else \underset{#1}{\mathbb{E}} \fi \left[ #2 \middle| #3 \right]}}
\newcommand{\prc}[3]{{\ifx&#1& \mathbb{P} \else \underset{#1}{\mathbb{P}} \fi \left[ #2 \middle| #3 \right]}}
\newcommand{\var}[2]{{\ifx&#1& \mathbb{V} \else \underset{#1}{\mathbb{V}} \fi \left[#2\right]}}
\DeclareMathOperator*{\argmin}{arg\,min}
\DeclareRobustCommand{\[}{\begin{equation}} 
\DeclareRobustCommand{\]}{\end{equation}}
\newtheorem{theorem}{Theorem}
\newtheorem{lemma}[theorem]{Lemma}
\newtheorem{definition}[theorem]{Definition}
\newtheorem{corollary}[theorem]{Corollary}
\newtheorem{proposition}[theorem]{Proposition}
\newtheorem{claim}[theorem]{Claim}
\newtheorem{openproblem}[theorem]{Open Problem}
\numberwithin{theorem}{section}
\numberwithin{equation}{section}
\numberwithin{algorithm}{section}
\numberwithin{table}{section}
\renewcommand\tableofcontents{%
    \section*{\makebox[\linewidth][c]{\contentsname}%
      \@mkboth{\MakeUppercase\contentsname}{\MakeUppercase\contentsname}}%
    \begin{multicols}{2}%
    \@starttoc{toc}%
    \end{multicols}
    }
\newcommand*{\citet}[1]{\AtNextCite{\AtEachCitekey{\defcounter{maxnames}{999}}}\textcite{#1}}
\newcommand*{\citep}[1]{\cite{#1}}
\title{Privately Estimating Black-Box Statistics}
\author{
G\"unter F. Steinke\thanks{University of Canterbury ~\dotfill~ \texttt{gunter.steinke@canterbury.ac.nz}}
\and
Thomas Steinke\thanks{Google DeepMind ~\dotfill~ \texttt{steinke@google.com}}
}
\date{}
\begin{document}
\maketitle
\footnotetext{Accepted for presentation at \href{https://learningtheory.org/colt2026/}{COLT 2026}. Full version: \url{https://arxiv.org/abs/2510.00322}.}
\begin{abstract}
    Standard techniques for differentially private estimation, such as Laplace or Gaussian noise addition, require guaranteed bounds on the sensitivity of the estimator in question. But such sensitivity bounds are often large or simply unknown. Thus we seek differentially private methods that can be applied to arbitrary black-box functions. A handful of such techniques exist, but all are either inefficient in their use of data or require evaluating the function on exponentially many inputs. In this work we present a scheme that trades off between statistical efficiency (i.e., how much data is needed) and oracle efficiency (i.e., the number of evaluations). We also present lower bounds showing the near-optimality of our scheme.
\end{abstract}
\vspace{15pt}
\begin{small}
\tableofcontents
\end{small}
\newpage

\section{Introduction}
Differential privacy \citep{DMNS06} provides a mathematical framework for measuring and controlling the leakage of sensitive information via computations on a private dataset.
Given a function $f$ that we wish to evaluate on a private dataset $x$, the simplest and best-known method for ensuring differential privacy is to add Laplace or Gaussian noise -- e.g., $M(x)=f(x)+\xi$, where $\xi \gets \mathsf{Laplace}(\Delta_f/\varepsilon)$ is random noise scaled according to the privacy parameter $\varepsilon$ and the (global) sensitivity of $f$. The \emph{global sensitivity of $f$} is given by $\Delta_f := \sup_{x,x'} |f(x)-f(x')|$, where the supremum is over all pairs of inputs differing only by the addition or removal of one person's data.

In many cases, the global sensitivity of the function we want to evaluate is large (or even infinite), or simply unknown. 
In practice, the function may be given to us as a ``black box'' -- that is, we can only evaluate the function as an ``oracle'' and not inspect its inner workings, or it may be presented as a piece of untrusted code that is too complicated to analyse.\footnote{As a motivating practical example of code that is too complicated to analyze, consider $f$ such that computing $y=f(x)$ requires training a machine learning model with the input $x$ being the training data and the output $y$ being the predicted label of a fixed unlabelled data point \cite{pate1,pate2}.} 
In these cases, we cannot rely on the standard noise addition approach to ensure differential privacy.
Many methods have been proposed for privately evaluating functions that go beyond noise addition and are applicable to functions with high (global) sensitivity; see Section \ref{sec:related} for a brief survey. 
However, all of these methods have various drawbacks (which we briefly discuss next) that have limited practical adoption: Either they require evaluating the function many times (or they require some a priori structural knowledge about the function), or they are statistically inefficient.

Most methods for evaluating functions with high global sensitivity still rely on being able to compute or bound some property of the function, such as smooth sensitivity \citep{NRS07} or distance to instability \citep{DL09}, instead of global sensitivity. 
Hence these methods require either careful analysis of the function or evaluating the function over a large fraction of its domain (i.e., they may require evaluating the function on exponentially many -- or even infinitely many -- inputs). 
This makes these methods impractical in the black-box setting. 

A further limitation of the aforementioned methods is that they require evaluating the function on arbitrary inputs that may not correspond to any realistic data. This may ``break'' the function in the sense that the function may be well-behaved on real data, but could produce arbitrary values if even one input datapoint is corrupted. For example, changing one input can change the mean of a dataset arbitrarily. That is to say, the \emph{local} sensitivity of the function may be large, which implies that, e.g., the smooth sensitivity will also be large.
This limitation can be circumvented by using \emph{down-local} algorithms (as we do); down-local algorithms only evaluate the function on (subsets of) the given input 
\cite{cummings2020individual,FangDY22,kohli2023differential,LRSS25}.
However, most down-local algorithms still require evaluating a black-box function on exponentially many subsets of the input.

There is one method that does not have the aforementioned limitations: The sample-and-aggregate framework of \citet{NRS07} only requires evaluating the function on a small number of inputs (with each input consisting only of ``real'' data) and does not require any structural assumptions about the function, which makes it appealing in practice \cite{pate1,pate2,cohen2023hot}. The catch is that this method is statistically inefficient in terms of its use of data.
That is, if we start with a dataset of size $n$, then sample-and-aggregate evaluates the function on datasets of size $O(\varepsilon n)$. (Specifically, the sample-and-aggregate framework partitions the dataset into $O(1/\varepsilon)$ equal-sized parts.)
Very roughly, the final accuracy of our private estimate given a dataset of size $n$ is only as good as a non-private estimate on a dataset of size $O(\varepsilon n)$.
Thus, if the privacy parameter $\varepsilon$ is small, then sample-and-aggregate suffers a significant cost in terms of statistical accuracy.

\subsection{Our Contributions}
In this article, we examine the tradeoff between statistical efficiency (i.e., how much data is needed to estimate a statistic) and oracle query complexity (i.e., how many times we need to evaluate the function).

Our main result is a differentially private algorithm which takes a real-valued black-box function $f$ and a private dataset $x$ and evaluates the function on multiple subsets of the input dataset and then outputs an estimate $y$ for the value of the function. 
Our algorithm interpolates between sample-and-aggregate \citep{NRS07} and the more recent computationally inefficient algorithm of \citet{LRSS25}.

\textbf{Statistical View on Accuracy:}
Our work differs from much of the prior work in how we quantify the accuracy of our estimate. Namely, most prior work attempts to ensure $y \approx f(x)$. However, this goal can be too narrow. 
In many settings, the input $x$ consists of independent and identically distributed (i.i.d.)~samples from some distribution $\mathcal{D}$ and our goal is to estimate properties of the distribution $\mathcal{D}$ rather than of the sample $x$.
Thus we take a statistical view of accuracy. Namely, we assume that the input $x$ consists of i.i.d.~samples and the function $f$ returns a good estimate with high probability when it is given enough i.i.d.~samples; our algorithm then also produces a good estimate.

\begin{theorem}[Main Result] \label{thm:main}
    Let $\mathcal{Y} \subseteq \mathbb{R}$ be finite and let $\mathcal{X}$ be arbitrary; denote $\mathcal{X}^* = \bigcup_{n \in \mathbb{N}} \mathcal{X}^n$.
    Let $\varepsilon,\delta>0$ and $n, m, t \in \mathbb{N}$ satisfy \[n \ge m \ge t=\frac{1}{\varepsilon} \log(1/\delta) \exp(O(\log^* |\mathcal{Y}|)). \label{eq:thm:main:t}\] Let\footnote{We precisely define $C(n,m,t)$ later in Definition \ref{def:covering_design}; for now the given upper bound suffices. Throughout, $\log$ denotes the natural logarithm with base $e \approx 2.718$.} \[k = C(n,m,t) < \frac{{n \choose t}}{{m \choose t}} \left(1 + \log{ m \choose t} \right) +1 \le O\left( \left(\frac{ne}{m}\right)^t \cdot t \log m\right). \label{eq:thm:main:k}\]
    Then, for all $f : \mathcal{X}^* \to \mathcal{Y}$, there exists an algorithm $M^f : \mathcal{X}^n \to \mathcal{Y}$ with the following properties.
    \begin{itemize}
        \item \textbf{Privacy:} $M^f$ is $(\varepsilon,\delta)$-differentially private. 
        \item \textbf{Statistical Accuracy:} Let $\mathcal{D}$ be an arbitrary probability distribution on $\mathcal{X}$. Suppose $\pr{X \gets \mathcal{D}^{n-m}}{|f(X)-\nu|\le\alpha}\ge 1-\beta$ for some $\alpha, \beta, \nu \in \mathbb{R}$. Then $\pr{X \gets \mathcal{D}^n}{|M^f(X)-\nu|\le\alpha}\ge 1- k\beta$.
        \item \textbf{Oracle Efficiency:} On input $x \in \mathcal{X}^n$, $M^f(x)$ selects $k$ subsets of $x$, each of size $n-m$, and evaluates $f$ on those subsets; other than these $k$ evaluations, $M^f(x)$ does not depend on either $f$ or $x$.
    \end{itemize}
\end{theorem}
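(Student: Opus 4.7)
The plan is to first construct, via the standard greedy/probabilistic argument for covering designs, a collection $\{T_1,\ldots,T_k\}$ of $m$-subsets of $[n]$ such that every $t$-subset of $[n]$ lies in some $T_j$, with $k = C(n,m,t)$ achieving the stated bound. Setting $S_j := [n]\setminus T_j$ then yields $k$ subsets each of size $n-m$ with the dual property that every $t$-subset $U \subseteq [n]$ is disjoint from some $S_j$. The algorithm $M^f(x)$ will compute $y_j := f(x_{S_j})$ for $j \in [k]$ (these are the $k$ oracle calls of the required form) and feed $(y_1,\ldots,y_k) \in \mathcal{Y}^k$ into a private aggregation subroutine that returns the final output.

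For the aggregation the plan is to use a stable histogram or private interior-point mechanism over $\mathcal{Y}$: the $\exp(O(\log^*|\mathcal{Y}|))$ factor appearing in the definition of $t$ is the fingerprint of the Bun--Nissim--Stemmer-style interior-point construction, whose $(\varepsilon,\delta)$-DP sample complexity is of exactly this order, so such a subroutine should be what is needed to discharge the stability budget of size $t$.

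Statistical accuracy is then a routine union bound: when $X \gets \mathcal{D}^n$, each $X_{S_j}$ is distributed as $\mathcal{D}^{n-m}$ and therefore each $y_j$ lies in $[\nu-\alpha,\nu+\alpha]$ with probability at least $1-\beta$; unioning over $j \in [k]$, all $y_j$'s lie in this interval with probability at least $1-k\beta$. The interior-point mechanism, being accurate whenever most of its inputs concentrate in a common interval, then outputs a value in $[\nu-\alpha,\nu+\alpha]$ on this event, which gives the claimed accuracy bound.

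The hard part will be the privacy analysis. Changing one entry $x_i$ can alter $y_j$ for every $j$ with $i \in S_j$, and this quantity is typically comparable to $k$, so simply applying group privacy at the aggregation level would blow up $\varepsilon$ by a factor far too large. The covering property needs to be used more delicately: its content is that for any $t$-element ``perturbation set'' there is a completely untouched evaluation, which endows the multiset $(y_1,\ldots,y_k)$ with a structured stability. The plan is to combine this with the fact that the interior-point mechanism's output is insensitive to small-fraction adversarial modifications of its input when a stable core remains, and to construct a coupling between $M^f(x)$ and $M^f(x')$ at any neighboring pair whose multiplicative/additive distances match $(\varepsilon,\delta)$. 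Avoiding the naive group-privacy blowup by the per-element degree $d := |\{j : i \in S_j\}|$, and instead getting the dependence on $t$ to line up with the definition in \eqref{eq:thm:main:t}, is where the real work of the proof lies.
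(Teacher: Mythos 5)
Your construction of the covering design, the choice of the $k$ subsets, the oracle-efficiency accounting, and the union-bound argument for statistical accuracy all match the paper. The genuine gap is exactly where you flagged it: the privacy analysis, and the direction you propose for closing it would not work. You plan to feed the multiset $(y_1,\ldots,y_k)$ into an interior-point/stable-histogram mechanism and then argue privacy via a coupling that exploits a ``stable core'' of the $y_j$'s. But the covering design provides no such core: it only guarantees that after corrupting $t$ input points, \emph{at least one} of the $k$ values survives. A single individual $x_i$ typically affects a constant fraction (in the regime $m=t$, nearly all) of the $y_j$'s, so any aggregator whose privacy is stated with respect to record-level adjacency of the multiset $(y_1,\ldots,y_k)$ -- which is what the interior-point mechanism's guarantee gives you -- cannot be salvaged by group privacy or by a coupling argument. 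There is no small set of records of the derived dataset that one individual controls.

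The paper's resolution is to never treat $(y_1,\ldots,y_k)$ as the private dataset. Instead it defines the monotone function $g(x') := \max\{ f(x'_{[n]\setminus S_i}) : i\in[k],\ |x'_{[n]\setminus S_i}|=n-|S_i|\}$ on the \emph{original} input space and applies the shifted inverse mechanism (Theorem \ref{thm:shi-adp}) to $g$. Privacy then follows solely from monotonicity of $g$: the associated loss $\ell(x,y)=\min\{|x\setminus\widetilde{x}| : \widetilde{x}\subseteq x,\ g(\widetilde{x})\le y\}$ has sensitivity $1$ in $x$ by construction, for \emph{any} choice of subsets $S_1,\ldots,S_k$ -- the covering property plays no role in privacy at all. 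The covering design is used only for accuracy (Lemma \ref{lem:acc}): it guarantees that every $x'\subseteq x$ with $|x'|\ge n-t$ still fully contains some complement $[n]\setminus S_i$, so $\min\{g(x') : x'\subseteq x,\ |x'|\ge|x|-t\}\ge\min_i f(x_{[n]\setminus S_i})$, which combined with $g(x)\le\max_i f(x_{[n]\setminus S_i})$ sandwiches the output between the min and max of the $k$ honest evaluations. Your instinct that the $\exp(O(\log^*|\mathcal{Y}|))$ factor comes from the Bun--Nissim--Stemmer-style interior-point machinery is right, but that machinery is invoked to invert the sensitivity-$1$ loss $\ell(x,\cdot)$ over the range $\mathcal{Y}$, not to aggregate the $k$ values as a dataset. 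Without this monotonization step your proof cannot be completed along the lines you describe.
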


Before continuing we make some remarks interpreting Theorem \ref{thm:main}:
\begin{enumerate}
    \item Informally, the statistical accuracy guarantee says that, with $n$ private samples, we can get the same accuracy as we could get with $n-m$ non-private samples. (There is an additional factor $k$ blowup in the failure probability, but this is secondary.) Intuitively, the parameter $m$ is the number of samples ``wasted'' to ensure differential privacy.

    \item Making $m$ smaller translates to better statistical accuracy, but it increases $k$. Making $m$ larger makes $k$ smaller, which means the algorithm requires fewer evaluations of the function $f$ -- i.e., lower oracle complexity. This tradeoff is the key phenomenon we study. 
    In particular, the following are three points on the tradeoff curve:
    \begin{enumerate}
        \item Setting $m=\frac{t}{t+1}n$ yields $n-m=\frac{n}{t+1}$ and $k \le t+1$,\footnote{This value of $k=C(n,\frac{t}{t+1}n,t) \le t+1$ is tighter than the upper bound given in the theorem statement; see Equation \ref{eq:samp-agg-design}. For this informal discussion, we ignore integer divisibility issues.} which corresponds to sample-and-aggregate \citep{NRS07}. That is, the cost of privacy is a \emph{multiplicative} factor of $t$ in the sample complexity -- i.e., with $n$ private samples we get accuracy comparable to $n-m=\Theta(n/t)$ non-private samples. This is the most computationally efficient instantiation of our result.
        \item Setting $m=t$ yields $k={n \choose t}$, which corresponds to the results of \citet{LRSS25}. In this setting, the cost of privacy is an \emph{additive} $t$ samples, at the expense of the number of evaluations $k$ being exponential in $t$. This is the most statistically efficient instantiation of our result.
        \item The above points (a and b) are the extremes on the tradeoff curve; now we consider a setting that interpolates between these: 
        Setting $m = \frac{tn}{t+c}$ yields $k \le O(\min \{ e^{t+c} \cdot t\log m, t^c\})$. (Here $c\ge 1$ is an appropriate integer.) Relative to (a) sample-and-aggregate, this increases the number of data points $n-m = \frac{cn}{t+c}$ available for each evaluation by a factor of almost $c$ -- i.e., we go from $n-m=\Theta(n/t)$ to $n-m=\Theta(c n / t)$ -- at the expense of a modest increase in the number of evaluations $k$. This parameter setting is likely of practical interest.
    \end{enumerate}

    \item Note that the accuracy parameters $\alpha$ and $\beta$ are not inputs to the algorithm. In a sense, our algorithm ``automatically adjusts'' to the difficulty of the problem.
    
    \item The value $t$ in Equation \ref{eq:thm:main:t} depends on the privacy parameters $\varepsilon,\delta$ and on the size of the output space $\mathcal{Y}$. The dependence on the privacy parameters $t \ge \frac{\log(1/\delta)}{\varepsilon}$ is essentially the best we could hope for (see the lower bound below). And the dependence on the size of the output space is extremely mild; $\log^*$ denotes the iterated logarithm, which grows extremely slowly.

    \item Our algorithm has an oracle efficiency guarantee, but not an overall computational efficiency guarantee. That is, we bound the computational cost related to evaluating the function, but not the cost of choosing the subsets and processing the values. \label{nb:computational-efficiency} See Section \ref{sec:limitations} for further discussion of these limitations.
\end{enumerate}

We show that the number of evaluations of the oracle function $f$ (i.e., $k$ in Theorem \ref{thm:main}) is roughly optimal:

\begin{theorem}[Lower Bound] \label{thm:lower}
    Let $M^f : \mathbb{Z}^n \to \mathbb{Z}$ be a randomised algorithm that makes at most $k$ queries to an oracle $f : \mathbb{Z}^* \to \mathbb{Z}$.
    Suppose that, for every oracle $f$, the algorithm $M^f$ satisfies $(\varepsilon,\delta)$-differential privacy and the following.
    Let $\mathcal{D}$ be an arbitrary distribution on $\mathbb{Z}$ and let $\nu \in \mathbb{Z}$. If $\pr{X \gets \mathcal{D}^{n-m}}{f(X)=\nu} \ge 0.999$, then $\pr{X \gets \mathcal{D}^n}{|M^f(X)-\nu| \le 1} \ge 1/2$.
    Then we must have \[k \ge \frac{{n \choose t}}{{m \choose t}} \cdot \Omega(1) ~~~~ \text{ with } ~~ t = \Theta(1/\varepsilon) \label{eq:thm:lower:1}\] 
    and, simultaneously, \[k \ge \frac{{n \choose t}}{{m \choose t}} \cdot \Omega(\delta^{0.01}) ~~~~ \text{ with } ~~ t = \Theta(\log(1/\delta)/\varepsilon), \label{eq:thm:lower:2}\] assuming $\delta \le (\varepsilon/10)^{1.1} \le 1$.
\end{theorem}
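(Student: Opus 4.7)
The plan is to prove the lower bound via a covering-design counting argument combined with group privacy. The key structural observation behind the ratio ${n \choose t}/{m \choose t}$ is the following: when the algorithm makes $k$ oracle queries on size-$(n-m)$ subsets of $\{1,\ldots,n\}$---which we may assume without loss of generality by choosing $f$ to return an uninformative constant on inputs of any other size---each query's complement $S_i^c$ is an $m$-subset and thus contains exactly ${m \choose t}$ of the ${n \choose t}$ possible $t$-subsets. Hence if $k < {n \choose t}/{m \choose t}$, some $t$-subset $T^\star$ is not contained in any complement, meaning every query index set $S_i$ intersects $T^\star$. By averaging over the algorithm's internal randomness, and for adaptive algorithms by fixing a high-probability branch of its decision tree via Yao's principle, I may assume the queries are deterministic and fix such a $T^\star$.

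Given $T^\star$, I would construct an adversarial pair $(\mathcal{D}, f)$ that $M^f$ cannot handle. The construction uses a ``planted signal'' distribution: let $\mathcal{D}$ place a small probability on a marker symbol, tuned so that a size-$(n-m)$ iid sample is marker-free with probability $\geq 0.999$ (satisfying the accuracy hypothesis with $\nu = 0$ when $f$ outputs $0$ on marker-free inputs and some $\nu' \ne 0$ otherwise), while a size-$n$ iid sample contains $\Theta(t)$ markers with constant probability. I then compare two ``$t$-neighboring'' datasets: one in which the markers all land at positions in $T^\star$, versus one obtained by relocating them outside $T^\star$. These differ in at most $t$ positions, so by group privacy at scale $t = \Theta(1/\varepsilon)$ the $(\varepsilon,\delta)$-DP guarantee forces $O(1)$-closeness of the output distributions. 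However, by the choice of $T^\star$, every query sees at least one marker in the first case (so every oracle answer equals $\nu'$) but no marker in the second case (so every answer equals $0$); combined with the accuracy requirement on $\mathcal{D}^n$, this yields the contradiction proving Equation \ref{eq:thm:lower:1}.

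For Equation \ref{eq:thm:lower:2} with $t = \Theta(\log(1/\delta)/\varepsilon)$, the same strategy applies but uses the approximate form of group privacy: $(\varepsilon,\delta)$-DP under $t$-group changes yields $(t\varepsilon, O(t e^{t\varepsilon} \delta))$-DP, which at the chosen $t$ becomes $(\log(1/\delta), \delta^{\Theta(1)})$-DP and accounts for the $\delta^{0.01}$ factor in the bound. The main obstacle I anticipate is engineering the adversarial distribution and function so that all the probabilistic requirements fit together across different regimes of $m$: the simple marker construction above works cleanly only when $n-m$ is much less than $n/t$, so other regimes will likely require a more refined, multi-scale or matched-marker construction that maintains the $\geq 0.999$ accuracy hypothesis while simultaneously exposing the $T^\star$ positions to every query. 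A secondary challenge is the Yao-style reduction to a fixed $T^\star$ against randomised and adaptive query selection, where we must argue that the ``bad'' branch of the algorithm's decision tree has sufficient probability to drive the overall contradiction.
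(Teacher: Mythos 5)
Your high-level intuition matches the paper's: restrict the algorithm to size-$(n-m)$ queries by making $f$ fail on other sizes, observe that fewer than ${n \choose t}/{m \choose t}$ queries cannot ``cover'' all $t$-subsets, and derive a contradiction with group privacy at scale $t$. But the execution has a genuine gap, and it is exactly the one your construction cannot patch: your corrupted positions are \emph{visible} to the algorithm. If the corruption is a distinguishable marker symbol appearing in the input $x$, an adaptive algorithm simply inspects $x$, locates the $\Theta(t)$ markers, and queries a size-$(n-m)$ subset avoiding all of them (possible whenever $m \ge t$); every oracle answer is then the ``clean'' value and no contradiction arises. Relatedly, your plan to first derandomise the queries and fix a single uncovered $T^\star$ (via Yao) fails for the same reason -- the queried sets are functions of the input, so they can be correlated with wherever you plant the corruption. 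The paper's proof resolves both issues with one device you are missing: the clean data is uniform on a \emph{secret} random set $S \subseteq [\ell^2]$ of size $\ell$, the $t$ corrupted elements are uniform on $[\ell^2]\setminus S$, and $f_{S,\nu,n-m}$ returns $\nu$ only on collision-free size-$(n-m)$ inputs contained in $S$. Conditioned on the corrupted input $\widetilde{X}$, the identity of the corrupted coordinates is information-theoretically hidden, and a Bayesian calculation (the paper's Claim 5.2) shows that \emph{any} query $Y=g(\widetilde{X})$, however adaptive, lands entirely inside $S$ with probability at most ${m \choose t}/{n \choose t} + n^2/2\ell$. A union bound over the $k$ queries then replaces your fixed-$T^\star$ covering argument, and no derandomisation is needed.

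A second structural problem: you generate the corruption i.i.d.\ from $\mathcal{D}$ itself, which forces the incompatible requirements $(1-p)^{n-m}\ge 0.999$ and $np=\Theta(t)$; as you note, this only works when $n-m \ll n/t$, which excludes the most interesting regime (small $m$, large $k$). The paper instead keeps $\mathcal{D}_S$ entirely clean (so the $0.999$ hypothesis holds for every $m$, up to a vanishing collision probability), defines a separate corrupted distribution $\mathcal{C}_{S,n,t}$ with exactly $t$ replacements, and couples the two at Hamming distance $2t$ so that group privacy (with the $e^{2t\varepsilon}$ and $\frac{e^{2t\varepsilon}-1}{e^{\varepsilon}-1}\delta$ terms) transfers the accuracy guarantee from $\mathcal{D}_S^n$ to $\mathcal{C}_{S,n,t}$. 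You would also need the paper's step of choosing $\nu$ so that the oracle-free algorithm $M^0$ rarely outputs anything in $[\nu-1,\nu+1]$, to rule out an algorithm that guesses the answer without useful queries. In short, the skeleton is right, but the hidden-support construction and the Bayesian query-success bound are the essential missing ideas, not secondary ``engineering'' challenges.
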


Contrasting the lower bounds in Equations \ref{eq:thm:lower:1} and \ref{eq:thm:lower:2} with the upper bound in Equation \ref{eq:thm:main:k}, we see that the combinatorial term $k \approx \frac{{n \choose t}}{{m \choose t}}$ is present in both the upper and lower bounds. There are some additional factors, but these are relatively minor. The main difference is that the upper bound uses $t=\frac{1}{\varepsilon} \log(1/\delta) \exp(O(\log^* |\mathcal{Y}|))$, while the lower bound sets $t = \Omega(1/\varepsilon)$ or $t = \Omega(\log(1/\delta)/\varepsilon)$.
Thus there is a multiplicative gap in the parameter $t$ depending on the size of the output space. This is potentially significant, since this multiplicative factor affects the number of queries $k$ in an exponential fashion.
We remark that some dependence on the size of the range $\mathcal{Y}$ is known to be necessary for statistical estimation \citep{bun2015differentially,alon2019private}. Thus, while the $\exp(O(\log^* |\mathcal{Y}|))$ term in the upper bound could potentially be improved, it cannot be removed entirely.

The lower bound is a strong result. It shows that the number of evaluations must be impractically large in many parameter regimes.
The message is that working in the black-box setting inherently incurs a high cost either in terms of statistical efficiency or oracle efficiency.

\subsection{Our Techniques}

\textbf{Algorithm:}
Our algorithm can be viewed as an extension of the sample-and-aggregate paradigm \cite{NRS07}. Namely, we evaluate the function on subsets of the input and then we aggregate those values in a way that ensures differential privacy. The novelty is in how we choose the subsets and how we aggregate the values.

Our algorithm has two main technical ingredients: 
First we use a combinatorial object known as a covering design or a Tur\'an system to select $k$ overlapping subsets of the input on which to evaluate the function.
Second, following \citet{LRSS25}, we use a variant of the shifted inverse mechanism of \citet{FangDY22} to aggregate the values in a differentially private manner.

The property of the covering design is that if $t$ out of the $n$ input datapoints are corrupted, then at least one of the $k$ subsets on which we evaluate the function will not contain any corrupted datapoints. (Note that this property holds without knowing which datapoints are corrupted.)
Intuitively, $(\varepsilon,\delta)$-differential privacy requires robustness to $t=O(\log(1/\delta)/\varepsilon)$ corruptions. And the covering design ensures this level of robustness, but only in the weak sense that, if $t$ inputs are corrupted, then at least one out of $k$ values is uncorrupted.\footnote{In contrast, for sample-and-aggregate we have the stronger robustness guarantee that, if $t$ inputs are corrupted, then at least $k-t$ out of $k$ values are uncorrupted.}

It only remains to aggregate the values in a way that translates this weak form of robustness into differential privacy.
Computing a differentially private mean or median of the $k$ values does not suffice, since a single datapoint could affect a majority of the values. This is where the shifted inverse mechanism fits in.

To illustrate how the shifted inverse mechanism works, consider the special case with a binary output space $\mathcal{Y}=\{0,1\}$. (The general case can, with some loss, be reduced to this case.)
Now we ask ``how many input datapoints would I need to remove so that the remaining output values are all 0?'' (That is, if we remove a datapoint, then all of the output values that depend on that datapoint are removed.) 
If all of the output values are $0$, then the answer to the query is $0$ -- i.e., no datapoints need to be removed.
If all of the output values are $1$, then the answer to the query is at least $t+1$, by the properties of the covering design. 
By construction, this query has sensitivity $1$; thus the answer to the query can be approximated in a differentially private manner by adding Laplace or Gaussian noise.
As long as $t$ is large enough, we can accurately distinguish between the case where all of the output values are $0$ and the case where all of them are $1$. (When some values are $0$ and some are $1$, the outcome is indeterminate.)
Assuming each value is individually accurate with high probability, the aggregated value will also be accurate with high probability.

\textbf{Example Instantiation:}
To give further intuition for how our algorithm works, we consider instantiating it with a particular simple covering design.
Again consider a binary output space $\mathcal{Y}=\{0,1\}$ and let $t = O\big(\frac1\varepsilon \log(1/\delta) \big)$.
Partition the dataset (of size $n$) into $t+2$ parts of (almost) equal size; call these $D_1, \cdots, D_{t+2}$.
Now we evaluate the function $f : \mathcal{X}^* \to \mathcal{Y}$ on all $\frac{(t+1)(t+2)}{2}$ pairs $D_i \cup D_j$. That is, we evaluate the function on subdatasets of size $2 \frac{n}{t+2}$; so each pair gets a label of $0$ or $1$.
(Note that partitioning into $t+2$ equal-sized parts and considering the union of pairs of parts corresponds to a $(n, m,t)$-covering design (Definition \ref{def:covering_design}) with $n-m = 2 \frac{n}{t+2}$.\footnote{To be precise, these pairs form a Tur\'an system, and their complements form a covering design.})

If each computation of $f(D_i \cup D_j)$ yields the correct label -- say, $0$ -- with probability $\ge 1-\beta$, then a union bound over the pairs tells us that, with probability $\ge 1- \frac{(t+1)(t+2)}{2} \beta$, all of the labels are $0$. Similarly, if instead the correct answer is $1$, then with high probability all the labels are $1$.

Now we apply the shifted inverse mechanism, which entails computing the minimum number of parts $D_1, \cdots, D_{t+2}$ that need to be chosen to cover all pairs that have a label of $1$. If all of the pairs have a label of $0$, then this number is zero; if all of the pairs have a label of $1$, then this number is $ \ge t+1$. Since this number is low sensitivity, it can be privately approximated, which yields the final output.

Note that computing this number corresponds to computing the size of a minimum vertex cover of the following graph: Each part $D_i$ corresponds to a vertex and, for each pair $i,j$, there is an edge between $D_i$ and $D_j$ if and only $f(D_i \cup D_j) = 1$. Since vertex cover is NP-complete \cite{Karp1972}, computing this value is potentially intractable. See Section \ref{sec:limitations} for further discussion of this limitation.

Comparing this example instantiation with sample-and-aggregate, we see that we evaluate the function $f$ on datasets of size approximately $2n/t$ rather than $n/t$, which yields better statistical accuracy. In exchange, we need approximately $t^2/2$ evaluations of the function $f$, rather than $t$. This is one instantiation of the tradeoff that our algorithm offers.

\textbf{Lower Bound:} 
Our lower bound mirrors the intuition for our algorithm. 
That is, $(\varepsilon,\delta)$-differential privacy requires robustness to $t=O(\log(1/\delta)/\varepsilon)$ corruptions. To be precise, we perform a packing argument \cite{hardt2010geometry}. That is, we use group privacy to argue that changing $t$ values cannot totally change the output of the algorithm.
We restrict the algorithm to evaluating the function on subsets of the input of size $n-m$. (We can force this restriction using standard tricks, such as making the function fail when given an input that isn't of this form.) 
Now the only way for the algorithm to satisfy the given level of robustness is for it to query enough sets so that at least one of them contains none of the $t$ corrupted points.
Roughly, this implies that the subsets queried by the algorithm must form a covering design. The lower bound then follows.

\section{Related Work}\label{sec:related}

There is a long line of work on differentially privately evaluating a function on a sensitive dataset that makes weaker and weaker assumptions on the underlying function, which we briefly survey in this section.
\citet{jha2013testing} were the first to explicitly study the black-box setting, where we make no assumptions about the function beyond what we can glean from evaluating it. 

\subsection{Alternatives to Global Sensitivity}
Adding Laplace or Gaussian noise scaled to global sensitivity has been the standard approach to ensure differential privacy since its inception \cite{DMNS06,DKMMN06}. And this approach is surprisingly versatile. However, there has been a long line of work seeking methods that circumvent the limitations of global sensitivity, to which the present article adds.
We now briefly survey the most closely related approaches.

\citet{NRS07} introduced two methods that go beyond global sensitivity -- smooth sensitivity and sample-and-aggregate (which we discuss later in this section).
Whereas global sensitivity depends only on the function and not on the dataset, \textbf{smooth sensitivity} is a dataset-dependent value. We can achieve differential privacy by adding noise that scales with the smooth sensitivity and the smooth sensitivity may be lower than the global sensitivity. 
In slightly more detail: Smooth sensitivity seeks to approximate the \emph{local sensitivity} -- i.e., how much can the function change by adding or removing one person's data to or from the actual dataset at hand.
For example, the median of a set of real numbers has unbounded global sensitivity, but its local sensitivity is bounded by the gap between the median value and the two values immediately before or after the median value in sorted order; indeed the local sensitivity of the median could even be zero if the median value is repeated multiple times.
Ideally, we could add noise scaled to the local sensitivity, rather than to the global sensitivity, but the scale of the noise could itself be sensitive.
Smooth sensitivity circumvents this issue by upper bounding the local sensitivity in a way that is itself low sensitivity (in a multiplicative, rather than additive, sense).
A major disadvantage of the smooth sensitivity approach is that it is often challenging to compute the smooth sensitivity, as it is still a global property of the function. In general, to compute the smooth sensitivity, we need to know the value of the function over its entire domain.

\citet{DL09} introduced the \textbf{propose-test-release} framework. Like smooth sensitivity, this framework seeks to exploit low local sensitivity. This framework begins a priori with a proposed upper bound on the local sensitivity. Then it performs a test to check whether or not this bound is correct. If the test passes, then it releases the value with noise scaled according to the bound. As long as the test is unlikely to yield a false positive, this guarantees differential privacy. A general recipe for the test step is to measure the distance (in terms of adding or removing people) from the given dataset to the nearest dataset with local sensitivity higher than the proposed bound. This distance is inherently low-sensitivity and so can be estimated privately. If the distance is large enough, the test will pass.
\citet{DL09} made the connection between differential privacy and robust statistics, which later work expanded upon \cite{kamath2019privately,bun2019average,avella2019differentially,kamath2020private,brunel2020propose,liu2021robust,brown2021covariance,ghazi2021robust,tsfadia2022friendlycore,hopkins2022efficient,liu2022differential,ashtiani2022private,sarathy2022analyzing,georgiev2022privacy,kothari2022private,asi2023robustness,alabi2023privately,liu2023near,canonne2023full,kamath2024broader,brown2025tukey}. 

The \textbf{inverse sensitivity mechanism} \cite{asi2020near,mir2011pan,johnson2013privacy,DPorg-inverse-sensitivity,asi2023robustness,hopkins2023robustness} provides a loss function with low global sensitivity for any function of interest; the exponential mechanism \cite{mcsherry2007mechanism} can then be applied to estimate the value of interest. The idea is simple: Given a function $f$, a dataset $x$, and a value $y$, define the loss $\ell(x,y) = \min\{ |x \setminus x'| + |x' \setminus x| : f(x')=y \}$ to be the least number of elements of $x$ that need to be added or removed in order to change the function value to $y$. Clearly, $\ell(x,y)=0$ if and only if $y=f(x)$. Furthermore, if the local sensitivity of $f$ at $x$ is small, then $\ell(x,y)\le 1$ implies $y \approx f(x)$. More generally, if $f$ is appropriately well-behaved near $x$, then any approximate minimiser $y$ of the loss $\ell(x,y)$ must be a good approximation to $f(x)$.

Another approach is to replace the function of interest with a function that has low global sensitivity and which still provides a good approximation to the original function.
For example, when computing a sum, we might clip the values to ensure that they are bounded; if the clipping threshold is chosen appropriately, this ensures low global sensitivity and doesn't change the value of the function too much.
In general, \textbf{Lipschitz extensions} provide low-sensitivity approximating functions that can be used in the differentially private setting \cite{kasiviswanathan2013analyzing,blocki2013differentially,raskhodnikova2015efficient,raskhodnikova2016lipschitz,raskhodnikova2016differentially}.
In particular, a \textbf{local Lipschitz filter} \cite{jha2013testing,awasthi2015limitations,lange2025local} provides a black-box method for constructing a Lipschitz extension. 

\paragraph{Limitations:}
All of the aforementioned methods (except sample-and-aggregate) suffer from computational intractability. They are only practical in special cases where we can analytically compute the relevant bounds. The underlying reason for this is that the quantities of interest all involve some universal quantification over datasets. In particular, they are not practical for functions that are given to us as a black box, as they would require enumerating exponentially many inputs -- or even infinitely many.

An additional limitation is that all of the above methods (except sample-and-aggregate) fail when the function $f$ has high \emph{local} sensitivity at the given dataset $x$. That is, there exists some input $x'$ neighbouring the given input $x$ such that the function value on that input $f(x')$ is far from the actual function value $f(x)$. In particular, adding one corrupted data point to the given input $x$ might ``break'' the function -- i.e. change the function value arbitrarily. For example, even the mean has infinite local sensitivity, when we allow unbounded inputs, even though it may be well-behaved for realistic inputs. (Note that high local sensitivity is an issue even if we are not in the black-box setting.)

\subsection{Down-Local Algorithms}
Recent work \cite{chen2013recursive,raskhodnikova2016differentially,cummings2020individual,FangDY22,kohli2023differential,LRSS25} has sought to overcome the aforementioned limitations by devising algorithms that are \textbf{down-local} in the sense that, given a function $f$ and a dataset $x$, the algorithm only evaluates $f$ on subsets of the given dataset $x$, rather than on arbitrary points in its domain.

This has two benefits: 
First, it restricts the number of inputs on which we can evaluate a black-box function $f$.
Second, down-local algorithms only evaluate the function on ``real'' data, which avoids the problem that some functions might ``break'' when given arbitrary inputs. In other words, down-local algorithms work well with functions that have high local sensitivity, but are well-behaved on realistic inputs.

\citet{cummings2020individual} effectively construct a Lipschitz extension by evaluating the function on all subsets of the input. This gives runtime which is ``only'' exponential in the number of input data points, and does not depend on the size of the function's domain. (In special cases, like the mean and median, they give polynomial-time algorithms.)

\citet{FangDY22} presented the \textbf{shifted inverse mechanism}, which is a down-local version of the inverse sensitivity mechanism. However, their method only applies to monotone functions; this restriction was removed by \citet{LRSS25}. Our results are based on this approach.
In particular, combining Theorems \ref{thm:main-formal-adp} and \ref{thm:main-formal-pdp} and setting $m=t$ recovers Theorem 3.1 of \citet{LRSS25}; our results are thus an extension of theirs to the case where $m>t$.

\citet{kohli2023differential} present an algorithm (which they call \textbf{TAHOE}) that is, roughly, a down-local version of propose-test-release.

\textbf{Sample-and-aggregate} \cite{NRS07} is closely related to our approach.
In its simplest form,\footnote{More generally, sample-and-aggregate allows overlapping subdatasets, but then the aggregator must handle the fact that a single person's data may affect multiple function values.} sample-and-aggregate partitions the dataset into smaller subdatasets, evaluates the function of interest on each subdataset, and then aggregates the function values in a differentially private manner (e.g., using smooth sensitivity applied to the aggregation function).
A single person's data will be in only one of the subdatasets and so a single person can only affect one of the values.
The advantage of the sample and aggregate approach is that it requires no structural knowledge about the function of interest and is computationally efficient. (Intuitively, it pushes the privacy analysis onto the aggregation function, rather than the function we want to evaluate.) The downside is that we evaluate the function on the smaller subdatasets. This can lead to significant loss in accuracy relative  to (non-privately) evaluating the function on the whole dataset.
Our algorithm addresses the downside of sample and aggregate by allowing us to evaluate the function on larger subdatasets, at the expense of requiring us to evaluate on more of these subdatasets.

\subsection{Lower Bounds}
\citet{LRSS25} prove lower bounds on both locality and query complexity, which are similar in spirit to our Theorem \ref{thm:lower}.
(Locality refers to $|x \setminus x'|$ where $x$ is the input and $x' \subseteq x$ is the subset on which the function is evaluated.) 
The main difference between our lower bound and their query complexity lower bound is the notion of accuracy. Theorem \ref{thm:lower} assumes a statistical accuracy guarantee -- that is, if $\pr{X \gets \mathcal{D}^{n-m}}{f(X)=\nu} \ge 0.999$, then $\pr{X \gets \mathcal{D}^n}{|M^f(X)-\nu|\le 1} \ge \frac12$ for arbitrary $\mathcal{D}$. 
In contrast, they \cite[Theorem 6.1]{LRSS25} assume an accuracy guarantee of the form $\pr{}{|M(x)-f(x)| \le \alpha} \ge 1-\beta$ for arbitrary $x$ under the promise that $f$ is Lipschitz.
Thus these results are formally incomparable.

\citet{awasthi2015limitations} prove lower bounds on query complexity for local Lipschitz filters. This lower bound was extended by \citet{lange2025local} to allow a nonzero failure probability. Local Lipschitz filters can be used to construct differentially private algorithms, but there is no converse -- i.e., these lower bounds do not immediately translate into a lower bound for differentially private algorithms.

\subsection{Subsequent Work}
\citet{brown2026privately} present a \emph{polynomial-time} algorithm for privately estimating \emph{monotone} statistics. This quantitatively improves on our work, but is qualitatively different in that it assumes monotonicity. They also extend the lower bound to algorithms that are differentially private only under the assumption that the function $f$ is monotone.

\subsection{Miscellaneous}
Our algorithms rely on combinatorial objects known as covering designs. Combinatorial designs appear in many places. Notably, \citet{park2024exactly} used balanced incomplete block designs to develop minimax-optimal locally differentially private algorithms. Furthermore, \citet{gentle2025necessity} showed that these combinatorial designs are in fact necessary to achieve optimality.

\section{Preliminaries}

\subsection{Notation}

For a natural number $n \in \mathbb{N}$, we denote $[n]:=\{1,2,\cdots,n\}$.
We use $\log$ to denote the natural logarithm.
%
Throughout, we will let $\mathcal{X}$ denote the set of possible input data points. 
Then $\mathcal{X}^n$ denotes tuples of length $n$ and $\mathcal{X}^* := \bigcup_{n \in \mathbb{N}} \mathcal{X}^n$ denotes tuples of arbitrary length.

We treat tuples as sets (and we use set notation), but we also maintain consistent indexing of the elements. 
This should be intuitive, but to be completely formal, below we define the set notation that we use on tuples.
The reader should skip this subsection and only refer back if there is any confusion.

We assume that there is a special ``null'' element $\bot \in \mathcal{X}$. Informally, $\bot$ represents a missing element when the tuple is viewed as a set. (And we assume that $\bot$ is not in the support of the data distribution $\mathcal{D}$.)
For tuples $x,x' \in \mathcal{X}^n$, we define the following set notations:
\begin{enumerate}
    \item The size of $x$ is the number of non-null elements: \[|x| := |\{i \in [n] : x_i \ne \bot\}| .\label{eq:size}\]
    \item A subset corresponds to replacing elements with nulls:  \[x' \subseteq x ~~ \iff ~~ \forall i \in [n] ~ (x'_i = x_i \vee x'_i = \bot) .\]
    \item Intersections and differences are given by \[\forall i \in [n] ~~~~~ ( x \cap x' )_i = \left\{\begin{array}{cl} x_i & \text{ if } x_i = x'_i \\ \bot & \text{ if } x_i \ne x'_i \end{array}\right\}\] and \[\forall i \in [n] ~~~~~ ( x \setminus x' )_i = \left\{\begin{array}{cl} \bot & \text{ if } x_i = x'_i \\ x_i & \text{ if } x_i \ne x'_i \end{array}\right\}.\] We have $x \cap x' \subseteq x$, $x \cap x' \subseteq x'$, and $x \setminus x' \subseteq x$. Also, $|x \setminus x'| + |x' \setminus x| = |\{i \in [n] : x_i \ne x'_i = \bot \vee \bot = x_i \ne x'_i \}| + 2 |\{ i \in [n] : \bot \ne x_i \ne x'_i \ne \bot \}|$.
    \item Given a set of indices $S \subseteq [n]$, define $x_S \in \mathcal{X}^n$ by \[\forall i \in [n] ~~~~~ (x_S)_i = \left\{\begin{array}{cl} x_i & \text{ if } i \in S \\ \bot & \text{ if } i \notin S \end{array}\right\}.\] Note that $x_S \subseteq x$ for all $S$ and, assuming $|x|=n$, we have $|x_S|=|S|$.
\end{enumerate}
We work with functions $f : \mathcal{X}^* \to \mathcal{Y}$ and we assume that null values are equivalent to removing elements from the tuple entirely. That is, for all $n \in \mathbb{N}$, $x \in \mathcal{X}^n$, and $i \in [n+1]$, if $x' = (x_1, x_2, \cdots, x_{i-1}, \bot, x_i, x_{i+1}, \cdots, x_n) \in \mathcal{X}^{n+1}$, then $f(x) = f(x')$.

\subsection{Differential Privacy}

We say that $x,x'\in\mathcal{X}^n$ are \emph{neighbouring} if $|x \setminus x'|+|x' \setminus x| = 1$. Equivalently, $x,x'\in\mathcal{X}^n$ with $x \ne x'$ are \emph{neighbouring} if there exists $i \in [n]$ such that $x_i=\bot$ or $x'_i=\bot$ and, for all $j \in [n] \setminus \{i\}$, we have $x_j=x'_j$. Informally, neighbouring inputs differ by the addition or removal of one element, which corresponds to one person's data. 

\begin{definition}[Differential Privacy \citep{DMNS06,DKMMN06}]\label{def:dp}
    A randomised algorithm $M : \mathcal{X}^n \to \mathcal{Y}$ satisfies $(\varepsilon,\delta)$-differential privacy if, for all neighbouring $x,x' \in \mathcal{X}^n$ and all measurable $V \subseteq \mathcal{Y}$,
    \[ \pr{}{M(x) \in V} \le e^\varepsilon \pr{}{M(x') \in V} + \delta .\]
\end{definition}

``Pure differential privacy'' (or ``pointwise differential privacy'') refers to the setting where $\delta=0$. In contrast ``approximate differential privacy'' refers to the setting where $\delta>0$.

Differential privacy satisfies many useful properties. One is postprocessing -- applying an arbitrary function to the output of a differentially private algorithm still results in a differentially private output, with no loss in parameters.
The other property we use is group privacy:

\begin{lemma}[Group privacy]\label{lem:group-privacy}
    Suppose $M : \mathcal{X}^n \to \mathcal{Y}$ is $(\varepsilon,\delta)$-differentially private.
    Suppose $x,x'\in\mathcal{X}^n$ are distance $t = |x \setminus x'|+|x' \setminus x|$ apart.
    Then, for all measurable $V \subseteq \mathcal{Y}$,
    \[ \pr{}{M(x) \in V} \le e^{t\varepsilon} \pr{}{M(x') \in V} + \frac{e^{t\varepsilon}-1}{e^\varepsilon-1}\delta .\]
\end{lemma}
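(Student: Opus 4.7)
The plan is a standard induction on the distance $t = |x\setminus x'| + |x'\setminus x|$.

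\textbf{Base case ($t=0$).} If $x=x'$, both sides are equal: $\Pr[M(x)\in V] = \Pr[M(x')\in V]$, and $\frac{e^{0}-1}{e^\varepsilon-1}\delta = 0$, so the inequality holds trivially. The case $t=1$ is exactly the hypothesis (neighbouring inputs), so the definition of $(\varepsilon,\delta)$-differential privacy gives it directly.

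\textbf{Inductive step.} Assume the claim for distance $t \ge 1$ and consider $x,x'$ at distance $t+1$. The key structural step is to produce an intermediate tuple $x'' \in \mathcal{X}^n$ with $|x\setminus x''| + |x''\setminus x| = t$ and with $x''$ neighbouring $x'$. Using the paper's tuple/null conventions, this amounts to identifying a coordinate $i$ in which $x$ and $x'$ differ (either one is $\bot$, or both are non-null and distinct) and modifying $x$ at coordinate $i$ by one ``step'' towards $x'$: if $x_i = \bot \ne x'_i$ set $x''_i := x'_i$; if $x_i \ne \bot = x'_i$ set $x''_i := \bot$; if both are non-null and unequal, set $x''_i := \bot$ (this drops the contribution at that coordinate from $2$ to $1$). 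In every case the modification reduces the distance to $x'$ by exactly $1$ and is a single-element change from $x$, as required.

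\textbf{Chaining.} Apply the inductive hypothesis to $x,x''$ (distance $t$) and the definition of differential privacy to the neighbouring pair $x'',x'$:
\[
\Pr[M(x)\in V] \le e^{t\varepsilon}\Pr[M(x'')\in V] + \tfrac{e^{t\varepsilon}-1}{e^\varepsilon-1}\delta, \qquad \Pr[M(x'')\in V] \le e^{\varepsilon}\Pr[M(x')\in V] + \delta.
\]
Substituting the second into the first and collecting the $\delta$ terms gives a coefficient of
\[
e^{t\varepsilon} + \frac{e^{t\varepsilon}-1}{e^\varepsilon-1} = \frac{e^{t\varepsilon}(e^\varepsilon-1) + e^{t\varepsilon}-1}{e^\varepsilon-1} = \frac{e^{(t+1)\varepsilon}-1}{e^\varepsilon-1},
\]
which is exactly the constant required at distance $t+1$, completing the induction.

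\textbf{Expected main obstacle.} There is no deep obstacle; the only thing that needs care is the construction of the intermediate $x''$ under the paper's asymmetric distance formula (where a substitution of two non-null values counts as distance $2$). I would verify once that the three cases above always reduce the distance by exactly $1$ and produce a genuine neighbour of $x$, so that the induction decrements $t$ cleanly. The algebraic identity for the $\delta$ coefficient is a one-line calculation and is the only quantitative content.
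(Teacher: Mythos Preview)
The paper states this lemma as a standard preliminary and does not supply its own proof, so there is nothing to compare against. Your induction on $t$ is the standard argument and the algebra for the $\delta$ coefficient is correct.

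One small slip in the exposition: you declare that $x''$ should satisfy $|x\setminus x''|+|x''\setminus x|=t$ and neighbour $x'$, and your chaining step uses exactly this. But your construction (``modifying $x$ at coordinate $i$ by one step towards $x'$'') actually produces a tuple neighbouring $x$ and at distance $t$ from $x'$, i.e.\ the roles are swapped. The fix is trivial---either modify $x'$ at one coordinate towards $x$ (your three cases work verbatim with $x$ and $x'$ interchanged), or keep your construction and swap which pair gets the inductive hypothesis and which gets the one-step DP bound. Either way the same geometric-series identity drops out, so the argument is sound once this is straightened.
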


Note that we do not allow \emph{replacement} of one person's data between neighbours -- this would instead be group privacy at distance $t=2$.

Since we define differential privacy with respect to addition or removal of one person's data, rather than replacement, the true size of the dataset $|x|$ is itself sensitive.
We avoid this annoying technicality by having a separate parameter $n$ (which should be thought of as the public size of the dataset, rather than its true size) that is provided to the algorithm separately from the sensitive dataset.\footnote{Alternatively, we can expend some of the privacy budget to compute a public size $n$ from the sensitive dataset.} If the true size of the sensitive dataset is larger than the public size $n$, we can discard elements. If the true size of the sensitive dataset is smaller than the public size $n$, we pad with nulls ($\bot$).

\subsection{Shifted Inverse Mechanism}\label{sec:shi}

The basis of our algorithm is the shifted inverse mechanism of \citet{FangDY22}. For completeness, we review this algorithm in Appendix \ref{app:shi}.

\begin{theorem}[Shifted Inverse Mechanism -- Pure DP]\label{thm:shi-pdp}
    Let $g : \mathcal{X}^* \to \mathcal{Y}$ be monotone -- i.e., $x' \subseteq x ~\implies~ g(x') \le g(x)$ -- where $\mathcal{Y} \subseteq \mathbb{R}$ is finite.
    Let $\varepsilon,\beta>0$.
    Then there exists a $(\varepsilon,0)$-differentially private $M : \mathcal{X}^* \to \mathcal{Y}$ such that, for all $x \in \mathcal{X}^*$, we have
    \[
        \pr{M}{g(x) \ge M(x) \ge \min \left\{ g(x') : x' \subseteq x, |x'| \ge |x| - t \right\} } \ge 1-\beta,
    \]
    where $t = 2\left\lceil \frac{2}{\varepsilon} \log\left(\frac{|\mathcal{Y}|}{\beta}\right) \right\rceil$. Furthermore, $M(x)$ only depends on the values $g(x')$ for $x' \subseteq x$.
\end{theorem}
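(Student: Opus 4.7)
The plan is to apply a shifted-inverse-type mechanism built from a down-local loss function. For each $y \in \mathcal{Y}$ I will define the loss $\ell(x,y) := \min\{|x|-|x'| : x' \subseteq x,\ g(x') \le y\}$, with the convention $\ell(x,y) = +\infty$ if no such $x'$ exists. The first step is to prove $\ell(\cdot,y)$ has sensitivity one. If $x^+ = x \cup \{p\}$ and $x'' \subseteq x^+$ witnesses $\ell(x^+,y)$, then $x'' \cap x$ is a subset of $x$ with $g(x'' \cap x) \le g(x'') \le y$ by monotonicity and $|x| - |x'' \cap x| \le |x^+| - |x''|$, which gives $\ell(x,y) \le \ell(x^+,y)$; conversely any witness $x' \subseteq x$ for $\ell(x,y)$ also lies in $x^+$ at one extra deletion, giving $\ell(x^+,y) \le \ell(x,y)+1$.

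Monotonicity of $g$ also yields two clean threshold characterizations: $\ell(x,y) = 0$ iff $y \ge g(x)$ (take $x' = x$), and $\ell(x,y) \le t$ iff $y \ge y^* := \min\{g(x') : x' \subseteq x,\ |x|-|x'| \le t\}$. Consequently, $\ell(x,y) \ge t+1$ for every $y < y^*$ and $\ell(x,y) \in [0,t]$ for every $y \in [y^*, g(x)]$, giving a loss gap of at least $t+1$ between the target range and the region I must avoid.

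The mechanism then enumerates $\mathcal{Y}$ in increasing order as $y_1 < y_2 < \cdots$ and runs the AboveThreshold (Sparse Vector) algorithm on the query stream $-\ell(x,y_i)$ with a noisy threshold near $-t/2$, releasing the first $y_i$ whose noisy query exceeds the noisy threshold. Privacy is $(\varepsilon,0)$-differential privacy by the standard SVT analysis, since each query has sensitivity one and we make a single release. For utility, with $t = 2\lceil (2/\varepsilon) \log(|\mathcal{Y}|/\beta) \rceil$, combining Laplace tail bounds with a union bound over the at most $|\mathcal{Y}|$ queries shows that, with probability at least $1-\beta$, no $y_i < y^*$ triggers a release (its loss $\ge t+1$ overshoots the threshold by more than the noise) while $y_i = g(x)$ does trigger a release (its loss $0$ comfortably clears the threshold). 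Since $\ell(x,\cdot)$ is non-increasing and I enumerate in increasing order, the first released $y_i$ necessarily lies in $[y^*, g(x)]$. By construction, the mechanism only evaluates $g$ on subsets of $x$.

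The main obstacle is the sensitivity analysis of $\ell$, which crucially leverages monotonicity of $g$ to show that perturbing $x$ by a single element perturbs the optimal witness subset by at most one element. Calibrating the SVT noise constants to match the precise $t$ stated in the theorem is routine Laplace bookkeeping; if the standard SVT constants fall short by a small factor, one can sharpen either via a tight SVT variant or, equivalently, via the exponential mechanism with a score function peaking inside $[y^*, g(x)]$, which is the Fang--Dong--Yi formulation reviewed in Appendix \ref{app:shi}.
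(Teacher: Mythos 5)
Your proposal is correct in its overall architecture and matches the first half of the paper's proof exactly: you define the same down-local loss $\ell(x,y)=\min\{|x\setminus x'| : x'\subseteq x,\ g(x')\le y\}$, prove it has sensitivity $1$ via the same intersect-the-witness argument that leverages monotonicity (this is Proposition \ref{prop:shi-loss}), and reduce the task to privately locating an element of $\mathcal{Y}$ in the window $[y^*, g(x)]$, across whose boundary $\ell$ jumps from $\ge t+1$ down to $\le t$ and reaches $0$ at $g(x)$. Where you diverge is the private selection step. The paper applies the exponential mechanism to the symmetrised score $\widehat{\ell}_\tau(x,y)=\max\{\ell(x,y)-\tau,\ \tau-\overline{\ell}(x,y)\}$, where $\overline{\ell}$ uses the strict inequality $g(\widetilde{x})<y$; this detour is needed because $\ell$ is a step function in $y$, so there may be no $y$ with $0<\ell(x,y)<2\tau$, and the $\overline{\ell}$ term is what certifies $M(x)\le g(x)$. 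You instead sweep $\mathcal{Y}$ in increasing order with AboveThreshold and release the first near-zero loss. Your route neatly sidesteps the discontinuity issue — since $y=g(x)$ always triggers in the good event, the first trigger is forced into $[y^*,g(x)]$ by monotonicity of $\ell(x,\cdot)$ alone — and is arguably more implementation-friendly. Its cost is the constant: off-the-shelf SVT (threshold noise $\mathsf{Lap}(2/\varepsilon)$ plus per-query noise $\mathsf{Lap}(4/\varepsilon)$, union-bounded over at most $|\mathcal{Y}|$ queries) yields $t=\Theta(\frac{1}{\varepsilon}\log(|\mathcal{Y}|/\beta))$ but with a constant roughly three times larger than the stated $2\lceil\frac{2}{\varepsilon}\log(|\mathcal{Y}|/\beta)\rceil$, which the exponential-mechanism route achieves directly from its standard utility bound. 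You flag this and name the exponential mechanism on a score peaking in $[y^*,g(x)]$ as the fix — that fix is precisely the paper's proof — so there is no gap, only a slightly looser constant if one insists on the SVT instantiation.
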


We use a variant of the algorithm satisfying approximate differential privacy \citep{LRSS25,DPorg-down-sensitivity}:

\begin{theorem}[Shifted Inverse Mechanism -- Approx DP]\label{thm:shi-adp}
    Let $g : \mathcal{X}^* \to \mathcal{Y}$ be monotone -- i.e., $x' \subseteq x ~\implies~ g(x') \le g(x)$ -- where $\mathcal{Y} \subseteq \mathbb{R}$ is finite.
    Let $\varepsilon,\delta>0$.
    Then there exists a $(\varepsilon,\delta)$-differentially private $M : \mathcal{X}^* \to \mathcal{Y}$ such that, for all $x \in \mathcal{X}^*$, we have
    \[
        \pr{M}{g(x) \ge M(x) \ge \min \left\{ g(x') : x' \subseteq x, |x'| \ge |x| - t \right\} } = 1,
    \]
    where $t=\frac{1}{\varepsilon} \log(1/\delta) \exp(O(\log^* |\mathcal{Y}|))$ and $\log^*$ denotes the iterated logarithm.\footnote{The iterated logarithm is an extremely slow-growing function. It is the inverse of the exponential tower function, which satisfies the recurrence $\mathsf{tower}(n+1) = 2^{\mathsf{tower}(n)}$.} Furthermore, $M(x)$ only depends on the values $g(x')$ for $x' \subseteq x$.
\end{theorem}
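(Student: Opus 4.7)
The plan is to view the statement as private selection from a contiguous band of valid outputs in the totally ordered set $\mathcal{Y}$, and to invoke an approximate-DP interior-point / threshold-selection primitive whose sample complexity scales as $\frac{1}{\varepsilon}\log(1/\delta)\cdot\exp(O(\log^*|\mathcal{Y}|))$, in the tradition of \cite{bun2015differentially} and its successors.

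First I would define the loss function $\ell(x,y) := \min\{|x|-|x'| : x'\subseteq x,\ g(x')<y\}$, with $\ell(x,y)=\infty$ when no such $x'$ exists. Two facts are immediate: $\ell$ has sensitivity $1$ in $x$ (adding or removing a single element changes any witness subset's size by at most one), and $\ell(x,\cdot)$ is non-increasing in $y$. Combining these with monotonicity of $g$ gives the clean reformulation $\ell(x,y)\ge 1 \iff y\le g(x)$ and $\ell(x,y)\le t \iff y\ge L(x)$, where $L(x) := \min\{g(x'):x'\subseteq x,\ |x'|\ge |x|-t\}$. So the set of valid outputs $S(x) := \{y\in\mathcal{Y} : 1\le \ell(x,y)\le t\}$ is exactly $[L(x),g(x)]\cap\mathcal{Y}$ and is contiguous in the order on $\mathcal{Y}$.

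Next I would introduce the thresholds $y_k(x) := \max\{y\in\mathcal{Y} : \ell(x,y)\ge k\}$ for $k=1,\dots,t$. These satisfy $y_1(x)\ge y_2(x)\ge\cdots\ge y_t(x)$, every $y_k(x)$ lies in $S(x)$, and the sensitivity-$1$ property of $\ell$ implies that for neighboring $x,x'$ each $y_k(x)$ shifts by at most one level relative to the $y_\bullet(x')$ sequence (roughly $y_k(x)\in\{y_{k-1}(x'),y_k(x'),y_{k+1}(x')\}$). Thus any mechanism that privately selects a level $k\in\{1,\dots,t\}$ and returns $y_k(x)$ is \emph{deterministically} valid, yielding the probability-$1$ containment, and its privacy analysis reduces to private selection under a neighbor-induced one-level shift in a totally ordered index set -- i.e., precisely the approximate-DP interior-point problem. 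Invoking the $(\varepsilon,\delta)$-DP version of that primitive, whose band-width requirement is $\frac{1}{\varepsilon}\log(1/\delta)\exp(O(\log^*|\mathcal{Y}|))$, and setting $t$ to at least this quantity, yields $M$ with all three desired properties; the ``only depends on $g(x')$ for $x'\subseteq x$'' clause is inherited from the fact that $\ell$ itself has that property.

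The main obstacle is obtaining the \emph{deterministic} containment guarantee (probability exactly $1$) on top of the approximate-DP interior-point primitive, which in off-the-shelf form only gives a high-probability guarantee. I would handle this by either using the recursive ``shifted'' construction of the interior-point algorithm directly on the level indices $\{1,\dots,t\}$ -- so that the output is, by construction, some $y_k(x)$ with $k$ in range -- or by composing a high-probability mechanism with a final clip-to-$[L(x),g(x)]$ step, which is information-theoretically free since the algorithm already knows $L(x)$ and $g(x)$ from the oracle queries it has made. Both routes absorb a constant factor into the $\exp(O(\log^*|\mathcal{Y}|))$ overhead, which is exactly what the theorem statement permits.
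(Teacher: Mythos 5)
Your proposal follows essentially the same route as the paper's: define the sensitivity-$1$, monotone-in-$y$ loss $\ell$ (Proposition \ref{prop:shi-loss}) and reduce to the approximate-DP generalized interior point problem of \cite{bun2018composable}, whose $t=\frac{1}{\varepsilon}\log(1/\delta)\exp(O(\log^*|\mathcal{Y}|))$ guarantee is exactly what Appendix \ref{app:shi} invokes. The one caveat is that your fallback of clipping the output to $[L(x),g(x)]$ is not post-processing (the clipping range depends on the private input $x$), so the probability-$1$ guarantee should instead come from the interior-point algorithm returning one of the levels $y_k(x)$ by construction, or from the standard fact that altering a DP mechanism on an event of probability $\beta$ costs only an additive $O(\beta)$ in $\delta$.
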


We also consider a variant that satisfies Concentrated Differential Privacy \cite{dwork2016concentrated,bun2016concentrated} or Gaussian Differential Privacy \cite{dong2022gaussian}:

\begin{theorem}[Shifted Inverse Mechanism -- zCDP/GDP]\label{thm:shi-cdp}
    Let $g : \mathcal{X}^* \to \mathcal{Y}$ be monotone -- i.e., $x' \subseteq x ~\implies~ g(x') \le g(x)$ -- where $\mathcal{Y} \subseteq \mathbb{R}$ is finite.
    Let $\rho,\beta>0$.
    Then there exists $M : \mathcal{X}^* \to \mathcal{Y}$ satisfying $\rho$-zCDP and $\sqrt{2\rho}$-GDP such that, for all $x \in \mathcal{X}^*$, we have
    \[
        \pr{M}{g(x) \ge M(x) \ge \min \left\{ g(x') : x' \subseteq x, |x'| \ge |x| - t \right\} } \ge 1-\beta,
    \]
    where $t=O(\sqrt{\log(|\mathcal{Y}|/\beta)/\rho})$. Furthermore, $M(x)$ only depends on the values $g(x')$ for $x' \subseteq x$.
\end{theorem}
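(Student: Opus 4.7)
The plan is to prove Theorem \ref{thm:shi-cdp} by adapting the shifted inverse mechanism of Theorem \ref{thm:shi-pdp}, replacing its Laplace/exponential selection step with a Gaussian-noise analog. The same scoring function works: for each $y\in\mathcal{Y}$, set
\[L(x,y):=\min\{k\ge 0:\exists\, x'\subseteq x,\;|x'|\ge|x|-k,\;g(x')\le y\},\]
with $L(x,y):=|x|+1$ if no such $x'$ exists. Two properties inherited from the pure-DP analysis are: (i) the characterisation $L(x,y)\le k\iff y\ge m_k(x)$, where $m_k(x):=\min\{g(x'):x'\subseteq x,\;|x'|\ge|x|-k\}$; and (ii) monotone sensitivity, i.e.\ for any pair of neighbours $x,x'$ the integer values $L(x,\cdot)$ and $L(x',\cdot)$ differ by at most $1$ pointwise and with a consistent sign across $y$.

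I would then locate the ``transition point'' of $L(x,\cdot)$ via a Gaussian-noise version of the sparse vector technique. Concretely: sample $Z_\tau\sim\mathcal{N}(0,\sigma^2)$; sweep $y\in\mathcal{Y}$ in increasing order; draw fresh $Z_y\sim\mathcal{N}(0,\sigma^2)$ at each step; halt at the first $y$ with $L(x,y)+Z_y\le t/2+Z_\tau$ and output that $y$ (fallback to $\max\mathcal{Y}$ if none qualifies). Since $L(x,\cdot)$ is non-increasing in $y$, such a halt corresponds to approximately the smallest $y$ whose loss sits below the threshold. Taking $\sigma=\Theta(1/\sqrt{\rho})$, the Gaussian-SVT privacy argument shows that only the single ``above-threshold'' event conveys data-dependent information, so the mechanism reduces to a single Gaussian query on a sensitivity-$1$ quantity -- giving $\rho$-zCDP (and $\sqrt{2\rho}$-GDP) independently of $|\mathcal{Y}|$.

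Accuracy then follows from Gaussian concentration: by a union bound, $\max_y|Z_y|\vee|Z_\tau|\le O(\sigma\sqrt{\log(|\mathcal{Y}|/\beta)})=O(\sqrt{\log(|\mathcal{Y}|/\beta)/\rho})$ with probability at least $1-\beta$, so the noisy comparisons identify the transition across level $t/2$ within additive error $t/2$. The output $y^*$ therefore satisfies $L(x,y^*)\in[0,t]$, and property (i) then yields $y^*\in[m_t(x),m_0(x)]=[m_t(x),g(x)]$ with $t=O(\sqrt{\log(|\mathcal{Y}|/\beta)/\rho})$, as required.

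The main obstacle is formalising the Gaussian-SVT privacy claim, since that mechanism is not standard (unlike its Laplace counterpart). I would carry out the Renyi-divergence analysis directly: conditioning on $Z_\tau$, show that the law of the stopping index differs between neighbours only through the shift of a single sensitivity-$1$ Gaussian mechanism, then integrate over $Z_\tau$ via the data-processing inequality for Renyi divergence. An alternative avenue is a single-shot algorithm outputting $m_K(x)$ for $K=\lfloor\tau+Z\rfloor$ with one $Z\sim\mathcal{N}(0,\sigma^2)$; but care is then needed at the boundary $K=0$, where $m_0(x')=g(x')>g(x)=m_0(x)$ under the addition of a data point introduces a ``new'' output value outside the support of the $x$-mechanism and causes Renyi divergence to blow up unless the mechanism is truncated to $K\ge 1$ or otherwise modified to respect the shift structure of the profile $k\mapsto m_k$.
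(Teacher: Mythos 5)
Your reduction to the sensitivity-$1$, monotone loss $L(x,y)$ and the translation from ``$L(x,y^*)\in[0,t]$'' back to the stated accuracy guarantee are correct and match the paper (Proposition \ref{prop:shi-loss}). The gap is in the selection step. You locate the transition point of $L(x,\cdot)$ by a linear sweep over $\mathcal{Y}$ with a Gaussian sparse-vector mechanism and assert $\rho$-zCDP with $\sigma=\Theta(1/\sqrt{\rho})$ independent of $|\mathcal{Y}|$ because ``only the single above-threshold event conveys data-dependent information.'' That is not true, and the repair you sketch does not work: once you condition on $Z_\tau=z$, the law of the stopping index is $\Pr[T=i]=p_i\prod_{j<i}(1-p_j)$ with $p_j=\Pr[L(x,y_j)+Z_j\le t/2+z]$, and \emph{every} $p_j$ is data-dependent, so the conditional R\'enyi divergence between neighbours accumulates over up to $|\mathcal{Y}|$ factors; quasi-convexity over the (data-independent) mixing variable $z$ then buys you nothing. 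The Laplace SVT proof avoids this by comparing the $x$-run at threshold noise $z+1$ against the $x'$-run at $z$, so that all the ``continue'' factors are dominated for free and only the threshold noise and the single halting query are paid for; that shift has a bounded pointwise likelihood ratio for Laplace threshold noise but an unbounded one for Gaussian noise, which is exactly why no clean, query-count-independent zCDP bound for Gaussian AboveThreshold is available (published attempts either incur a dependence on the number of queries or have required corrections). As written your mechanism's privacy is unproven, and falling back on ordinary composition over the linear sweep forces $\sigma=\Theta(\sqrt{|\mathcal{Y}|/\rho})$, which destroys the claimed bound on $t$.

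The paper sidesteps this entirely: it issues only $q=O(\log(|\mathcal{Y}|/\beta))$ adaptive Gaussian queries to $\ell(x,\cdot)$ arranged as a \emph{binary} search, pays for them with standard zCDP composition ($\sigma=\sqrt{q/2\rho}$), and uses a noise-tolerant binary search over biased coins (Theorem \ref{thm:binarycoins}) so that the noisy comparisons still locate the transition without an extra union-bound factor, yielding $t=O(\sqrt{\log(|\mathcal{Y}|/\beta)/\rho})$ with privacy that is immediate. To salvage your route you would need either to switch to Laplace noise (losing the zCDP/GDP form of the statement) or to supply a genuinely new R\'enyi analysis of Gaussian SVT. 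Your ``alternative avenue'' of outputting $m_K(x)$ for $K=\lfloor\tau+Z\rfloor$ is closer in spirit to what can be made to work via the shift structure $m_{k+1}(x)\le m_k(x')\le m_k(x)$, but, as you note yourself, it also requires nontrivial care and is not the paper's argument.
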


To the best of our knowledge, Theorem \ref{thm:shi-cdp} is novel (although it follows from known techniques); thus we discuss it in more detail in Appendix \ref{app:shi}.

\subsection{Covering Designs}\label{sec:designs}

Our algorithm also depends on a combinatorial object which is known as a covering design. 

\begin{definition}[Covering Design]\label{def:covering_design}
    Given $n,m,t\in\mathbb{N}$, $t\le m\le n$, a \emph{$(n,m,t)$-covering design} of size $k$ is a collection of sets $S_1, S_2, \cdots, S_k \subseteq [n]$ each of size $|S_i|=m$ with the property that, for every $T \subseteq [n]$ of size $|T| \le t$, there exists $i \in [k]$ such that $T \subseteq S_i$.
    We let $C(n,m,t)$ denote the smallest $k$ for which a $(n,m,t)$-covering design of size $k$ exists.
\end{definition}

Covering designs are equivalent to what is known as Tur\'an systems \cite{Sidorenko95}.
To be precise, if $S_1, S_2, \cdots, S_k$ is a $(n,m,t)$-covering design, then $[n] \setminus S_1, [n] \setminus S_2, \cdots, [n] \setminus S_k$ is a $(n, n-t, n-m)$-Tur\'an system and vice versa.
Such a Tur\'an system has the property that for every $T \subseteq [n]$ of size $|T| = n-t$, there exists $i \in [k]$ such that $T \supseteq [n]\setminus S_i$.

In general, we do not have optimal constructions or even existential results for covering designs.
However, the following result gives reasonable bounds. 

\begin{proposition}\label{prop:covering-size}
    For all $n,m,t \in \mathbb{N}$ with $n \ge m \ge t$ we have
    \[\frac{{n \choose t}}{{m \choose t}} \le C(n,m,t) < \frac{{n \choose t}}{{m \choose t}} \left(1 + \log{ m \choose t} \right) + 1.\]
\end{proposition}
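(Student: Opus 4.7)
The lower bound comes from double-counting pairs $(T,i)$ with $T \in \binom{[n]}{t}$, $i \in [k]$, and $T \subseteq S_i$. The covering property forces each of the $\binom{n}{t}$ subsets $T$ to lie in at least one block, so the count is at least $\binom{n}{t}$. On the other hand, each block $S_i$ has size $m$ and therefore contains exactly $\binom{m}{t}$ subsets of size $t$, so the count is at most $k \cdot \binom{m}{t}$. Comparing yields $C(n,m,t) \ge \binom{n}{t}/\binom{m}{t}$.

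For the upper bound I would use the classical two-phase probabilistic construction (Erd\H{o}s--Spencer / R\"odl). Set $N = \binom{n}{t}$ and $p = \binom{m}{t}/\binom{n}{t}$; this $p$ is exactly the probability that a uniformly random $m$-subset $S \subseteq [n]$ contains a fixed $T \in \binom{[n]}{t}$, via the standard identity $\binom{n-t}{m-t}/\binom{n}{m} = \binom{m}{t}/\binom{n}{t}$. In the first phase, draw $k_1$ independent uniform random $m$-subsets; by linearity of expectation, the expected number of $t$-subsets left uncovered is at most $N(1-p)^{k_1} \le N e^{-k_1 p}$. In the second phase, for every uncovered $t$-subset $T$ throw in one extra $m$-block containing $T$; the resulting collection is a valid $(n,m,t)$-covering design whose size is, in expectation, at most $k_1 + N e^{-k_1 p}$. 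Choosing $k_1 = \lceil (1/p)\log(Np) \rceil$ and using $Np = \binom{m}{t}$, one has $k_1 \le (1/p)\log\binom{m}{t} + 1$ and $N e^{-k_1 p} \le N e^{-\log(Np)} = 1/p$, so the total is at most $(1/p)\bigl(1 + \log\binom{m}{t}\bigr) + 1 = \frac{\binom{n}{t}}{\binom{m}{t}}\bigl(1 + \log\binom{m}{t}\bigr) + 1$. Since some realisation achieves at most the expectation, the same bound holds for $C(n,m,t)$; the strict inequality in the proposition then comes for free from the fact that $C(n,m,t)$ is an integer while the right-hand side is generically non-integral (or alternatively from the strict inequality $(1-p)^{k_1} < e^{-k_1 p}$ for $p \in (0,1)$).

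The step that requires the most care is the choice of parameterisation: a naive one-phase probabilistic construction followed by a union bound would produce an exponent of the form $\log N = \log\binom{n}{t}$, which is substantially larger than the claimed $\log\binom{m}{t}$. The amortisation between the random phase and the deterministic patch phase is precisely what replaces $\log N$ by $\log(Np) = \log\binom{m}{t}$, matching the stated bound. Once this balancing is set up correctly, everything else is a routine manipulation of expectations and the ceiling in the definition of $k_1$.
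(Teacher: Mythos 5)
Your proof is correct. The upper bound is essentially the paper's own argument (the Erd\H{o}s--Spencer two-phase alteration: $k_1$ random $m$-blocks, then one patch block per uncovered $t$-set, with $k_1=\lceil \frac{1}{p}\log(Np)\rceil$ and $Np=\binom{m}{t}$), so there is nothing to add there beyond a small caution: your parenthetical justification of the \emph{strict} inequality via ``generic non-integrality'' is not a proof; the clean route, which the paper takes, is that $k_1=\lceil a\rceil<a+1$ always, while the patch count is an integer bounded by $\lfloor Ne^{-k_1p}\rfloor\le 1/p$, and summing gives strictness. Your lower bound, however, is genuinely different from the paper's. You double-count incidences $(T,i)$ with $T\subseteq S_i$: each block contributes $\binom{m}{t}$ such pairs and each of the $\binom{n}{t}$ sets $T$ must appear in at least one, so $k\binom{m}{t}\ge\binom{n}{t}$. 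The paper instead proves the recursion $C(n,m,t)\ge\frac{n}{m}C(n-1,m-1,t-1)$ by averaging over a random element $U$ and restricting the design, then iterates. Your counting argument is shorter and more elementary, and fully suffices for the bound as stated. What the paper's inductive route buys is the strengthening mentioned in the remark immediately after the proposition: because each $C(\cdot,\cdot,\cdot)$ is an integer, the recursion can be upgraded to $C(n,m,t)\ge\lceil\frac{n}{m}C(n-1,m-1,t-1)\rceil$, yielding the iterated-ceiling (Sch\"onheim) bound, which the single global count cannot recover.
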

The lower bound is due to \citet{Schonheim64} and the upper bound is due to \textcite[Theorem 13.4]{ErdosS74}. Both proofs rely on the probabilistic method.
    
We also state (looser) bounds without binomial coefficients:
\begin{corollary}\label{cor:covering-size}
    For all $n,m,t \in \mathbb{N}$ with $n \ge m \ge t>1$ we have
    \[
        \left(\frac{n}{m}\right)^t \le C(n,m,t) <  \left(\frac{n e}{m}\right)^t \cdot\min\{ 1 + m \log 2, 1 + t \log m \} +1.
    \]
\end{corollary}

For completeness, we provide proofs of Proposition \ref{prop:covering-size} and Corollary \ref{cor:covering-size} in Appendix \ref{app:coveringdesigns}. 

Roughly speaking, the lower bound in Proposition \ref{prop:covering-size} is tighter than the upper bound. 
When the lower bound is exactly tight, the covering design is known as a Steiner system (which satisfies the stricter property that each $T \subseteq [n]$ of size $|T|=t$ is contained in \emph{exactly} one $S_i$, rather than \emph{at least} one). Results on the existence of Steiner systems \citep{keevash2014existence,keevash2024short} imply that the lower bound is exactly tight infinitely often. More specifically, for any fixed integers $m \ge t \ge 1$, there exists $n_0$ such that $C(n,m,t) = \frac{{n \choose t}}{{m \choose t}}$ for all $n \ge n_0$ satisfying $\frac{{n-i+1 \choose t-i+1}}{{m-i+1 \choose t-i+1}} \in \mathbb{Z}$ for all $i \in [t]$.
More generally, for any fixed integers $m \ge t \ge 1$, we have \cite{RODL198569} \[C(n,m,t) \le (1+o(1)) \frac{{n \choose t}}{{m \choose t}} ~~~\text{ as } n \to \infty.\] 
In particular, the lower bound is tight at the extreme choices of $m$: When $m=n$, we have $C(n,n,t)=1$. And, when $m=t$, we have \[ C(n,t,t) = {n \choose t}.\label{eq:m=t}\]
The setting where $m,t$ are fixed and $n \to \infty$ is of interest for our work. However, we are more interested in the setting where the ratio $n/m$ is constant and $n,m \to \infty$.
A simple result that helps grapple with this setting is \[\forall \ell,n,m,t \in \mathbb{N} ~~~~ C(\ell n,\ell m,t) \le C(n,m,t). \label{eq:chunks}\] Equation \ref{eq:chunks} follows by partitioning $[\ell n]$ into $n$ chunks of size $\ell$ and taking a $(n,m,t)$-covering design and applying it to the chunks instead of individual points.
Combining Equations \ref{eq:m=t} and \ref{eq:chunks} gives \[C(n,\frac{t}{t+1}n,t) = C(\ell(t+1),\ell t, t) \le C(t+1,t,t) = {t+1 \choose t} = t+1, \label{eq:samp-agg-design}\] where $\ell = \frac{n}{t+1}$ is assumed to be an integer.
Finally, note that we can largely ignore integer divisibility issues due to monotonicity -- i.e., $C(n-i,m+j,t) \le C(n,m,t)$ when $n \ge n-i \ge m+j \ge m \ge t \ge 0$.

\section{Our Algorithm}

\begin{algorithm}
    \caption{Differentially Private Black-box Estimator}\label{alg:estimate}
    \begin{algorithmic}
        \Procedure{Estimate}{$f:\mathcal{X}^n\to\mathcal{Y},x\in\mathcal{X}^n,\varepsilon>0,\delta>0,S_1,\cdots, S_k \subseteq [n]$}
            \State Let $t = \frac{1}{\varepsilon} \log(1/\delta) \exp(O(\log^* |\mathcal{Y}|))$ as in Theorem \ref{thm:shi-adp} be an integer. 
            \State Assert that $S_1, S_2, \cdots, S_k \subseteq [n]$ is a $(n,m,t)$-covering design (Definition \ref{def:covering_design}). \State\Comment{$n \ge m \ge t$, $k \ge C(n,m,t)$.}
            \State Compute $f(x_{[n]\setminus S_i})$ for each $i \in [k]$. \Comment{Only values of $f$ we need.}
            \State Define $g:\mathcal{X}^n\to\mathcal{Y}$ by
            \[g(x') := \max\{ f(x'_{[n] \setminus S_i}) : i \in [k], |x'_{[n] \setminus S_i}| = n-|S_i| \}.\label{eq:g}\]
            \State \Comment{Define $\max \emptyset := \min \mathcal{Y}$.}
            \State Let $M$ be the Shifted Inverse Mechanism from Theorem \ref{thm:shi-adp} applied to $g$.
            \State \Return $M(x) \in \mathcal{Y}$.
        \EndProcedure
    \end{algorithmic}
\end{algorithm}

Our algorithm is specified in Algorithm \ref{alg:estimate}.
We can see from the algorithm description that the shifted inverse mechanism (\S\ref{sec:shi}) is the main ingredient.
In terms of the analysis, we must check three things:
\begin{itemize}
    \item[(i)] The function $g$ is monotone, as required for the shifted inverse mechanism.
    \item[(ii)] The $k$ evaluations of $f$ suffice for all computations.
    \item[(iii)] The accuracy guarantee of the shifted inverse mechanism translates to the desired accuracy of our algorithm.
\end{itemize}
We address these claims in the following two lemmata. 
But first we provide some intuition for our choice of the function $g$:

Ideally, we want $g=f$, but we need $g$ to be monotone.
A natural way to monotonise $f$ is to set $g(x') = \max \{ f(\check{x}) : \check{x} \subseteq x' \}$. For example, if $f(x) = \sum_i x_i$ is the sum, then the corresponding monotonisation $g$ would simply be the sum over positive terms $g(x) = \sum_i \max\{ x_i, 0 \}$.
The main issue with this monotonisation is that evaluating $g$ requires evaluating $f$ exponentially many times.
We fix this issue by only evaluating $f$ on carefully-chosen subsets of the input. This is where the covering design $S_1, \cdots, S_k$ enters the picture. Setting $g(x') = \max\{ f(x'_{[n] \setminus S_i}) : i \in [k] \}$ \emph{almost} works -- the subtlety is that the shifted inverse mechanism evaluates $g(x')$ for many $x' \subseteq x$. We add the restriction $|x'_{[n] \setminus S_i}| = n-|S_i|$ to address this subtlety and ensure that evaluating $f(x_{[n] \setminus S_i})$ for $i \in [k]$ suffices to compute $g(x')$ for all $x' \subseteq x$. 

Next: Why choose the subsets to form a covering design? Monotonicity (and therefore privacy) holds for any choice of subsets. And for oracle efficiency we just want to minimize the number of subsets $k$. The last requirement is statistical accuracy. For this we want each individual element in the maximum $f(x'_{[n] \setminus S_i})$ to be statistically accurate, which means we want $[n] \setminus S_i$ to be large, so we want $S_i$ to be small. Finally, the shifted inverse mechanism's accuracy guarantee requires $g$ to have low down-sensitivity.\footnote{We avoid the formalism of down sensitivity. To be precise, the $t$-down sensitivity of $f$ at $x$ is $\mathsf{DS}_f^t(x) := \max \{ |f(x) - f(x')| : x' \subseteq x, |x \setminus x'| \le t \}$.} This translates to the covering requirement -- each small $T \subseteq [n]$ must be contained in some $S_i$.

\begin{lemma}\label{lem:mon}
    The function $g$ defined by Equation \ref{eq:g} in Algorithm \ref{alg:estimate} is monotone -- i.e., $x' \subseteq x \implies g(x') \le g(x)$.
    Furthermore, all values $g(x')$ for $x' \subseteq x$ can be computed from the values $f(x_{[n] \setminus S_i})$ for $i \in [k]$.
\end{lemma}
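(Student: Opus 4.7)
The plan is to prove both claims simultaneously by unpacking what the condition $|x'_{[n]\setminus S_i}| = n - |S_i|$ really enforces, namely that the restriction of $x'$ to the index set $[n]\setminus S_i$ has no null entries. This is the key observation: whenever this condition holds and $x' \subseteq x$, the tuples $x'_{[n]\setminus S_i}$ and $x_{[n]\setminus S_i}$ actually coincide.

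First, I would fix $x' \subseteq x$ and an index $i \in [k]$ that contributes to the maximum defining $g(x')$, i.e., $|x'_{[n]\setminus S_i}| = n - |S_i|$. By the definition of the restriction $(\cdot)_{[n]\setminus S_i}$ (see the ``$x_S$'' notation), this equality of sizes forces $x'_j \ne \bot$ for every $j \in [n]\setminus S_i$. The subset relation $x' \subseteq x$ means that at every coordinate either $x'_j = \bot$ or $x'_j = x_j$; combined with $x'_j \ne \bot$ on $[n]\setminus S_i$, we get $x_j = x'_j$ for all $j \in [n]\setminus S_i$. Hence $x'_{[n]\setminus S_i} = x_{[n]\setminus S_i}$ as tuples, so in particular $|x_{[n]\setminus S_i}| = n - |S_i|$ and $f(x'_{[n]\setminus S_i}) = f(x_{[n]\setminus S_i})$.

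From this, monotonicity follows immediately: every $i$ appearing in the max defining $g(x')$ also appears in the max defining $g(x)$, and with the same value $f(x_{[n]\setminus S_i})$. Therefore the set over which we take the max for $g(x')$ is contained in the corresponding set for $g(x)$, giving $g(x') \le g(x)$. The edge case $g(x') = \min \mathcal{Y}$ (empty max) is handled trivially. The computation claim follows from the same identity $f(x'_{[n]\setminus S_i}) = f(x_{[n]\setminus S_i})$: every term actually used in evaluating $g(x')$ for any $x' \subseteq x$ is one of the $k$ precomputed values $f(x_{[n]\setminus S_i})$, and the indicator condition $|x'_{[n]\setminus S_i}| = n - |S_i|$ can be checked without any additional queries to $f$ (it depends only on $x'$ and $S_i$).

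There is no real obstacle here; the only subtlety worth stating carefully is that the ``size'' restriction in Equation \ref{eq:g} is exactly what prevents the monotone envelope from requiring evaluations of $f$ on tuples not of the form $x_{[n]\setminus S_i}$, so both the monotonicity and oracle-efficiency properties hinge on this single definitional choice.
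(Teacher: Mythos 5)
Your proposal is correct and follows essentially the same route as the paper: both hinge on the observation that the size condition $|x'_{[n]\setminus S_i}| = n-|S_i|$ forces $x'$ to be non-null on $[n]\setminus S_i$, which together with $x' \subseteq x$ gives $x'_{[n]\setminus S_i} = x_{[n]\setminus S_i}$, yielding monotonicity (the index set in the max for $g(x')$ is contained in that for $g(x)$ with identical values) and the oracle-efficiency claim simultaneously. Your phrasing of monotonicity via containment of index sets is a slightly crisper rendering of the paper's ``removing elements can only decrease the maximum,'' but the argument is the same.
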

\begin{proof}
    Recall $g:\mathcal{X}^n\to\mathcal{Y}$ is given by
    \[g(x') := \max\{ f(x'_{[n] \setminus S_i}) : i \in [k], |x'_{[n] \setminus S_i}| = n-|S_i| \},\tag{\ref{eq:g}}\] where we set $\max \emptyset = \min \mathcal{Y}$ to cover the corner case.
    The requirement $|x'_{[n] \setminus S_i}| = n-|S_i|$ is equivalent to requiring that there are no null values in $x'_{[n] \setminus S_i}$ -- i.e., $\forall j \in [n] \setminus S_i ~~ x'_j \ne \bot$. (Recall how size is defined in Equation \ref{eq:size}.)
    
    Thus, if we remove an element from the input, then this removes from the maximum all function values that depend on the removed input.
    That is, if we remove, say, the $j$-th element $x'_j$ from the input $x'$ by replacing it with $\bot$, then this removes from the maximum all indices $i \in [k]$ such that $j \notin S_i$.
    Removing elements can only decrease the maximum, which implies the monotonicity of $g$.

    For $x' \subseteq x$ and $i \in [k]$, if $|x'_{[n] \setminus S_i}| = n-|S_i|$, then $x'_{[n] \setminus S_i} = x_{[n] \setminus S_i}$. Thus, if $x' \subseteq x$, then $g(x') = \max \{ f(x_{[n] \setminus S_i}) : i \in [k], |x'_{[n] \setminus S_i}|=n-|S_i|\}$. In other words, for any $x' \subseteq x$, we can compute the value $g(x')$ from the $k$ values $f(x_{[n] \setminus S_i})$ for $i \in [k]$, as required.
    (See Section \ref{sec:limitations} for further discussion about computing $g$ from $f$.)
\end{proof}

\begin{lemma}\label{lem:acc}
    Let $g$, $t$, and $S_1,\cdots,S_k$ be as in Algorithm \ref{alg:estimate}.
    Let $x,x' \in \mathcal{X}^n$ with $x' \subseteq x$ with $|x'| \ge n-t$ and $|x|=n$.
    Then \[\max\{ f(x_{[n] \setminus S_i}) : i \in [k] \} \ge g(x) ~~~\text{ and }~~~ g(x') \ge \min\{ f(x_{[n] \setminus S_i}) : i \in [k] \}.\]
\end{lemma}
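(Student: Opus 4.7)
The plan is to prove the two inequalities separately; each reduces to chasing the definition of $g$ from Equation \ref{eq:g}, with the second inequality being the one that actually invokes the covering design property.

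For the first inequality, I would observe that the hypothesis $|x| = n$ forces every coordinate of $x$ to be non-null, so for every $i \in [k]$ the restricted tuple $x_{[n] \setminus S_i}$ has $|x_{[n] \setminus S_i}| = n - |S_i|$. Thus every index $i \in [k]$ contributes to the maximum defining $g(x)$, giving $g(x) = \max\{ f(x_{[n] \setminus S_i}) : i \in [k] \}$; the required bound then holds with equality.

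For the second inequality, the key step is to locate a subset $S_i$ from the covering design that avoids all the positions where $x'$ differs from $x$. Concretely, I would set $T := \{ j \in [n] : x_j \neq \bot = x'_j \}$, so that $|T| = |x| - |x'| \le t$. Since $S_1, \dots, S_k$ is a $(n, m, t)$-covering design and $|T| \le t$, Definition \ref{def:covering_design} yields some $i^\star \in [k]$ with $T \subseteq S_{i^\star}$, i.e. $[n] \setminus S_{i^\star} \subseteq [n] \setminus T$. On indices in $[n] \setminus T$ the tuples $x$ and $x'$ agree and are non-null, so $x'_{[n] \setminus S_{i^\star}} = x_{[n] \setminus S_{i^\star}}$ and $|x'_{[n] \setminus S_{i^\star}}| = n - |S_{i^\star}|$. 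Hence $f(x_{[n] \setminus S_{i^\star}}) = f(x'_{[n] \setminus S_{i^\star}})$ is one of the values over which the maximum in Equation \ref{eq:g} is taken, so $g(x') \ge f(x_{[n] \setminus S_{i^\star}}) \ge \min\{ f(x_{[n] \setminus S_i}) : i \in [k]\}$.

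Nothing here is genuinely difficult; the only subtlety I would flag is being careful with the size restriction $|x'_{[n] \setminus S_i}| = n - |S_i|$ in the definition of $g$. This restriction is exactly what makes $g$ monotone (as in Lemma \ref{lem:mon}) by discarding terms where $x'$ has null values outside $S_i$, and it is what forces me to pick the covering-design index $i^\star$ so that the removed positions $T$ lie inside $S_{i^\star}$ rather than outside it. Once that index is in hand, both inequalities fall out immediately from the definition of $g$ without any further computation.
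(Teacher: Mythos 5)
Your proof is correct and follows essentially the same route as the paper's: the first inequality is the observed equality $g(x)=\max_i f(x_{[n]\setminus S_i})$ forced by $|x|=n$, and the second uses the covering design to find an $S_{i^\star}$ containing the set $T$ of removed indices, so that $x'_{[n]\setminus S_{i^\star}}=x_{[n]\setminus S_{i^\star}}$ survives the size restriction in the definition of $g$. Your handling of $|T|\le t$ (rather than exactly $t$) is, if anything, slightly more careful than the paper's.
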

\begin{proof}
    The first part of the claim is in fact an equality: $\max\{ f(x_{[n] \setminus S_i}) : i \in [k] \} = g(x)$ and follows from the definition of $g$ (Equation \ref{eq:g}) and assumption $|x|=n$, which implies $|x_{[n]\setminus S_i}|=n-|S_i|=n-m$ for all $i \in [k]$.
    The second part of the claim relies on the fact that $S_1, \cdots, S_k$ is a $(n,m,t)$-covering (Definition \ref{def:covering_design}).
    Let $T \subseteq [n]$ be such that $x' = x_{[n] \setminus T}$ and $|T| \le t$.
    Then, by definition, there exists some $i \in [k]$ with $T \subseteq S_i$.
    It follows that $x'_{[n] \setminus S_i}=x_{[n] \setminus S_i}$ and $|x'_{[n] \setminus S_i}|=|x_{[n] \setminus S_i}|=n-m$; hence, $g(x') = \max\{ f(x'_{[n] \setminus S_i}) : i \in [k], |x'_{[n] \setminus S_i}| = n-m \} \ge f(x_{[n]\setminus S_i})$, as required.
\end{proof}

Combining Lemmas \ref{lem:acc} and \ref{lem:mon} with the guarantee of the shifted inverse mechanism in Theorem \ref{thm:shi-adp} and an optimal covering design, gives us the following guarantee.

\begin{theorem}[Main Result -- General Version] \label{thm:main-formal-adp}
    Let $f : \mathcal{X}^n \to \mathcal{Y}$ with $\mathcal{Y} \subseteq \mathbb{R}$ finite. Let $\varepsilon,\delta>0$.
    Let $t = \frac{1}{\varepsilon} \log(1/\delta) \exp(O(\log^* |\mathcal{Y}|))$ as in Theorem \ref{thm:shi-adp} be an integer.
    Let $m \in \mathbb{N}$ satisfy $n \ge m \ge t$.
    Let $M : \mathcal{X}^n \to \mathcal{Y}$ be \textsc{Estimate} from Algorithm \ref{alg:estimate} instantiated with $f,\varepsilon,\delta$ and a $(n,m,t)$-covering design $S_1,\cdots,S_k\subseteq [n]$ of size $k$. 
    Then we have the following properties.
    \begin{itemize}
        \item \textbf{Privacy}: $M$ is $(\varepsilon,\delta)$-differentially private.
        \item \textbf{Accuracy}: For any input $x \in \mathcal{X}^n$ of size $|x|=n$,\footnote{Note that the accuracy guarantee holds with probability 1. This is only possible with approximate differential privacy (i.e., $\delta>0$) \cite{DPorg-fail-prob}. In contrast, Theorems \ref{thm:main-formal-pdp} and \ref{thm:main-formal-cdp} have a nonzero failure probability, since these are for pure and concentrated differential privacy respectively.} \[\max\{f(x_{[n]\setminus S_i}) : i \in [k]\} \ge M(x) \ge \min\{f(x_{[n]\setminus S_i}) : i \in [k]\}.\]
        \item \textbf{Oracle Efficiency}: $M(x)$ only depends on the $k$ values $f(x_{[n] \setminus S_i})$ for $i \in [k]$.
    \end{itemize}
\end{theorem}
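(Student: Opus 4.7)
The plan is to assemble Theorem \ref{thm:main-formal-adp} from the three ingredients already proved in the excerpt: Lemma \ref{lem:mon} (monotonicity and computability of $g$ from only the $k$ values), Lemma \ref{lem:acc} (the two inequalities relating $g$ to the $k$ queried values), and Theorem \ref{thm:shi-adp} (the shifted-inverse mechanism's privacy and deterministic output window). There is essentially no new combinatorics or probability to do; the proof is a bookkeeping exercise chaining the three results in the right order. I do not expect any real obstacle, but the care is in matching the quantifications (subsets of $x$, sizes, the covering parameter $t$).

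First, for \textbf{privacy}, I would invoke Lemma \ref{lem:mon} to conclude that $g:\mathcal{X}^n\to\mathcal{Y}$ is monotone. Theorem \ref{thm:shi-adp} then directly gives that the shifted inverse mechanism $M$ applied to $g$ is $(\varepsilon,\delta)$-differentially private with respect to its input, which is precisely the input $x$ of the outer algorithm \textsc{Estimate}. Note that the subsets $S_1,\dots,S_k$ and the function $f$ are data-independent, so no extra privacy loss is incurred in constructing $g$.

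Second, for \textbf{accuracy}, I would combine the deterministic guarantee of Theorem \ref{thm:shi-adp} with Lemma \ref{lem:acc}. Specifically, Theorem \ref{thm:shi-adp} ensures (with probability one, in the approximate-DP variant) that
\[
g(x) \;\ge\; M(x) \;\ge\; \min\bigl\{ g(x') : x' \subseteq x,\ |x'| \ge n-t \bigr\}.
\]
For the upper side, Lemma \ref{lem:acc} (first inequality, applied with $x$ itself having $|x|=n$) gives $g(x)\le \max_i f(x_{[n]\setminus S_i})$. For the lower side, Lemma \ref{lem:acc} (second inequality) gives that every $x'\subseteq x$ with $|x'|\ge n-t$ satisfies $g(x')\ge \min_i f(x_{[n]\setminus S_i})$, and taking the min over such $x'$ preserves this bound. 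Chaining the two inequalities yields the claimed two-sided window on $M(x)$.

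Finally, for \textbf{oracle efficiency}, Theorem \ref{thm:shi-adp} states that $M(x)$ only depends on the values $g(x')$ for $x'\subseteq x$. Lemma \ref{lem:mon} states that all such $g(x')$ can be computed from the $k$ numbers $f(x_{[n]\setminus S_i})$ for $i\in[k]$. Therefore $M(x)$ depends on at most these $k$ oracle calls, and the \textsc{Estimate} procedure indeed issues exactly those calls. The only subtle point to flag is that the covering-design hypothesis on $S_1,\dots,S_k$ is used only in Lemma \ref{lem:acc} (to locate, for each $T$ of size $t$, an index $i$ with $T\subseteq S_i$), whereas monotonicity and oracle efficiency hold for an arbitrary family of subsets; this cleanly separates the privacy argument from the accuracy argument and completes the proof.
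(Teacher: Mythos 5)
Your proposal is correct and follows essentially the same route as the paper's own proof: privacy from monotonicity (Lemma \ref{lem:mon}) plus Theorem \ref{thm:shi-adp}, accuracy by chaining the shifted-inverse guarantee with the two inequalities of Lemma \ref{lem:acc}, and oracle efficiency from the second part of Lemma \ref{lem:mon}. No gaps.
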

\begin{proof}
    Privacy follows from the privacy guarantee of the shifted inverse mechanism (Theorem \ref{thm:shi-adp}) and postprocessing; this requires $g$ to be monotone, which is guaranteed by the first part of Lemma \ref{lem:mon}.
    Efficiency follows from the second part of Lemma \ref{lem:mon} -- for all $x' \subseteq x$, $g(x')$ is determined by the $k$ values $f(x_{[n]\setminus S_i})$ for $i \in [k]$ and the shifted inverse mechanism only accesses the input by evaluating $g(x')$ with $x' \subseteq x$.
    The accuracy guarantee of the shifted inverse mechanism (Theorem \ref{thm:shi-adp}) is that \[g(x) \ge M(x) \ge \min \left\{ g(x') : x' \subseteq x, |x'| \ge |x| - t \right\}.\]
    By Lemma \ref{lem:acc}, $\max\{ f(x_{[n] \setminus S_i}) : i \in [k] \} \ge g(x)$ and \[ \min \left\{ g(x') : x' \subseteq x, |x'| \ge |x| - t \right\} \ge \min\{ f(x_{[n] \setminus S_i}) : i \in [k] \}.\]
    Combining the bounds yields the accuracy guarantee and completes the proof.
\end{proof}

Theorem \ref{thm:main} in the introduction is a simplification of Theorem \ref{thm:main-formal-adp}.
\begin{proof}[Proof of Theorem \ref{thm:main}.]
    The algorithm $M^f$ promised by Theorem \ref{thm:main} is \textsc{Estimate} from Algorithm \ref{alg:estimate} instantiated with an optimal covering design i.e. $k = C(n,m,t)$.
    The bounds on $k$ in Theorem \ref{thm:main} follow from Proposition \ref{prop:covering-size} and Corollary \ref{cor:covering-size}.
    The privacy and oracle efficiency guarantees of Theorem \ref{thm:main} follow immediately from those of Theorem \ref{thm:main-formal-adp}.
    It only remains to translate the accuracy guarantee:
    Suppose we have an input $X \in \mathcal{X}^n$ of size $|X|=n$ that consists of $n$ independent samples from an (unknown) distribution $\mathcal{D}$.
    For each $i \in [k]$, the subset of the input $X_{[n]\setminus S_i}$ corresponds to $n-m$ independent samples from $\mathcal{D}$, since $|S_i|=m$.
    Theorem \ref{thm:main} assumes that $\pr{X \gets \mathcal{D}^{n-m}}{|f(X)-\nu|\le\alpha}\ge 1-\beta$ for some value $\nu$ (where $\nu$ depends on $\mathcal{D}$).
    Thus, by a union bound, $\pr{X \gets \mathcal{D}^n}{\forall i \in [k] ~~ |f(X_{[n]\setminus S_i})-\nu|\le\alpha} \ge 1-k\beta$.
    From Theorem \ref{thm:main-formal-adp}, we have \[\max\{f(X_{[n]\setminus S_i}) : i \in [k]\} \ge M(X) \ge \min\{f(X_{[n]\setminus S_i}) : i \in [k]\}.\] 
    It follows that $\pr{X \gets \mathcal{D}^n}{|M(X)-\nu|\le\alpha} \ge 1-k\beta$, as required.
\end{proof}

Note that the proof of Theorem \ref{thm:main} uses a union bound over the $k$ evaluations of the function $f$. \label{union-bound-discussion}
This bound may be overly pessimistic, since the events are not independent.
As a simple thought experiment, suppose each data point is corrupted with probability $\gamma$ and the function evaluation fails only if one of the data points is corrupted. In this case, it suffices to take a union bound over the $n$ data points, rather than over the $k$ evaluations.
Theorem \ref{thm:main-formal-adp} is stated differently than Theorem \ref{thm:main} so that we can use a different analysis, rather than a union bound.

\subsection{Pure \& Concentrated DP Variants}
Theorem \ref{thm:main-formal-adp} is stated for approximate differential privacy. We also state results for the pure and concentrated variants of differential privacy. These results follow by applying the relevant versions of the shifted inverse mechanism (\S\ref{sec:shi}).
However, this requires us to introduce an added failure probability in the mechanism.

\begin{theorem}[Main Result -- Pure DP Version] \label{thm:main-formal-pdp}
    Let $f : \mathcal{X}^n \to \mathcal{Y}$ with $\mathcal{Y} \subseteq \mathbb{R}$ finite. Let $\varepsilon,\beta>0$.
    Let $t = 2\left\lceil \frac{2}{\varepsilon} \log\left(\frac{|\mathcal{Y}|}{\beta}\right) \right\rceil$ as in Theorem \ref{thm:shi-pdp}.
    Let $m \in \mathbb{N}$ satisfy $n \ge m \ge t$ and let $k \ge C(n,m,t)$.
    Then there exists $M : \mathcal{X}^n \to \mathcal{Y}$ with the following properties.
    \begin{itemize}
        \item \textbf{Privacy}: $M$ is $(\varepsilon,0)$-differentially private.
        \item \textbf{Accuracy}: For any input $x \in \mathcal{X}^n$ of size $|x|=n$, \[\pr{M}{\max\{f(x_{[n]\setminus S_i}) : i \in [k]\} \ge M(x) \ge \min\{f(x_{[n]\setminus S_i}) : i \in [k]\}} \ge 1-\beta.\]
        \item \textbf{Oracle Efficiency}: $M(x)$ only depends on the $k$ values $f(x_{[n] \setminus S_i})$ for $i \in [k]$.
    \end{itemize}
\end{theorem}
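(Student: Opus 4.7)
The plan is to mirror the proof of Theorem \ref{thm:main-formal-adp} almost verbatim, but with the Pure DP version of the shifted inverse mechanism (Theorem \ref{thm:shi-pdp}) substituted for the approximate DP version (Theorem \ref{thm:shi-adp}). Concretely, I would define $M : \mathcal{X}^n \to \mathcal{Y}$ to be Algorithm \ref{alg:estimate} modified so that the last step invokes the pure DP shifted inverse mechanism of Theorem \ref{thm:shi-pdp} with parameters $\varepsilon$ and $\beta$, applied to the same monotonised function $g$ defined by Equation \ref{eq:g} from the given covering design $S_1, \ldots, S_k$. This fixes the value of $t$ to $t = 2\lceil \frac{2}{\varepsilon}\log(|\mathcal{Y}|/\beta)\rceil$, matching the hypothesis of the theorem, and ensures that $n \ge m \ge t$ is exactly the regime in which $(n,m,t)$-covering designs of size $k \ge C(n,m,t)$ are used.

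For privacy, Lemma \ref{lem:mon} already shows that $g$ is monotone, so Theorem \ref{thm:shi-pdp} applied to $g$ yields $(\varepsilon, 0)$-differential privacy for $M$; since $g$ is determined by the fixed $f$ and the fixed covering design (its only data dependence comes through the input subset $x'$), privacy of the overall mechanism follows immediately by post-processing. For oracle efficiency, the second part of Lemma \ref{lem:mon} shows that every value $g(x')$ queried by the shifted inverse mechanism (which only accesses $g$ on subsets $x' \subseteq x$) can be computed from the $k$ evaluations $f(x_{[n]\setminus S_i})$, so $M(x)$ depends on $f$ only through those $k$ values.

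For accuracy, Theorem \ref{thm:shi-pdp} guarantees that, with probability at least $1-\beta$ over the randomness of the mechanism,
\[
g(x) \ge M(x) \ge \min\{g(x') : x' \subseteq x,\ |x'| \ge |x|-t\}.
\]
Invoking Lemma \ref{lem:acc} (which uses $|x|=n$ and the $(n,m,t)$-covering property of $S_1,\ldots,S_k$), the left-hand side is at most $\max_i f(x_{[n]\setminus S_i})$ and the right-hand side is at least $\min_i f(x_{[n]\setminus S_i})$, yielding the desired sandwich inequality with probability $\ge 1-\beta$.

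I do not anticipate a genuine obstacle here: all three conditions — monotonicity, the reduction from $g$ to the $k$ function values, and the covering-design-based accuracy translation — were established in Lemmas \ref{lem:mon} and \ref{lem:acc} without reference to which privacy flavour is used downstream. The only substantive change relative to the proof of Theorem \ref{thm:main-formal-adp} is bookkeeping: the shifted inverse mechanism's accuracy now holds only with probability $1-\beta$ rather than with certainty, and this single failure probability propagates directly into the accuracy statement of the theorem. An analogous proof, invoking Theorem \ref{thm:shi-cdp} in place of Theorem \ref{thm:shi-pdp}, would give the zCDP/GDP version.
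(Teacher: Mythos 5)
Your proposal is correct and matches the paper's intended argument: the paper itself only remarks that the pure DP version ``follows by applying the relevant version of the shifted inverse mechanism,'' and your write-up fleshes out exactly that substitution of Theorem \ref{thm:shi-pdp} into the proof of Theorem \ref{thm:main-formal-adp}, with the failure probability $\beta$ correctly propagated into the accuracy guarantee. No gaps.
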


\begin{theorem}[Main Result -- Concentrated DP Version] \label{thm:main-formal-cdp}
    Let $f : \mathcal{X}^n \to \mathcal{Y}$ with $\mathcal{Y} \subseteq \mathbb{R}$ finite. Let $\rho,\beta>0$.
    Let $t = O\left( \sqrt{ \frac{1}{\rho} \log\left(\frac{|\mathcal{Y}|}{\beta}\right) } \right)$ as in Theorem \ref{thm:shi-cdp}.
    Let $m \in \mathbb{N}$ satisfy $n \ge m \ge t$ and let $k \ge C(n,m,t)$.
    Then there exists $M : \mathcal{X}^n \to \mathcal{Y}$ with the following properties.
    \begin{itemize}
        \item \textbf{Privacy}: $M$ is $\rho$-zCDP and $\sqrt{2\rho}$-GDP.
        \item \textbf{Accuracy}: For any input $x \in \mathcal{X}^n$ of size $|x|=n$, \[\pr{M}{\max\{f(x_{[n]\setminus S_i}) : i \in [k]\} \ge M(x) \ge \min\{f(x_{[n]\setminus S_i}) : i \in [k]\}} \ge 1-\beta.\]
        \item \textbf{Oracle Efficiency}: $M(x)$ only depends on the $k$ values $f(x_{[n] \setminus S_i})$ for $i \in [k]$.
    \end{itemize}
\end{theorem}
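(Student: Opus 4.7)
The plan is to mirror the proof of Theorem \ref{thm:main-formal-adp} almost verbatim, the only change being that we instantiate Algorithm \ref{alg:estimate} with the zCDP/GDP variant of the shifted inverse mechanism (Theorem \ref{thm:shi-cdp}) in place of the approximate DP variant (Theorem \ref{thm:shi-adp}), and we choose $t$ according to the parameter regime specified in Theorem \ref{thm:shi-cdp}. The three properties (privacy, oracle efficiency, accuracy) are essentially modular consequences of (i) the monotonicity and computability properties of $g$ established in Lemma \ref{lem:mon}, (ii) the sandwich inequality of Lemma \ref{lem:acc}, and (iii) the privacy/accuracy guarantees of the underlying shifted inverse mechanism, so none of the structural arguments change.

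Concretely, I would first invoke Lemma \ref{lem:mon} to conclude that the function $g$ defined by Equation \ref{eq:g} is monotone. Theorem \ref{thm:shi-cdp} then directly yields that the shifted inverse mechanism applied to $g$ is $\rho$-zCDP and $\sqrt{2\rho}$-GDP; both notions are closed under postprocessing, so the full algorithm $M$ inherits these privacy guarantees. For oracle efficiency, I would again appeal to the second half of Lemma \ref{lem:mon}: every value $g(x')$ with $x' \subseteq x$ is determined by the $k$ evaluations $f(x_{[n] \setminus S_i})$, and Theorem \ref{thm:shi-cdp} promises that $M$ only accesses $g$ on subsets of its input.

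For the accuracy claim, Theorem \ref{thm:shi-cdp} guarantees that with probability at least $1-\beta$,
\[
g(x) \ge M(x) \ge \min\{ g(x') : x' \subseteq x,\, |x'| \ge |x| - t \}.
\]
Applied to an input of size $|x| = n$, Lemma \ref{lem:acc} gives $\max\{ f(x_{[n] \setminus S_i}) : i \in [k] \} \ge g(x)$ on the upper side and $\min\{ g(x') : x' \subseteq x,\, |x'| \ge n - t \} \ge \min\{ f(x_{[n] \setminus S_i}) : i \in [k] \}$ on the lower side (the latter using that $S_1, \dots, S_k$ form a $(n,m,t)$-covering design, so for every $t$-subset $T \subseteq [n]$ there is some $S_i \supseteq T$). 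Chaining these inequalities under the high-probability event from Theorem \ref{thm:shi-cdp} yields the stated sandwich bound with probability at least $1-\beta$.

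The only substantive difference from the approximate DP proof is bookkeeping around the probability $1-\beta$: in Theorem \ref{thm:main-formal-adp} the sandwich inequality held with probability $1$, whereas here it holds only with the probability inherited from Theorem \ref{thm:shi-cdp}. I do not anticipate a genuine obstacle in this proof; the only nontrivial ingredient is Theorem \ref{thm:shi-cdp} itself, whose proof is deferred to Appendix \ref{app:shi} and is where all of the new work for the concentrated/Gaussian DP regime actually occurs.
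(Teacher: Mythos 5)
Your proposal is correct and follows exactly the route the paper intends: the paper itself derives Theorem \ref{thm:main-formal-cdp} by substituting Theorem \ref{thm:shi-cdp} for Theorem \ref{thm:shi-adp} in the proof of Theorem \ref{thm:main-formal-adp}, with Lemmas \ref{lem:mon} and \ref{lem:acc} unchanged and the only new bookkeeping being the $1-\beta$ failure probability inherited from the shifted inverse mechanism. Nothing further is needed.
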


To guarantee $(\varepsilon,\delta)$-differential privacy, it suffices \cite[Remark 15]{steinke2022composition} to have $\rho$-zCDP with \[\rho = \frac{\varepsilon^2}{4\log(1/\delta) + 4\varepsilon}.\] 
Substituting this bound into Theorem \ref{thm:main-formal-cdp} gives \[t = O\left( \frac{1}{\varepsilon} \sqrt{ (\log(1/\delta)+\varepsilon) \cdot \log\left(\frac{|\mathcal{Y}|}{\beta}\right) } \right).\]
We can compare this bound to $t = \frac{1}{\varepsilon} \log(1/\delta) \exp(O(\log^* |\mathcal{Y}|))$ in Theorem \ref{thm:main-formal-adp}.
In particular, if the output space $\mathcal{Y}$ is not too large and we can tolerate a reasonable failure probability $\beta$, then $\sqrt{\log(|\mathcal{Y}|/\beta)}$ is not much larger than $\exp(O(\log^*|\mathcal{Y}|))$, which means that the dominant difference between the bounds is that Theorem \ref{thm:main-formal-adp} has a $\log(1/\delta)$ term, where Theorem \ref{thm:main-formal-cdp} gives $\sqrt{\log(1/\delta)}$. That is to say, in a reasonable parameter regime, Theorem \ref{thm:main-formal-cdp} is better than Theorem \ref{thm:main-formal-adp}. (This comparison, of course, depends on the constants hidden by the big-O notation.)

Furthermore, if the desired privacy failure probability $\delta$ is sufficiently small, then the bound of $t = O\left(\frac{1}{\varepsilon} \log(|\mathcal{Y}|/\beta)\right)$ from Theorem \ref{thm:main-formal-pdp} -- which is independent of $\delta$ -- may dominate the bounds from Theorems \ref{thm:main-formal-adp} and \ref{thm:main-formal-cdp}.

\subsection{Example Applications}

We now consider some example applications of our algorithm. We consider the mean and the maximum. Obviously, there is no need to treat these simple functions as black boxes and tailored algorithms would be more efficient than our method. However, it is instructive to see how well our method works for these examples where we know what outcome to expect. 
We also consider learning parities, which gives an asymptotic improvement over sample-and-aggregate.

\textbf{Gaussian Mean Estimation:}
Differentially private mean estimation is well-studied \citep{barber2014privacy,karwa2018finite,bun2019average,kamath2019privately,biswas2020coinpress,kamath2020private,WangXDX20,DuFMBG20,huang2021instance,CaiWZ21,brown2021covariance,liu2021robust,hopkins2022efficient,liu2022differential,KamathLZ22,kothari2022private,tsfadia2022friendlycore,alabi2023privately,KuditipudiDH23,asi2023robustness,BrownHS23,hopkins2023robustness,zhao2024huber,aumuller2024plan,singhal2024polynomial,dagan2024dimension,chaudhuri2024mean,yu2024gaussian,kamath2025bias,agarwal2025private,zampetakis2025private,dong2026tight}.
We can apply our algorithm to the function $f : (\mathbb{R} \cup \{\bot\})^* \to (\mathbb{R} \cup \{\bot\})$ given by $f(x) = \frac{\sum_{i : x_i \ne \bot} x_i}{\sum_{i : x_i \ne \bot} 1}$ and compare with known results. 

The accuracy achievable for private mean estimation depends on the data distribution \cite{kamath2020private}. Namely, it depends on how heavy-tailed the data distribution is. 
For simplicity we consider Gaussian data $\mathcal{D} = \mathcal{N}(\mu,\sigma^2)$. This has very light tails, which makes it amenable to simple differentially private algorithms. 

\begin{corollary}[Theorem \ref{thm:main} applied to Gaussian mean estimation]\label{cor:gauss-mean-est}
    Let $\mathcal{Y} \subseteq \mathbb{R}$ be finite. 
    Let $\varepsilon,\delta,\hat\beta>0$ and let $t=\frac1\varepsilon \log(1/\delta) \exp(O(\log^* |\mathcal{Y}|))$ as in Theorem \ref{thm:main}.
    Let $n \ge m \ge t$. 
    Then there exists a $(\varepsilon,\delta)$-differentially private $M : \mathbb{R}^* \to \mathcal{Y}$ with the following property.
    Let $\mu \in \mathbb{R}$ and $\sigma>0$. If $X \gets \mathcal{N}(\mu,\sigma^2)^n$ consists of $n$ i.i.d.~samples from a Gaussian with mean $\mu$ and variance $\sigma^2$, then \[\pr{X \gets \mathcal{N}(\mu,\sigma^2)^n}{|M(X)-\mu| \le O\left(\sigma\sqrt{\frac{t \log (n/m) + t + \log(\log(m)/\hat\beta)}{n-m}}\right)+\tilde\mu_{\mathcal{Y}} } \ge 1-\hat\beta,\label{eq:gauss-mean-acc}\]
    where $\tilde\mu_{\mathcal{Y}} := \min_{\nu \in \mathcal{Y}} |\mu-\nu|$ is the discretization error from rounding the output to $\mathcal{Y}$.
\end{corollary}
Setting $m=n/2$ in Equation \ref{eq:gauss-mean-acc} gives \[\pr{X \gets \mathcal{N}(\mu,\sigma^2)^n}{|M(X)-\mu| \le O\left(\sigma\sqrt{\frac{\frac1\varepsilon\log(1/\delta)\exp(O(\log^*|\mathcal{Y}|)) + \log(\log(n)/\hat\beta)}{n}}\right)+\tilde\mu_{\mathcal{Y}} } \ge 1-\hat\beta.\label{eq:gauss-mean-sub}\]
The accuracy guarantee of Corollary \ref{cor:gauss-mean-est} is, unfortunately, suboptimal. Namely, ignoring the other parameters, the optimal accuracy bound \cite{bun2019average} scales as $\alpha 
\le O\left(\frac{\sigma}{\sqrt{n}} + \frac{\sigma}{\varepsilon n}\right)$, whereas Equation \ref{eq:gauss-mean-sub} gives $\alpha 
\le O\left(\frac{\sigma}{\sqrt{\varepsilon n}}\right)$.
\begin{proof}
    Define $f : (\mathbb{R} \cup \{\bot\})^* \to \mathcal{Y}$ to be the closest point in $\mathcal{Y}$ to the mean of the dataset. In symbols, $f(x) = \argmin_{\nu \in \mathcal{Y}} \left|\nu-\frac{\sum_{i : x_i \ne \bot} x_i}{\sum_{i : x_i \ne \bot} 1}\right|$ (with $f(x)$ being defined to take an arbitrary fixed value in $\mathcal{Y}$ in the case where $\sum_{i : x_i \ne \bot} 1 = 0$).
    Now we apply Theorem \ref{thm:main} to this function to obtain the mechanism $M=M^f$.
    We immediately have the required differential privacy guarantee. It only remains to analyze the accuracy.
    Theorem \ref{thm:main} promises us that \[\text{ if } \pr{X \gets \mathcal{D}^{n-m}}{|f(X)-\nu|\le\alpha}\ge 1-\beta, \text{ then } \pr{X \gets \mathcal{D}^n}{|M^f(X)-\nu|\le\alpha}\ge 1- k\beta,\label{eq:thm1}\] where $k=C(n,m,t) \le O\left( \left(\frac{ne}{m}\right)^t \cdot t \log m\right)$.
    We know the distribution of the mean of $n-m$ independent samples from $\mathcal{D}=\mathcal{N}(\mu,\sigma^2)$: For all $\beta>0$, we have \[\pr{X \gets \mathcal{D}^{n-m}}{\left|\overline{X} - \mu\right| \le \sigma\sqrt{\frac{2\log(2/\beta)}{n-m}}} \ge 1-\beta \label{eq:gauss-mean},\] where $\overline{X} := \frac{1}{n-m}\sum_i X_i$. Hence \[\pr{X \gets \mathcal{D}^{n-m}}{|f(X) - \mu| \le 2\sigma\sqrt{\frac{2\log(2/\beta)}{n-m}}+\tilde\mu_{\mathcal{Y}}} \ge 1-\beta. \label{eq:gauss-mean-round}\]
    Equation \ref{eq:gauss-mean-round} has a multiplicative factor of $2$ and an additive factor of $\tilde\mu_{\mathcal{Y}}$ compared to the plain Gaussian tail bound in Equation \ref{eq:gauss-mean}. This is because $f$ rounds its output to $\mathcal{Y}$. In particular, if $\mu^*_{\mathcal{Y}} = \argmin_{\nu \in \mathcal{Y}} |\nu-\mu|$, we have \[|f(X)-\mu| \le |f(X) - \overline{X}| + |\overline{X}-\mu| \le |\mu^*_{\mathcal{Y}}-\overline{X}| + |\overline{X}-\mu| \le |\mu^*_{\mathcal{Y}}-\mu| + 2|\overline{X}-\mu| = 2|\overline{X}-\mu|+\tilde\mu_{\mathcal{Y}}.\] The first and last inequalities follow from the triangle inequality and the inequality $|f(X) - \overline{X}| \le |\mu^*_{\mathcal{Y}}-\overline{X}|$ follows from the fact that, by definition, $f(X)$ is the closest point in $\mathcal{Y}$ to $\overline{X}$ and the fact that $\mu^*_{\mathcal{Y}} \in \mathcal{Y}$.

    Now we substitute Equation \ref{eq:gauss-mean-round} into Equation \ref{eq:thm1} and set $\beta = \hat\beta/k$, so that the final accuracy guarantee has the desired failure probability $\hat\beta = k\beta$.
    This gives the accuracy parameter \[\alpha = 2\sigma\sqrt{\frac{2\log(2k/\hat\beta)}{n-m}}+\tilde\mu_{\mathcal{Y}} = O\left(\sigma\sqrt{\frac{t \log (n/m) + t + \log(\log(m)/\hat\beta)}{n-m}}\right)+\tilde\mu_{\mathcal{Y}},\] as required.
\end{proof}


\begin{figure}
    \centering
    \includegraphics[width=0.5\linewidth]{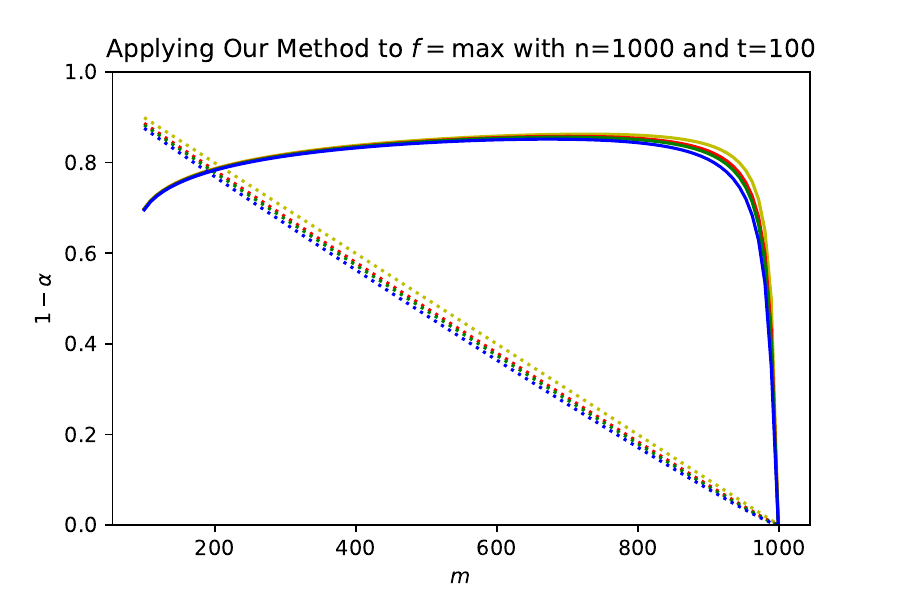}%
    \includegraphics[width=0.5\linewidth]{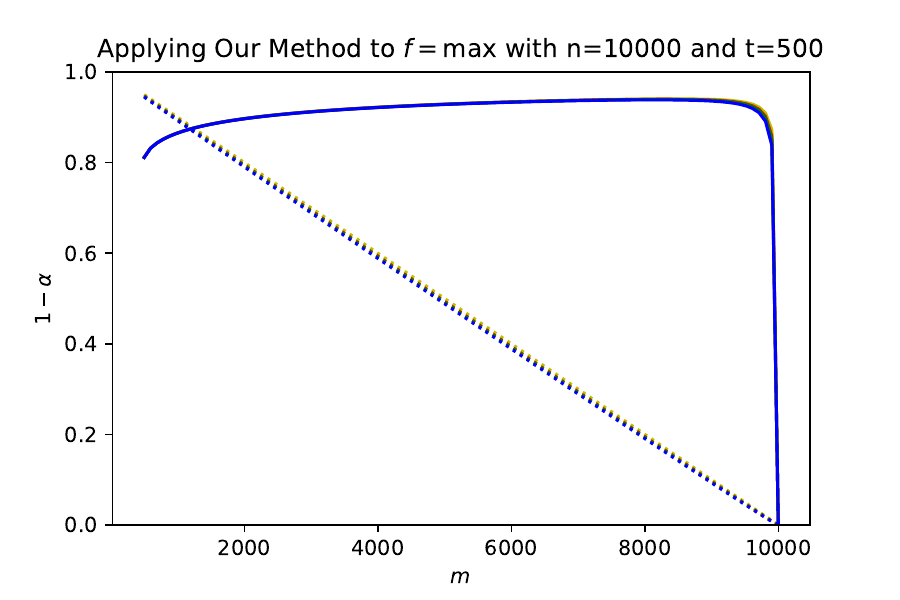}
    \caption{
    Accuracy bounds for applying Algorithm \ref{alg:estimate} to $f=\max$ with data $X$ that is i.i.d.~uniform in $[0,1]$. The bounds shown are lower bounds and the ``ideal'' answer is $1$, so higher is better.
    The vertical axis $1-\alpha$ represents the output value of our algorithm -- i.e., $\pr{X \gets [0,1]^n}{M^f(X) \ge 1-\alpha} \ge 1-\beta$.
    The parameter $m$ on the horizontal axis represents the number of elements discarded for privacy and ranges from $t$ to $n-1$. 
    Solid lines represent the union bound given in Theorem \ref{thm:main} and dotted lines represent an alternative bound using the order statistics.
    \textcolor{blue}{Blue} lines (lowest) represent a 99\% confidence bound (i.e., $\beta=0.01$); \textcolor{green}{green} lines are 95\% confidence ($\beta=0.05$); \textcolor{red}{red} lines are 90\% confidence ($\beta=0.1$); and \textcolor{yellow}{yellow} lines (highest) are 50\% confidence (i.e., median / $\beta=0.5$).
    }
    \label{fig:max-bound}
\end{figure}

\textbf{Maximum:} 
Consider the function $f : (\mathbb{R} \cup \{\bot\})^* \to (\mathbb{R} \cup \{\bot\})$ given by $f(x) = \max\{ x_i : x_i \ne \bot \}$ (where $\max\emptyset=\bot$).\footnote{In other words, $\bot=-\infty$.} This function has infinite sensitivity and even infinite local sensitivity, so it's a tough example to apply our algorithm to. 

For simplicity, we will ignore the need for the output space $\mathcal{Y}$ to be finite.\footnote{ In Figure \ref{fig:max-bound}, we simply set the parameter $t$ to some fixed value and thus ignore the fact that $t$ depends on the size of the output space $\mathcal{Y}$.} Of course, we could discretize the output space, but that would increase the complexity of the analysis without adding any insight.

Now consider data $x$ that is drawn i.i.d.~from the uniform distribution on $[0,1]$.
In this case, the ``ideal'' output is $1$, since that is what the output converges to as the sample size increases.
Thus the measure of accuracy for our algorithm is how close the output is to $1$.

For a sample of size $n-m$, the maximum follows a $\mathsf{Beta}(n-m,1)$ distribution. This has mean $ 1 - \frac{1}{n-m+1}$. And the cumulative distribution function is given by \[\pr{X_1, \cdots, X_{n-m} \gets [0,1]}{\max\{X_1, \cdots, X_{n-m}\} \ge 1-\alpha } = 1 - (1-\alpha)^{n-m} = 1 - \beta.\]
We can plug this into Theorem \ref{thm:main} to get the statistical accuracy guarantee that \[\pr{X_1, \cdots, X_n \gets [0,1]}{M^f(X) \ge 1-\alpha} \ge 1-k\beta = 1-k(1-\alpha)^{n-m},\] where $k = C(n,m,t) < \frac{{n \choose t}}{{m \choose t}} \left(1 + \log{ m \choose t} \right) +1 \le O\left( \left(\frac{ne}{m}\right)^t \cdot t \log m\right)$.
We plot this bound in Figure \ref{fig:max-bound} (using $k = \left\lceil \frac{{n \choose t}}{{m \choose t}} \left(1 + \log{ m \choose t} \right) \right\rceil$).

As discussed on page \pageref{union-bound-discussion}, Theorem \ref{thm:main} applies a union bound and this bound may be loose.
Thus we also consider an alternative bound. Specifically, Theorem \ref{thm:main-formal-adp} gives a bound of the form $M(X) \ge \min\{f(X_{[n]\setminus S_i}) : i \in [k]\}$. Instead of bounding this with a union bound over $i \in [k]$, we can use the fact that each set $[n] \setminus S_i$ is of size $n-m$ and thus $f(X_{[n]\setminus S_i}) = \max\{ X_j : j \in [n] \setminus S_i\} \ge X_{(n-m)}$, where $X_{(1)} \le X_{(2)} \le X_{(3)} \le \cdots \le X_{(n)}$ denotes the order statistics of the sample $X$. In particular, $X_{(n-m)}$ follows a $\mathsf{Beta}(n-m,m+1)$ distribution with mean $1-\frac{m+1}{n+1}$. Figure \ref{fig:max-bound} also plots the bound $M(X) \ge X_{(n-m)}$. 

In Figure \ref{fig:max-bound} we see several things: 
First, the solid lines are relatively flat for most of the range, which shows that the bound of Theorem \ref{thm:main} is relatively insensitive to the choice of $m$ in this range. This is because two effects cancel out -- decreasing $m$ improves the accuracy of each subsample of size $n-m$, but it increases the number of subsamples $k$ that we must union bound over. 
However, as $m$ approaches $n$, we do see a significant drop in accuracy as the sample size here shrinks.
Second, the union bound is indeed not tight, as illustrated by the dotted lines giving a better bound for small values of $m$. Small $m$ is the parameter regime where the subsamples are highly overlapping and thus far from independent.
Third, the four different coloured lines, which represent different probability bounds, are very close. This indicates that the outcomes (or, at least, our bounds on those outcomes) are highly concentrated.

\textbf{Learning Parities:}
Consider a problem that can be solved (non-privately) with $d$ samples, but not fewer. Specifically, we consider PAC learning parities: each sample is a pair $(x,y)$ where $x \in \{0,1\}^d$ is uniformly random and $y = \langle h, x \rangle \mod 2$ for some $h \in \{0,1\}^d$.  For simplicity, the goal is only to output one bit $h_1 \in \{0,1\}$.

Privately learning parities was first studied by \citet{kasiviswanathan2011can}.
We will use this problem to asymptotically separate our algorithm from sample-and-aggregate.

We cannot learn $h$ with fewer than $d$ samples -- this is a basic fact of linear algebra.
More generally, given $d-k$ samples, the probability that we can determine $h_1$ is at most $2^{-k}$.
If we cannot determine $h_1$, we can guess it randomly, so the overall accuracy is upper bounded by $\frac12 + 2^{-k-1}$.
Thus sample-and-aggregate requires $n=\Omega(d/\varepsilon)$ samples to output $h_1$ with high accuracy, since each subsample needs size $\Omega(d)$.

On the other hand, with $d+\lceil\log_2(1/\beta)\rceil$ samples we can learn $h$ with probability at least $1-\beta$; this is our non-private function that we insert into Theorem \ref{thm:main}.
Setting $\beta=1/O(k)$ we get a sample complexity of $n=m+d+O(\log k)$ with $m \ge t = O\big( \frac1\varepsilon \log(1/\delta) \big)$ (and $m \le n$) arbitrary (since $\mathcal{Y}=\{0,1\}$, the $\log^*$ disappears) and using $k = C(n,m,t) \le n^t$ evaluations of the function. This simplifies to $n = m + d + O(\frac1\varepsilon \log(1/\delta) \log (n))$.
Thus our algorithm beats sample-and-aggregate when $m+\frac1\varepsilon \log(1/\delta) \log (n) \le o(d/\varepsilon)$.
Asymptotically,  take $m=d \to \infty$ and $\varepsilon=1/\log d$ and $\delta = 1/d^{10}$ to get a separation of $n\ge \Omega(d \log d)$ for sample-and-aggregate vs $n=O(d)$ for our algorithm, with $k=C(O(d),d,O(\log^2d)) \le d^{O(\log d)}$ queries.

\section{Lower Bound}

Now we prove our lower bound which shows the near-optimality of our upper bound.
Figure \ref{fig:upperlowerbounds} shows how this bound looks numerically.

\begin{figure}
    \centering
    \includegraphics[width=0.5\linewidth]{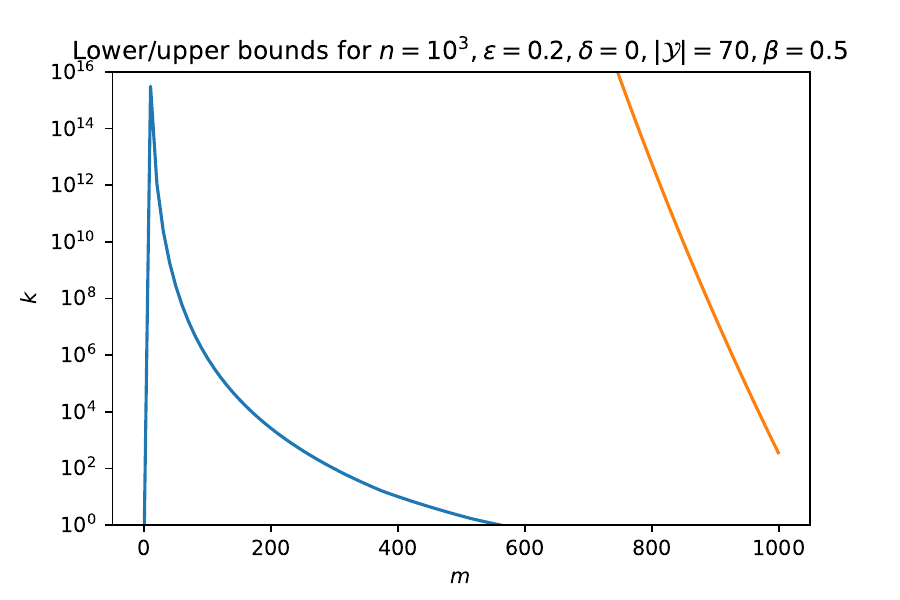}%
    \includegraphics[width=0.5\linewidth]{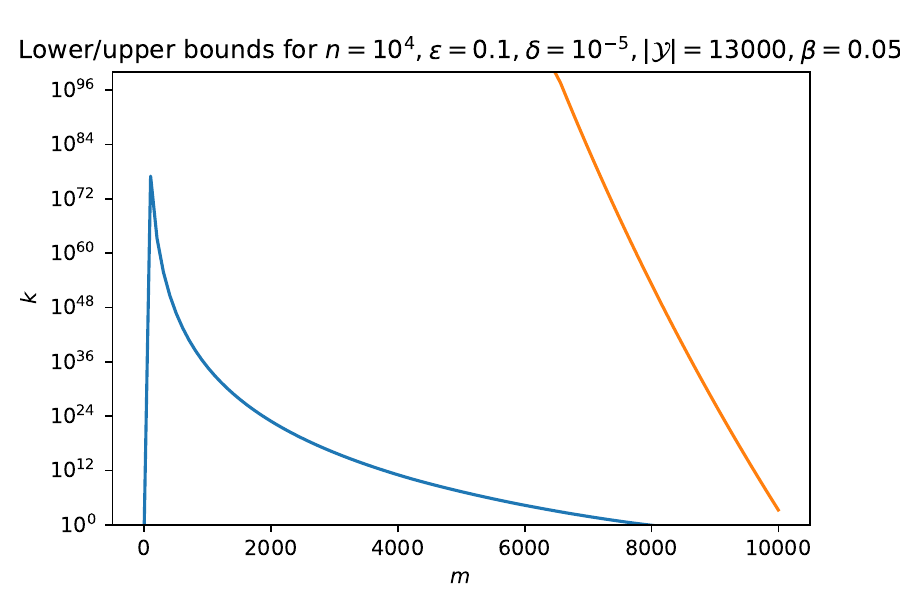}
    \caption{
    Comparing the query complexity lower bound from Theorem \ref{thm:lower-general} (taking a max over $t$) with the upper bound from Theorem \ref{thm:main-formal-pdp} (which satisfies pure DP).
    The parameters are chosen according to Theorem \ref{thm:main-formal-pdp} so that on the left $t = 2\left\lceil \frac{2}{\varepsilon} \log\left(\frac{|\mathcal{Y}|}{\beta}\right) \right\rceil = 100$ and on the right $t=500$, which allows comparison with Figure \ref{fig:max-bound}.
    }
    \label{fig:upperlowerbounds}
\end{figure}

\begin{theorem}[Lower Bound -- General Version] \label{thm:lower-general}
    Let $\mathcal{Y} \subset \mathbb{Z}$ be finite with at least $2$ elements.
    Let $M^f : \mathbb{Z}^n \to \mathcal{Y}$ be a randomised algorithm that makes at most $k$ queries to an oracle $f : \mathbb{Z}^* \to \mathcal{Y}$. Let $\gamma>0$ be fixed.
    Suppose that, for every oracle $f$, the algorithm $M^f$ satisfies $(\varepsilon,\delta)$-differential privacy and the following.
    Let $\mathcal{D}$ be an arbitrary distribution on $\mathbb{Z}$ and let $\nu \in \mathcal{Y}$. If $\pr{X \gets \mathcal{D}^{n-m}}{f(X)=\nu} \ge 1-\gamma$, then $\pr{X \gets \mathcal{D}^n}{M^f(X)=\nu} \ge 1/2$.
    Then we must have \[k ~ \ge ~ \max_{t \in [m]} ~~ \frac{{n \choose t}}{{m \choose t}} \cdot \left(\frac12 e^{-2t\varepsilon} - \frac{\delta}{e^{\varepsilon}-1}  - \frac{1}{|\mathcal{Y}|-1} \right).\] 
\end{theorem}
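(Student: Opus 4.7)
My plan is a packing/adversary argument: I will show that the probability, over a uniformly random $T\in\binom{[n]}{t}$ and the algorithm's internal randomness $\omega$, that some query complement $C_i(\omega):=[n]\setminus S_i(\omega)$ contains $T$ is at least $p^\star := \tfrac12 e^{-2t\varepsilon} - \delta/(e^\varepsilon-1)$. Combined with the union bound $\Pr_T[T\subseteq C_i(\omega)\text{ for some }i] \le k\binom{m}{t}/\binom{n}{t}$ (pointwise in $\omega$, since $|C_i|=m$), this gives $p^\star \le k\binom{m}{t}/\binom{n}{t}$, which rearranges to the theorem.

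Via a standard oracle-design trick---have $f$ return a fixed default value on any input that is not a sub-tuple $x_S$ with $|S|=n-m$---I may assume without loss of generality that $M^f$ only queries $f$ on size-$(n-m)$ sub-tuples of $x$ and is otherwise data-oblivious. Fix an integer $t\ge 0$ and a large constant $L\ge 3$, and set $f(x'):=L\cdot\mathbf{1}[\exists i:\ x'_i=1]$. With $\mathcal{D}=\delta_0$ the accuracy hypothesis gives $\Pr[M^f(0^n)\in\{-1,0,1\}]\ge \tfrac12$, and for $X_T\in\{0,1\}^n$ having ones exactly on $T$ (at group-distance $2t$ from $0^n$), Lemma~\ref{lem:group-privacy} then yields $\Pr[M^f(X_T)\in\{-1,0,1\}]\ge p^\star$.

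On input $X_T$, the response to each query $S_i$ is $0$ when $S_i\cap T=\emptyset$ and $L$ otherwise. If every query hits $T$ the response sequence is identical to the sequence on $1^n$, so by data-obliviousness $M^f(X_T,\omega)=M^f(1^n,\omega)$ pointwise on this event. Applying accuracy also under $\mathcal{D}=\delta_1$ (which gives $\Pr[M^f(1^n)\in\{L-1,L,L+1\}]\ge\tfrac12$, a disjoint interval) and averaging over $T$ then forces the $p^\star$ mass on $\{-1,0,1\}$ to be accounted for by the event that some $S_i$ misses $T$ (equivalently, $T\subseteq C_i$), which is what feeds into the union bound to conclude.

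The delicate step is the quantitative version of the previous paragraph: converting $\Pr[M^f(X_T)\in\{-1,0,1\}]\ge p^\star$ into $\Pr_\omega[T\subseteq C_i(\omega)\text{ for some }i]\ge p^\star$. A naive decomposition of the first probability over the ``all queries hit $T$'' event bounds the all-hit contribution only by $\Pr[M^f(1^n)\in\{-1,0,1\}]\le \tfrac12$, which leaves an unwanted $-\tfrac12$ residual because the accuracy hypothesis only guarantees probability $\ge 1/2$. Closing this gap requires using the pointwise identity $M^f(X_T,\omega)=M^f(1^n,\omega)$ in a non-summed way---for instance, by substituting the coupling directly into the event $\{M^f(X_T)\in\{-1,0,1\}\}$ before applying accuracy on $1^n$---so that the $1/2$ slack cancels rather than accumulates. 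Adaptive algorithms reduce to the non-adaptive case via the standard query-tree expansion, which preserves both the DP parameters and the query count.
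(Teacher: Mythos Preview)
Your argument has two genuine gaps that the sketch does not close.

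First, the reduction to data-oblivious, sub-tuple-only queries is not valid. The oracle $f$ is a fixed function $\mathbb{Z}^*\to\mathbb{Z}$; it cannot ``recognise'' whether its argument is a sub-tuple of the algorithm's actual input $x$, so you can force the query \emph{size} to be $n-m$ but you cannot force queries to be sub-tuples of $x$, let alone force the \emph{choice} of sub-tuple to ignore $x$. Worse, $M^f$ has direct access to $x$, not only oracle access, so its output may depend on $x$ through channels other than the oracle responses. Hence the pointwise identity $M^f(X_T,\omega)=M^f(1^n,\omega)$ on the all-hit event simply fails: even if every oracle response equals $L$, the algorithm still sees that $X_T$ is mostly zeros and may behave differently than on $1^n$. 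With binary data the algorithm sees exactly which coordinates are ``corrupted'' (the ones), which is precisely what your argument needs it not to know. The paper circumvents this by drawing data from a large random set $S\subseteq[\ell^2]$ unknown to the algorithm: the corrupted coordinates of $\widetilde X$ are then statistically indistinguishable from the clean ones given only $\widetilde X$, and Claim~\ref{claim:corrupt} shows that \emph{any} query the algorithm can form from $\widetilde X$ is valid with probability at most $\binom{m}{t}/\binom{n}{t}+o(1)$, regardless of how $M$ uses its input. (Your ``query-tree expansion preserves the query count'' remark is also incorrect: the expansion multiplies the query count by the number of response sequences.)

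Second, even granting data-obliviousness, the ``delicate step'' you flag cannot be closed with only two target values $0$ and $L$. The accuracy hypothesis gives $\Pr[M^f(1^n)\in\{-1,0,1\}]\le\tfrac12$ and no better; the pointwise coupling on $A_T^c$ then yields at best $p^\star\le\Pr[A_T]+\tfrac12$, which is vacuous since $p^\star\le\tfrac12$. No rearrangement of the identity removes this slack, because the slack lives in the distribution of $M^f(1^n)$ itself, which the $\ge\tfrac12$ accuracy guarantee does not pin down. The paper avoids this by comparing not to $M^f(1^n)$ but to the algorithm run with the \emph{constant-zero} oracle $M^{0}$ on the corrupted input, and then choosing $\nu$ from an unbounded range so that $\Pr[M^{0}(\widetilde X)\in[\nu-1,\nu+1]]\le 1/\ell$; this is a genuine packing step over many outcomes, and it is what drives the residual term to zero rather than to $\tfrac12$.
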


Theorem \ref{thm:lower} in the introduction is attained by substituting either $t = 1/\varepsilon $ or $t= \log(1/\delta)/200\varepsilon $ into Theorem \ref{thm:lower-general} and taking $\mathcal{Y}$ to be a sufficiently large set of integers.

We begin by giving some intuition for the lower bound:
Consider $\mathcal{Y}=\{0,1\}$ and the function $f : \{0,1\}^* \to \{0,1\}$ being the $\max$ function and consider the data distribution $\mathcal{D}$ to be a point mass on $0$.
Clearly, $f$ always evaluates to $0$ on input from this distribution and thus our differentially private algorithm, given $n$ samples from this distribution, should output $0$ with high probability.
Assume, for now, that $M$ is restricted to evaluating $f$ on subsets of its input of the appropriate size.
Now suppose $t \approx \frac{1}{\varepsilon}\log(1/\delta)$ out of the $n$ samples are corrupted to be $1$ instead of 0.
If each subset that $M$ evaluates $f$ on includes at least one corrupted sample, then each evaluation will return $1$ and so $M$ should output $1$ with high probability.
This sets up a contradiction with group privacy, since we have two inputs that only differ by $t$ replacements on which the outputs are very different. 
The only way to avoid the contradiction is for the algorithm to query enough sets so that at least one of them doesn't contain any corrupted samples -- in a sense, the subsets queried must form a covering design -- and that's the lower bound.

The above proof sketch has a couple of holes that we must fill:
(i) We must vary the correct answer to rule out an algorithm that simply ``knows'' that the correct answer is $0$. 
(ii) We must randomise which inputs are corrupted to rule out an algorithm that ``knows'' which inputs to avoid.
(iii) The algorithm could defeat this particular setup by evaluating $f$ on small subsets, which are more likely to exclude all the corrupted samples, so we need to modify the function $f$ to ``fail'' when given the wrong input size.
(iv) The algorithm could also defeat this particular setup by evaluating the function on ``fake'' inputs, so we must rule this out too.
(v) Finally, we need to be careful with the group privacy argument; we are effectively performing a ``packing argument'' \cite{hardt2010geometry}.

\begin{proof}
    Let $\ell$ be a sufficiently large integer.\footnote{We will eventually take $\ell \to \infty$.}
    Let $S \subseteq [\ell^2]$ be  uniformly random with size $|S|=\ell$.
    Define $\mathcal{D}_S$ to be the uniform distribution on $S$.
    Without loss of generality, assume $0 \in \mathcal{Y}$.
    Fix $\nu \in \mathcal{Y} \setminus \{0\}$ to be chosen later.
    Let $f_{S,\nu,n-m} : \mathbb{Z}^* \to \mathcal{Y}$ be defined as
    \[
        f_{S,\nu,n-m}(x) = \left\{ \begin{array}{cl} \nu & \text{ if } |x| = n-m \text{ and } \forall i ~ x_i \in S \text{ and } \forall i \ne j ~ x_i \ne x_j\\ 0 & \text{ if } |x| \ne n-m \text{ or } \exists i ~ x_i \notin S \text{ or } \exists i \ne j ~ x_i = x_j \end{array} \right\}.
    \]
    In words, $f_{S,\nu,n-m}(x) = \nu$ when the size of the input $x$ is exactly $n-m$ (without repetitions) and all of the elements in $x$ are inputs from $S$; otherwise $f_{S,\nu,n-m}(x) = 0$.
    
    By construction, $f_{S,\nu,n-m}(X) = \nu$ whenever $X \gets \mathcal{D}_S^{n-m}$ and there are no collisions, which happens with probability \[\!\!\pr{X \gets \mathcal{D}_S^{n-m}}{f_{S,\nu,n-m}(X) = \nu} = \!\!\! \pr{X \gets \mathcal{D}_S^{n-m}}{\forall i \ne j ~ X_i \ne X_j} = \!\!\! \prod_{i=0}^{n-m-1} \frac{|S|-i}{|S|} \ge 1 - \frac{(n-m)^2}{2\ell} \ge 1-\gamma,\] assuming $\ell \ge n^2/2\gamma$. Hence, by assumption, $M^{f_{S,\nu,n-m}}(X) = \nu$ with probability at least $1/2$ whenever $X \gets \mathcal{D}_S^n$. 
    
    Now we define a new ``corrupted'' distribution $\mathcal{C}_{S,n,t}$ on $\mathcal{X}^n$ as follows. It consists of $n-t$ independent uniformly random samples from $S$ and $t$ independent uniformly random samples from $[\ell^2]\setminus S$ in an independently uniformly random order.
    By construction, there is a coupling between $\mathcal{C}_{S,n,t}$ and $\mathcal{D}_S^n$ such that they always differ by $t$ replacements or $2t$ additions/removals -- i.e., the $\infty$-Wasserstein distance between the distributions is bounded by $2t$.
    Given $S$, it is easy to distinguish $\mathcal{C}_{S,n,t}$ from $\mathcal{D}_S^n$. However, if $S$ is unknown, these distributions are indistinguishable (provided $\ell$ is sufficiently large). 
    Our key claim is that, given a sample $\widetilde{X} \gets \mathcal{C}_{S,n,t}$ and oracle access to $f_{S,\nu,n-m}$ (and no additional information about $S$), it is hard to generate a query to the oracle that returns the value $\nu$:

    \begin{claim}\label{claim:corrupt}
        Let $g : \mathbb{Z}^* \to \mathbb{Z}^*$ be a (possibly randomised) function. 
        Let $S \subseteq [\ell^2]$ be uniformly random of size $|S|=\ell$.
        Let $\widetilde{X} \gets \mathcal{C}_{S,n,t}$ -- that is, $\widetilde{X}$ contains $n-t$ elements from $S$ and $t$ elements from $[\ell^2]\setminus S$ and is otherwise uniformly random.
        Let $Y = g(\widetilde{X})$. Then \[\pr{}{|Y|=n-m \text{ and } \forall i ~ Y_i \in S \text{ and } \forall i \ne j ~ Y_i \ne Y_j} \le \frac{{m \choose t}}{{n \choose t}} + \frac{n^2}{2\ell},\] assuming $\ell$ is sufficiently large and $0 \le t \le m \le n$.
    \end{claim}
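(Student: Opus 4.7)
The plan is to bound the target probability by reversing the order of sampling, conditioning on $\widetilde{X}$, and then splitting on whether $Y = g(\widetilde{X})$ consists entirely of values already appearing in $\widetilde{X}$. The intuition is that $g$ can only ``learn'' about $S$ through its input $\widetilde{X}$, so to output a $Y$ with $V_Y \subseteq S$ it essentially has to guess a subset of the ``clean'' (uncorrupted) values in $\widetilde{X}$; the corruption pattern turns out to be a uniformly random $t$-subset of $[n]$, so any guess of size $n-m$ avoids it with probability exactly ${m \choose t}/{n \choose t}$.

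The first step is to verify the following alternative generative description of $(\widetilde{X}, S)$, valid conditional on $\widetilde{X}$ having no repeated values (an event of probability at least $1 - n^2/(2\ell)$ by a union bound over pairs inside the ``good'' and ``corrupted'' blocks): first draw $\widetilde{X}$ as a uniformly random sequence of $n$ distinct values in $[\ell^2]$; then draw a uniformly random subset $C \subseteq [n]$ of size $t$, representing the positions of the corrupted samples; finally draw $S$ uniformly among $\ell$-subsets of $[\ell^2]$ that contain $\{\widetilde{X}_i : i \notin C\}$ and are disjoint from $\{\widetilde{X}_i : i \in C\}$. This is a standard symmetry argument exchanging the order in which $S$ and the corrupted positions are sampled, and it is the step I expect to require the most care, since one must simultaneously track positions and values when reversing the conditional structure.

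Next, I condition on $\widetilde{X}$ (with distinct value set $V$) and on the coins of $g$, so that $Y = g(\widetilde{X})$ and its value set $V_Y$ are deterministic; I may also assume $|V_Y| = n-m$, since otherwise the target event is empty. If $V_Y \subseteq V$, distinctness of $\widetilde{X}$ produces a unique $I \subseteq [n]$ of size $n-m$ with $\{\widetilde{X}_i : i \in I\} = V_Y$; the event $V_Y \subseteq S$ is then equivalent to $I \cap C = \emptyset$, which occurs with probability exactly ${m \choose t}/{n \choose t}$ over the uniform choice of $C$. If instead $V_Y \not\subseteq V$, any witness $v \in V_Y \setminus V$ must lie in $S$, which has probability $(\ell - n + t)/(\ell^2 - n) = O(1/\ell)$ by symmetry among the elements of $[\ell^2] \setminus V$ in the conditional distribution of $S$ given $\widetilde{X}$.

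Union bounding the two cases and adding the collision probability from the first step gives a total bound of ${m \choose t}/{n \choose t} + n^2/(2\ell) + O(1/\ell)$. For $\ell$ sufficiently large, the $O(1/\ell)$ slack is absorbed into the stated $n^2/(2\ell)$ term, yielding the claim. The distributional exchange is the only nontrivial step; everything else is elementary combinatorics and conditional probability.
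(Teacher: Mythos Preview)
Your argument is correct, and the core idea---conditioning on $\widetilde{X}$ and using that $Y$ is then determined while $S$ has a tractable posterior---is the same as the paper's. The difference is in how you handle the dependence on $|Y \cap \widetilde{X}|$. The paper computes $\Pr[Y \subseteq S \mid \widetilde{X}]$ exactly as a function $h(|Y \cap \widetilde{X}|)$, proves $h$ is monotone increasing, and concludes the worst case is $Y \subseteq \widetilde{X}$, giving $h(n-m)=\binom{m}{t}/\binom{n}{t}$ directly. You instead expose the corrupted-position set $C$ as a uniform $t$-subset of $[n]$ (a clean reformulation the paper does not make explicit), handle the case $V_Y \subseteq V$ exactly via $\Pr[I \cap C=\emptyset]=\binom{m}{t}/\binom{n}{t}$, and dispose of the case $V_Y \not\subseteq V$ with a crude $O(1/\ell)$ bound. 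Your route is more elementary (no monotonicity calculation), at the cost of the stray $O(1/\ell)$.

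One small correction: your final ``absorption'' step is not literally right as written---$n^2/(2\ell)+O(1/\ell)$ is strictly larger than $n^2/(2\ell)$, so the extra term does not disappear into it. The clean fix is to note that your two cases are mutually exclusive given $(\widetilde{X},\text{coins of }g)$, so the conditional probability is bounded by the \emph{maximum} of the two case bounds, not their sum. For $\ell$ sufficiently large (using $\binom{m}{t}/\binom{n}{t}\ge 1/\binom{n}{t}>0$), that maximum is $\binom{m}{t}/\binom{n}{t}$, and you recover the stated bound exactly.
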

    \begin{proof}
        For simplicity, we split the analysis into two cases depending on whether or not $\widetilde{X}$ has any collisions. Let $E$ denote the event that there are collisions and let $\overline{E}$ denote the event that there are no collisions. The probability of no collisions is 
        \begin{align}
            \pr{\widetilde{X}\gets\mathcal{C}_{S,n,t}}{\overline{E}} 
            &=
            \pr{\widetilde{X}\gets\mathcal{C}_{S,n,t}}{\forall i \ne j ~ \widetilde{X}_i \ne \widetilde{X}_j} \\
            &= \left(\prod_{i=0}^{n-t-1} \frac{|S|-i}{|S|}\right)\left(\prod_{j=0}^{t-1} \frac{|[\ell^2]\setminus S|-j}{|[\ell^2]\setminus S|}\right) \notag\\
            &\ge 1 - \sum_{i=0}^{n-t-1} \frac{i}{\ell} - \sum_{j=0}^{t-1} \frac{j}{\ell^2-\ell} \tag{Bernoulli's inequality} \\
            &\ge 1 - \frac{(n-t)^2}{2\ell} - \frac{t^2}{2\ell(\ell-1)} \ge 1 - \frac{n^2}{2\ell}. \label{eq:probnocoll}
        \end{align}
        Since the probability of a collision $\pr{}{E}\le n^2/2\ell$ is low (because $\ell$ is large), we focus on the case of no collisions.

        Now we take a Bayesian perspective. Conditioned on $\widetilde{X}$, the set $S$ and the output $Y$ are independent. We want to upper bound the probability of the event that $Y$ consists of $n-m$ distinct points all in $S$. We can consider a fixed $Y$; and we can assume $|Y|=n-m$ and $\forall i \ne j ~ Y_i \ne Y_j$ (since otherwise the probability of the event is zero).
        Given $\widetilde{X}$ (with no collisions -- i.e., $|\widetilde{X}|=n$), all we know about $S$ is that its intersection with $\widetilde{X}$ has size exactly $|S \cap \widetilde{X}|=n-t$; otherwise $S$ is uniformly random with size $|S|=\ell$.
        Now we can directly calculate the probability that $\forall i ~Y_i \in S$ based on the intersection between $Y$ and $\widetilde{X}$, namely
        \begin{align}
            \prc{S}{\forall i ~ Y_i \in S}{\overline{E}} &= \prc{S}{Y \subseteq S}{\overline{E}} \notag\\
            &= \prc{S \cap \widetilde{X}}{Y \cap \widetilde{X} \subseteq S \cap \widetilde{X}}{\overline{E}} \prc{S \setminus \widetilde{X}}{Y \setminus \widetilde{X} \subseteq S \setminus \widetilde{X}}{\overline{E}} \notag\\
            &= \frac{{n-|Y \cap \widetilde{X}| \choose n-t-|Y \cap \widetilde{X}|}}{{n \choose n-t}} \frac{{\ell^2-n - |Y \setminus \widetilde{X}| \choose \ell-(n-t) - |Y \setminus \widetilde{X}|}}{{\ell^2-n \choose \ell-(n-t)}} \notag\\
            &= \frac{{n-|Y \cap \widetilde{X}| \choose t}}{{n \choose t}} \frac{{\ell^2-n - |Y| + | Y \cap \widetilde{X}| \choose \ell^2-\ell-t}}{{\ell^2-n \choose \ell^2-\ell-t}} =: h(|Y \cap \widetilde{X}|).\label{eq:probcontain}
        \end{align}
        Next we show that the function $h$ defined above is increasing:
        For $k < n-t \le n$, we have
        \begin{align}
            \frac{h(k+1)}{h(k)} &= \frac{{n-(k+1) \choose t} {\ell^2 - \ell - |Y| + (k+1) \choose \ell^2-n-t}}{{n-k \choose t} {\ell^2 - n - |Y| + k \choose \ell^2-\ell-t}} \notag\\
            &= \frac{(n-k-t)(\ell^2-n-|Y|+k+1)}{(n-k)(\ell+t-n+k+1-|Y|)} \notag\\
            &\ge 1,
        \end{align}
        assuming $\ell$ is sufficiently large.
        Thus Equation \ref{eq:probcontain} is maximised when $k=|Y \cap \widetilde{X}| = |Y| = n-m$. Hence
        \[\prc{S}{\forall i ~ Y_i \in S}{\overline{E}} = \prc{S}{Y \subseteq S}{\overline{E}} \le \frac{{n-(n-m) \choose t}}{{n \choose t}} \frac{{\ell^2-n \choose \ell^2-\ell-t}}{{\ell^2-n \choose \ell^2-\ell-t}} = \frac{{m \choose t}}{{n \choose t}}.\]
        The above analysis assumes no collisions in $\widetilde{X}$. 
        Rather than carefully analyzing the case with collisions, we use the naive bound $\prc{S}{\forall i ~ Y_i \in S}{E} \le 1$ for this case. Combining with the bound of Equation \ref{eq:probnocoll} gives \[\pr{S}{\forall i ~ Y_i \in S} \le  \prc{S}{\forall i ~ Y_i \in S}{\overline{E}} \cdot \pr{}{\overline{E}} + \prc{S}{\forall i ~ Y_i \in S}{E} \cdot \pr{}{E} \le \frac{{m \choose t}}{{n \choose t}} \cdot 1 + 1 \cdot \frac{n^2}{2\ell},\] as required.
    \end{proof}
    It follows from Claim \ref{claim:corrupt} and a union bound that, if $M^{f_{S,\nu,n-m}}$ makes at most $k$ queries to the oracle $f_{S,\nu,n-m}$, then the probability that the oracle \emph{ever} returns something other than $0$ is at most $k \cdot \left(\frac{{m \choose t}}{{n \choose t}} + \frac{n^2}{2\ell}\right)$. Hence we can \emph{almost} simulate $M^{f_{S,\nu,n-m}}$ by running $M$ with an oracle that always returns $0$.
    Namely, for all $V \subseteq \mathcal{Y}$, \[\pr{S, \widetilde{X} \gets \mathcal{C}_{S,n,t}}{M^{f_{S,\nu,n-m}}(\widetilde{X}) \in V} \le \pr{S, \widetilde{X} \gets \mathcal{C}_{S,n,t}}{M^{0}(\widetilde{X}) \in V} + k \cdot \left(\frac{{m \choose t}}{{n \choose t}} + \frac{n^2}{2\ell}\right).\label{eq:nu-eta}\]
    Note that the right hand side of Equation \ref{eq:nu-eta} does \emph{not} depend on $\nu$. 
    Now we pick an arbitrary $\nu \in \mathcal{Y} \setminus \{0\}$ such that \[\pr{S, \widetilde{X} \gets \mathcal{C}_{S,n,t}}{M^{0}(\widetilde{X}) = \nu} \le \frac{1}{|\mathcal{Y} \setminus \{0\}|}.\]
    
    By group privacy (Lemma \ref{lem:group-privacy}), for all $V \subseteq \mathbb{Z}$ and all $S \subseteq [\ell^2]$ with $|S|=\ell$, we have
    \[\pr{X \gets \mathcal{D}_S^n}{M^{f_{S,\nu,n-m}}({X}) \in V} \le e^{2t\varepsilon} \pr{\widetilde{X} \gets \mathcal{C}_{S,n,t}}{M^{f_{S,\nu,n-m}}(\widetilde{X}) \in V} + \frac{e^{2t\varepsilon}-1}{e^\varepsilon-1} \delta.\]
    (This relies on the fact that we can couple $X \gets \mathcal{D}_S$ and $\widetilde{X} \gets \mathcal{C}_{S,n,t}$ such that \\$\pr{}{|X \setminus \widetilde{X}| = |\widetilde{X} \setminus X| = t} = 1$.)
    By our accuracy assumption, \[\pr{X \gets \mathcal{D}_S^n}{M^{f_{S,\nu,n-m}}({X}) = \nu } \ge \frac12.\] 
    Now we string together the above equations to obtain
    \[\frac12 \le e^{2t\varepsilon} \left( \frac{1}{|\mathcal{Y} \setminus \{0\}|} + k \cdot \left(\frac{{m \choose t}}{{n \choose t}} + \frac{n^2}{2\ell}\right) \right) + \frac{e^{2t\varepsilon}-1}{e^\varepsilon-1} \delta.\]
    At this point we can take the limit $\ell \to \infty$ to obtain
    \[\frac12 \le \frac{e^{2t\varepsilon}}{|\mathcal{Y} \setminus \{0\}|} + e^{2t\varepsilon} \cdot k \cdot \frac{{m \choose t}}{{n \choose t}} + \frac{e^{2t\varepsilon}-1}{e^\varepsilon-1} \delta \le e^{2t\varepsilon} \cdot \left( \frac{1}{|\mathcal{Y}|-1} + k \cdot \frac{{m \choose t}}{{n \choose t}} + \frac{\delta}{e^\varepsilon-1} \right),\]
    which rearranges to \[k \ge \frac{{n \choose t}}{{m \choose t}} \cdot \left(\frac12 e^{-2t\varepsilon} - \frac{\delta}{e^\varepsilon-1} - \frac{1}{|\mathcal{Y}|-1} \right),\] as required.
\end{proof}


The main gap between our upper and lower bounds is the dependence on the size of the range $\mathcal{Y}$.
In our upper bound, we have $t=\frac1\varepsilon \log(1/\delta) \exp(O(\log^*|\mathcal{Y}|))$.
In our lower bound, we have no such dependence. (The size of the range $\mathcal{Y}$ does appear in Theorem \ref{thm:lower-general}. However, its influence is limited. Namely, if we take $|\mathcal{Y}| \to \infty$ we do \emph{not} get $k \to \infty$ in the lower bound.) 

That said, some dependence on the size of the range $\mathcal{Y}$ is necessary:  \citet{bun2015differentially,alon2019private} show that the interior point problem -- basically, computing an approximate median to constant accuracy -- requires $n=\Omega(\log^*|\mathcal{Y}|)$ samples under $(\varepsilon=0.1,\delta=0.01/n^2)$-differential privacy. 
Our algorithm can be used to compute an approximate median under differential privacy.
Ergo the $n=\Omega(\log^*|\mathcal{Y}|)$ lower bound applies to our algorithm.
However, this only gives us a lower bound on the number of data points $n$ and does not seem to imply anything nontrivial about the number of queries $k$.

\section{Discussion} \label{sec:discussion}

To recap: Our main result (Theorem \ref{thm:main}) provides a differentially private algorithm which takes a real-valued black-box function $f$ and a private dataset $x$ (consisting of i.i.d.~samples from some distribution $\mathcal{D}$) and evaluates the function on $k$ subsets of the input dataset and then outputs a statistical estimate $y \approx f(\mathcal{D}^{n-m})$ for the value of the function. 
Our result trades off between oracle efficiency (i.e., how many subsets we evaluate the function on) and statistical efficiency (i.e., the size of each of those subsets).
We also prove a lower bound (Theorem \ref{thm:lower}) that shows that our upper bound is roughly optimal. Namely, the combinatorial term appearing in the oracle complexity $k \approx \frac{{n \choose t}}{{m \choose t}}$ is necessary, where $t$ depends on the differential privacy parameters.

\subsection{Interpretation}
By varying the parameter $m$ (with $n \ge m \ge t \approx \Theta(\frac1\varepsilon\log(1/\delta))$) our result interpolates between sample-and-aggregate \cite{NRS07} and more recent results \cite{FangDY22,LRSS25}.
While we believe that the entire tradeoff curve is interesting, we point out a few illustrative values along the curve in Table \ref{tab:tradeoff} to aid understanding. We also plot some numerical values of the upper and lower bounds in Figure \ref{fig:upperlowerbounds}.

Figure \ref{fig:upperlowerbounds} shows that the number of evaluations $k=C(n,m,t)$ can become extremely large. In particular, the practical parameter regime is where $m$ is close to $n$.

\begin{table}[h]
    \centering
    \begin{tabular}{|c|c|c|c|}
        \hline
         &Subset size & Number of evaluations &  \\
         &$n-m$ & $k=C(n,m,t)$ & Note\\
         &(larger is better) & (smaller is better) & \\
         \hline \hline
         1&$\frac{n}{t+1}$ & $t+1$ & Cf. sample-and-aggregate \cite{NRS07} \\
         \hline
         2&$\frac{2n}{t+2}$ & $\le \frac{(t+2)(t+1)}{2} = O(t^2)$ & (\ref{eq:m=t},\ref{eq:chunks})\\
         \hline         
         3&$\frac{cn}{t+c}$ & $\le O(\min\{ e^{t+c} \cdot t \log m, t^c \} )$ & $c \in \mathbb{N}$ \\
         \hline
         4&$n-ct$ & $\le O\left(\left(\frac{n}{ct}\right)^t \cdot ct\right)$ & $c \in \mathbb{N}$ \\
         \hline
         5&$n-t$ & ${n \choose t} = O(n/t)^t$ & \cite[Cf.][]{LRSS25} \\
         \hline
         6& & $< \frac{{n \choose t}}{{m \choose t}} \left(1 + \log{ m \choose t} \right) +1$ & general upper bound ($n \ge m \ge t$) \\
         \hline
    \end{tabular}
    \caption{Example parameter choices for Theorem \ref{thm:main}. This shows the tradeoff between the number of evaluations of the function $k$ (i.e., oracle complexity) and the size of the subsets on which we evaluate it (which determines statistical efficiency). Here $t=\frac{1}{\varepsilon} \log(1/\delta) \exp(O(\log^* |\mathcal{Y}|))$ depends on the privacy parameters $\varepsilon, \delta>0$ and the size of the range $\mathcal{Y}$ of the function.}
    \label{tab:tradeoff}
\end{table}

Perhaps the most practically relevant instantiations of our result are given in Lines 2 and 3 of Table \ref{tab:tradeoff}. Compared to Line 1, these show that we can increase the subset size by a constant factor while only suffering a polynomial blowup in the number of evaluations. 
In particular, we can roughly double the amount of data available in each evaluation at the expense of only a quadratic blowup in the number of evaluations.

On the other end of the tradeoff curve, we can compare Lines 4 and 5 of Table \ref{tab:tradeoff}. The amount of data that is ``sacrificed'' for differential privacy increases by a factor of $c$. This decreases the number of evaluations by a multiplicative factor of $c^t$, which is significant, but the large $n^t$ factor remains.

\subsection{Limitations \& Further Work}\label{sec:limitations}
As mentioned in the introduction (page \pageref{nb:computational-efficiency}, bullet point 5), the main limitation of our algorithm is that, while we bound the oracle complexity (i.e., the number of evaluations of the function), we do not account for the computational cost of choosing the subsets of the input on which to evaluate the function and of processing the values returned by the function.
Note that in many cases evaluating the function can be quite expensive -- e.g., in the PATE framework \cite{pate1,pate2}, each function evaluation corresponds to training a machine learning model -- thus it is reasonable to focus on minimizing the number of evaluations.

Choosing the subsets amounts to generating a covering design.
As mentioned in Section \ref{sec:designs}, we do not have general-purpose optimal existential results for covering designs, let alone efficient constructions. (Although there are a lot of special-purpose constructions in the literature.)
However, a (slightly suboptimal) covering design can be constructed (with high probability) by simply taking enough random subsets of the appropriate size (see the proof of Proposition \ref{prop:covering-size} and Footnote \ref{fn:cover-prob}). We must account for the failure probability, but otherwise this does give an efficient method for choosing the subsets.\footnote{Verifying that a given collection of sets is a covering design is co-NP-hard. Thus this failure probability cannot easily be ``checked away.'' However, we remark that the privacy guarantee of our algorithm still holds if the sets do not form a covering design; namely Lemma \ref{lem:mon} holds for any choice of subsets $S_1, \cdots, S_k \subseteq [n]$.}

Next we consider the computational complexity of processing the function values.
Recall Algorithm \ref{alg:estimate}: 
We have a $(n,m,t)$-covering design $S_1, \cdots, S_k \subseteq [n]$ and we evaluate $f(x_{[n]\setminus S_i})$ for each $i \in [k]$.
From these values, we must compute \[g(x') := \max\{ f(x'_{[n] \setminus S_i}) : i \in [k] , |x'_{[n]\setminus S_i}|=n-m \}\] where $x' \subseteq x$.
Then the shifted inverse mechanism (see Appendix \ref{app:shi} for details) computes \[\ell(x,y) := \min \{ |x \setminus \widetilde{x}| : \widetilde{x} \subseteq x, g(\widetilde{x}) \le y\}.\]
This task can be framed as a decision problem: 
\begin{align}
    \!\!\ell(x,y) \le v \!&\iff\! \exists \widetilde{x} \subseteq x ~~ |x \setminus \widetilde{x} | \le v \wedge g(\widetilde{x}) \le y \notag\\
    &\iff\! \exists T \subseteq [n] ~~ | T | \le v \wedge g(x_{[n] \setminus T}) \le y \notag\\
    &\iff\! \exists T \subseteq [n] ~~ | T | \le v \wedge \big( \forall i \in [k] ~ |(x_{[n] \setminus T})_{[n] \setminus S_i}| = n-|S_i| \implies f(x_{[n] \setminus S_i}) \le y  \big) \notag\\
    &\iff\! \exists T \subseteq [n] ~~ | T | \le v \wedge \big( \forall i \in [k] ~ T \subseteq S_i \implies f(x_{[n] \setminus S_i}) \le y  \big) \notag\\
    &\iff\! \exists T \subseteq [n] ~~ | T | \le v \wedge \big( \forall i \in [k] ~ f(x_{[n] \setminus S_i}) > y \implies T \not\subseteq S_i \big) \notag\\
    &\iff\! \exists T \subseteq [n] ~~ | T | \le v \wedge \big( \forall i \in [k] ~ f(x_{[n] \setminus S_i}) > y \implies T \cap ([n] \setminus S_i) \ne \emptyset \big) . \label{eq:decision}
\end{align}
In words, proving $\ell(x,y) \le v$ is equivalent to finding a set $T$ of at most $v$ points such that at least one point lies in each set of the form $[n] \setminus S_i$ with $f(x_{[n] \setminus S_i})>y$ and $i \in [k]$.
That is, the set $T$ is a ``hitting set'' for all the input sets corresponding to function values that are larger than $y$. In general, finding a small hitting set is equivalent to the set cover problem, which is NP-complete \cite{Karp1972}.

The fact that processing the function values reduces to an NP-complete problem suggests that our algorithm cannot be made computationally efficient.
However, this reduction is going the wrong way to make that suggestion formal -- that is, it does \emph{not} prove NP-hardness. Hope is not lost!

If the collection of sets $S_1, \cdots, S_k$ is arbitrary and the function $f$ and input $x$ are arbitrary, then Equation \ref{eq:decision} can give rise to arbitrary instances of the hitting set problem, which is NP-complete. Thus, to avoid NP-hardness, we need to rely on some of these parameters not being arbitrary.

In the black-box setting the function $f$ is indeed arbitrary. 
The input $x$ is i.i.d., but from an arbitrary distribution. 
Crucially, the sets $S_1, \cdots, S_k$ are \emph{not} arbitrary; our algorithm can choose them, subject only to the constraint that they form a covering design.
This points to an avenue for making the algorithm computationally efficient -- construct a covering design with additional structural properties that make the decision problem in Equation \ref{eq:decision} easy. We formulate this precisely as an open problem:

\begin{openproblem}
    Construct a pair of algorithms $\mathsf{Gen}$ and $\mathsf{Eval}$ with the following properties.
    \begin{itemize}
        \item Given integers $n \ge m \ge t$, $\mathsf{Gen}$ produces a $(n,m,t)$-covering design (Definition \ref{def:covering_design}) $S_1, \cdots, S_k$ (and also outputs some $\mathsf{context}$ to pass to $\mathsf{Eval}$). That is, \[(S_1, \cdots, S_k, \mathsf{context}) \gets \mathsf{Gen}(n,m,t),\label{eq:op1}\] where $S_1, \cdots, S_k \subseteq [n]$ and $\forall i \in [k] ~ |S_i|=m$ and $\forall T \subseteq [n] ~ |T|\le t \implies \exists i \in [k] ~ T \subseteq S_i$.
        \item The number of sets $k$ should be not too large; ideally $k \le C(n,m,t) \cdot \mathrm{poly}(n)$, where $C(n,m,t)$ is the smallest possible $k$.
        \item Given $v \in [n]$ and $I \subseteq [k]$, $\mathsf{Eval}$ indicates whether the collection $[n] \setminus S_i$ for $i \in I$ has a hitting set of size $\le v$. That is, \[\mathsf{Eval}(v, I, \mathsf{context}) = \mathsf{true} \iff \exists T \subseteq [n] ~ |T| \le v \wedge \forall i \in I ~ T \cap ([n] \setminus S_i) \ne \emptyset.\label{eq:op2}\]
        \item Both $\mathsf{Gen}$ and $\mathsf{Eval}$ should be computationally efficient for the parameter regime of interest. Ideally, their runtime should be polynomial in the parameter $n$ and the covering size $k$. (Note that $k \ge C(n,m,t)$ may be exponential in $n$ per Proposition \ref{prop:covering-size}.)
    \end{itemize}
    The algorithms  $\mathsf{Gen}$ and $\mathsf{Eval}$ may be randomised; it suffices for the guarantees in Equations \ref{eq:op1} and \ref{eq:op2} to hold with high probability, although the dependence of $k$ and the algorithms' runtimes on the failure probability should be polylogarithmic.
\end{openproblem}

Tying the open problem back to our application:
Algorithm \ref{alg:estimate} would first call $\mathsf{Gen}$ to produce the sets $S_1, \cdots, S_k$ (and $\mathsf{context}$). Then it would evaluate $f(x_{[n] \setminus S_i})$ for each $i \in [k]$. Finally, it would run the shifted inverse mechanism (see Appendix \ref{app:shi} for details); this requires evaluating $\ell(x,y)$ for various values of $y$, which depends on $f$. Per Equations \ref{eq:decision} and \ref{eq:op2}, \[\ell(x,y) \le v \iff \mathsf{Eval}(v,I,\mathsf{context})=\mathsf{true}, ~\text{ where }~ I = \{ i \in [k] : f(x_{[n] \setminus S_i}) > y \}.\]
Thus $\ell(x,y)$ can be evaluated by calling $\mathsf{Eval}$ and performing binary search on $v \in \{0,1,2,\cdots,n\}$. Overall, to implement Theorem \ref{thm:main-formal-cdp} the runtime of Algorithm \ref{alg:estimate} would be dominated by the runtime of one call to $\mathsf{Gen}$, plus $k$ calls to $f$, plus $O(\log(n) \cdot \log|\mathcal{Y}|)$ calls to $\mathsf{Eval}$.\footnote{The concentrated differentially private version of our algorithm in Theorem \ref{thm:main-formal-cdp} performs binary search on $y \in \mathcal{Y}$ (see Appendix \ref{app:noisy-binary-search} for details) and so we only evaluate $\ell(x,y)$ for $O(\log|\mathcal{Y}|)$ values of $y$. The other versions of our algorithm potentially require evaluating $\ell(x,y)$ for more values of $y$.}

Finally, we remark that, although set cover is NP-complete, the runtime is exponential only in the size of the covering, which should be modest (at most $t$). That is, a na\"ive implementation of $\mathsf{Eval}$ would run in time $O(n^v \cdot k)$.

\newpage

\addcontentsline{toc}{section}{References}
\begin{small}
\printbibliography    
\end{small}

\newpage

\appendix

\section{Shifted Inverse Mechanism}\label{app:shi}

In Section \ref{sec:shi}, we stated the properties of the shifted inverse mechanism of \citet{FangDY22}, which is integral to our algorithm.
We now briefly review how this algorithm works, following the presentation of \citet{DPorg-down-sensitivity}.

The key idea behind the shifted inverse mechanism is the following transformation from an arbitrary monotone function to a low-sensitivity loss function. The benefit of this transformation is that low-sensitivity functions are something we know how to work with in a differentially private manner.

\begin{proposition}[{\cite[Lemma 4.1]{FangDY22},\cite[Proposition 3]{DPorg-down-sensitivity},\cite[Lemma 3.4]{LRSS25}}]\label{prop:shi-loss}
    Let $g : \mathcal{X}^* \to \mathbb{R}$ be monotone -- i.e., $x' \subseteq x \implies g(x') \le g(x)$.
    Define $\ell : \mathcal{X}^* \times \mathbb{R} \to \mathbb{Z} \cup \{\infty\}$ by
    \[\ell(x,y) := \min \{ |x \setminus \widetilde{x}| : \widetilde{x} \subseteq x, g(\widetilde{x}) \le y\}.\label{eq:shi}\]
    Then $\ell$ has sensitivity $1$ in its first argument. That is, $|\ell(x,y)-\ell(x',y)|\le |x \setminus x'|+|x' \setminus x|$ for all $x,x' \in \mathcal{X}^n$ and all $y \in \mathbb{R}$ with $y \ge g(\emptyset)$.
\end{proposition}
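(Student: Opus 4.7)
The plan is to first establish the sensitivity bound $|\ell(x,y) - \ell(x',y)| \le 1$ for neighbouring inputs (those with $|x \setminus x'| + |x' \setminus x| = 1$), and then upgrade to arbitrary pairs $x,x'$ by walking along a chain of neighbours of length $|x \setminus x'| + |x' \setminus x|$ and applying the triangle inequality. The hypothesis $y \ge g(\emptyset)$ ensures $\ell(x,y)$ is well-defined and finite, since $\widetilde{x} = \emptyset$ is always a feasible witness: $\emptyset \subseteq x$ and $g(\emptyset) \le y$.

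For the neighbour case, by symmetry I may assume $x' \subseteq x$ with $|x \setminus x'| = 1$. The inequality $\ell(x,y) \le \ell(x',y) + 1$ is the easy direction: any optimiser $\widetilde{x}'$ for $\ell(x',y)$ satisfies $\widetilde{x}' \subseteq x' \subseteq x$ and $g(\widetilde{x}') \le y$, so it is feasible for $\ell(x,y)$, and $|x \setminus \widetilde{x}'| = |x \setminus x'| + |x' \setminus \widetilde{x}'| = 1 + \ell(x',y)$. For the reverse inequality $\ell(x',y) \le \ell(x,y)$, let $\widetilde{x}$ be an optimiser for $\ell(x,y)$ and set $\widetilde{x}' := \widetilde{x} \cap x'$. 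Then $\widetilde{x}' \subseteq x'$ and $\widetilde{x}' \subseteq \widetilde{x}$, so monotonicity of $g$ yields $g(\widetilde{x}') \le g(\widetilde{x}) \le y$, making $\widetilde{x}'$ feasible for $\ell(x',y)$. A direct case check on indices (using the definitions of $\cap$ and $\setminus$ from Section~2.1) gives $x' \setminus \widetilde{x}' = x' \setminus \widetilde{x}$, and since $x' \subseteq x$ we also have $x' \setminus \widetilde{x} \subseteq x \setminus \widetilde{x}$, so $\ell(x',y) \le |x' \setminus \widetilde{x}'| \le |x \setminus \widetilde{x}| = \ell(x,y)$.

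The one subtle step, and where I expect the main obstacle to lie, is the reverse inequality: a naive restriction of a witness from $x$ to $x'$ need not stay feasible for $\ell(x',y)$. The trick is that intersecting with $x'$ only shrinks the witness, so monotonicity of $g$ preserves the bound $g(\widetilde{x}') \le y$; this is precisely the reason the transformation requires monotonicity in the hypothesis. Everything else is bookkeeping with the tuple-as-set notation and the triangle-inequality extension to general $x,x'$ via a chain that deletes the elements of $x \setminus x'$ one at a time and then inserts the elements of $x' \setminus x$ one at a time.
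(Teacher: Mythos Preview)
The paper does not include its own proof of this proposition; it is stated with citations to prior work (\cite{FangDY22}, \cite{DPorg-down-sensitivity}, \cite{LRSS25}) and then used as a black box. Your argument is correct and is essentially the standard proof one finds in those references: reduce to the neighbouring case $x' \subseteq x$ with $|x\setminus x'|=1$, show $\ell(x,y)\le \ell(x',y)+1$ by reusing the $x'$-optimiser inside $x$, show $\ell(x',y)\le \ell(x,y)$ by intersecting the $x$-optimiser with $x'$ and invoking monotonicity of $g$, and then chain neighbours for the general bound. There is nothing to compare against in the paper itself, and your write-up would serve as a perfectly acceptable proof here.
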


The transformation \eqref{eq:shi} is invertible, namely \[ g(x) = \min \{ y \in \mathbb{R} : \ell(x,y) = 0 \}\label{eq:shi-inversion}\] for all $x \in \mathcal{X}^*$. Note that $\ell(x,y) \ge 0$ is a decreasing function of $y$.

Essentially, the shifted inverse mechanism works by performing this inversion \eqref{eq:shi-inversion}. Of course, under the constraint of differential privacy, we can only approximate $\ell(x,y)$ and hence we can only approximately perform the inversion. Performing this inversion is roughly equivalent to computing a differentially private approximate median, which is a well-studied task.

To be precise, the shifted inverse mechanism finds $y$ satisfying two conditions:
\begin{itemize}
    \item First, $\ell(x,y) \le t$ for some tolerance $t>0$, which is equivalent to $y \ge \min \{g(x') : x' \subseteq x, |x'| \ge |x|-t\}$.
    \item Informally, we want $\ell(x,y)>0$, which is equivalent to $y < g(x)$. 
    Formally, define \[\overline{\ell}(x,y) := \min \{ \ell(x,y-\eta) :\eta>0 \} = \min \{ |x \setminus \widetilde{x}| : \widetilde{x} \subseteq x, g(\widetilde{x}) < y\}.\label{eq:shi-loss2}\]
    Then our second condition is $\overline{\ell}(x,y)>0$, which is equivalent to $y \le g(x)$. Note that $\overline{\ell}$ has the same properties as $\ell$, namely it has sensitivity $1$ in its first argument and is decreasing in its second argument.
\end{itemize}
Note that the shifted inverse mechanism \emph{underestimates} $g(x)$.
Some bias in differentially private estimation is inherent \cite{kamath2025bias}; the negative direction of the bias arises from the fact that we are looking at down-local algorithms (i.e., we are removing some input elements) and from the monotonicity of $g$.

There are several differentially private ways to approximately implement the inversion \eqref{eq:shi-inversion}, which we review next. These lead to the various forms of the shifted inverse mechanism in Section \ref{sec:shi}.

Note that, for the differentially private inversion to work, we must restrict to a finite search space \citep{bun2015differentially,alon2019private}. Thus we assume that the underlying monotone function $g$ has a finite range $\mathcal{Y} \subseteq \mathbb{R}$.

\subsection{Pure DP -- Theorem \ref{thm:shi-pdp}}
    The simplest implementation of the shifted inverse mechanism is to apply the exponential mechanism \cite{mcsherry2007mechanism}. 
    Roughly, we want to find $y \in \mathcal{Y}$ such that $0<\ell(x,y)<2\tau$, where $\tau>0$ is an appropriately-chosen offset. The exponential mechanism can do this by minimising $|\ell(x,y)-\tau|$, which still has sensitivity $1$.

    However, $\ell$ is not a continuous function, so there may not exist \emph{any} $y$ such that $0<\ell(x,y)<2\tau$. Thus we need to be a bit more careful:
    For $\tau \in \mathbb{N}$, define \[\widehat{\ell}_\tau(x,y) := \max\{ \ell(x,y)-\tau, \tau-\overline{\ell}(x,y) \},\label{eq:quasiconcave}\] where $\ell(x,y) := \min \{ |x \setminus \widetilde{x}| : \widetilde{x} \subseteq x, g(\widetilde{x}) \le y\}$ and $\overline{\ell}(x,y) := \min \{ |x \setminus \widetilde{x}| : \widetilde{x} \subseteq x, g(\widetilde{x}) < y\} $ are as in Equations \ref{eq:shi} and \ref{eq:shi-loss2}.
    Intuitively, $\widehat{\ell}_\tau(x,y) \approx |\ell(x,y)-\tau|$, but we are guaranteed that there exists \emph{some} $y_*$ such that $\widehat{\ell}_\tau(x,y_*) \le 0$, namely for $y_* = \min\{ g(x') : x' \subseteq x, |x \setminus x'| = \tau \}$. Since $\widehat{\ell}_\tau$ has sensitivity $1$ in its first parameter, we can apply the exponential mechanism: That is, \[\forall x \in \mathcal{X}^* ~ \forall y \in \mathcal{Y} ~~~~~ \pr{}{M(x)=y} = \frac{\exp\big(-\frac\varepsilon2 \widehat{\ell}_\tau(x,y)\big)}{\sum_{\hat{y}\in\mathcal{Y}}\exp\big(-\frac\varepsilon2 \widehat{\ell}_\tau(x,\hat{y})\big)}\] defines a $(\varepsilon,0)$-differentially private algorithm $M : \mathcal{X}^* \to \mathcal{Y}$. 
    The exponential mechanism guarantees that \cite[Theorem 3.11]{dwork2014algorithmic} \[\pr{M}{\widehat{\ell}_\tau(x,M(x)) < \min_{y \in \mathcal{Y}} \widehat{\ell}_\tau(x,y) + \frac2\varepsilon\log(|\mathcal{Y}|/\beta)} \ge 1-\beta.\]
    Since $\min_{y \in \mathcal{Y}} \widehat{\ell}_\tau(x,y) \le 0$, setting $\tau=\lceil\frac2\varepsilon\log(|\mathcal{Y}|/\beta)\rceil$ yields the conclusion that, with probability at least $1-\beta$, we have $\tau-\overline{\ell}(x,M(x)) < \frac2\varepsilon\log(|\mathcal{Y}|/\beta)$ and $\ell(x,M(x)) - \tau < \frac2\varepsilon\log(|\mathcal{Y}|/\beta)$. 
    Now $\tau-\overline{\ell}(x,M(x)) < \frac2\varepsilon\log(|\mathcal{Y}|/\beta)$ implies $\overline{\ell}(x,M(x))>0$, which implies $M(x) \le g(x)$.
    Next $\ell(x,M(x)) - \tau < \frac2\varepsilon\log(|\mathcal{Y}|/\beta)$ implies $\ell(x,M(x)) \le 2\tau$, which implies $M(x) \ge \min\{ g(x') : x' \subseteq x, |x'| \ge |x|-2\tau\}$, as required.
    
\subsection{Approximate DP -- Theorem \ref{thm:shi-adp}}
    The main limitation of the exponential mechanism is its logarithmic dependence on the size of the search space $\mathcal{Y}$ -- i.e., $t = O(\log|\mathcal{Y}|)$.
    We can improve this by relaxing to approximate differential privacy (i.e, $(\varepsilon,\delta)$-differential privacy with $\delta>0$) and by exploiting the fact that $\ell(x,y)$ is a decreasing function of $y$. (The exponential mechanism does not exploit this structure.)
    At this point we can apply sophisticated algorithms from the literature. 

    To summarize, we have $\ell(x,y)$, which is sensitivity-1 in the private input $x$ and decreasing in the other input $y$, and our goal is to privately find $y_i \in \mathcal{Y}$ (where $\mathcal{Y} = \{y_1 < y_2 < \cdots < y_{|\mathcal{Y}|}\} \subseteq \mathbb{R}$) such that $\ell(x,y_i)\le t$ and $\ell(x,y_{i-1})>0$, where $y_{i-1}$ is the element in $\mathcal{Y}$ that is immediately before $y_i$ in sorted order.
    This formulation is exactly the \emph{generalized interior point problem} \citep{bun2018composable}. \citet{bun2018composable} sketch\footnote{Unfortunately, they do not provide a theorem statement for approximate differential privacy.} an algorithm for this task with $t = \frac1\varepsilon \log(1/\delta) \exp(O(\log^*|\mathcal{Y}|))$, as required. 

    Alternatively, we can formulate the problem as minimising a quasi-convex function: The function $\widehat{\ell}_\tau(x,y)$ from Equation \ref{eq:quasiconcave} has sensitivity $1$ in the private input $x$ and, in terms of the other input $y$, it is quasi-convex (i.e., decreasing-then-increasing). And our goal is to find an approximate minimiser $y \in \mathcal{Y}$. This is the formulation\footnote{There are some additional differences between their formulation and our formulation. E.g., they maximise a quasi-concave function; we minimise a quasi-convex function. Their statement \cite[Theorem 4.2]{cohen2023optimal} only guarantees success with probability $\ge 9/10$; this can be increased to $1$ either by modifying their algorithm or using generic reductions \cite{liu2019private,papernothyperparameter,DPorg-fail-prob}.} of \citet{cohen2023optimal} and they provide an algorithm to find an approximate minimiser with excess loss $t = \frac1\varepsilon \log(1/\delta) \exp(O(\log^*|\mathcal{Y}|))$, as required.
    
\subsection{Concentrated DP -- Theorem \ref{thm:shi-cdp}}
    The aforementioned sophisticated algorithms achieve a very good dependence on the size of the output space $\mathcal{Y}$, namely $t=\exp(O(\log^*|\mathcal{Y}|))$. The iterated logarithm is a function that is constant for all practical purposes; although it is, in theory, unbounded.
    However, these sophisticated algorithms are quite complicated and, to the best of our knowledge, have never been implemented. Furthermore, approximate differential privacy (i.e., $\delta>0$) can be undesirable, since it permits arbitrary privacy failures with probability $\le \delta$.
    Thus we also consider concentrated differential privacy \cite{dwork2016concentrated,bun2016concentrated}, which is a relaxation of pure differential privacy that does not permit arbitrary privacy failures; it also tends to correspond to more practical algorithms.

    A natural algorithm for performing the inversion \eqref{eq:shi-inversion} is binary search. To make this differentially private, we must use some form of \emph{noisy} binary search, which we discuss next.
    Noisy binary search gives us a $t=O(\sqrt{\log|\mathcal{Y}|})$ dependence on the size of the output space $\mathcal{Y}$.
    To the best of our knowledge, the specific guarantee in Theorem \ref{thm:shi-cdp} does not appear in the literature; hence we give more detail.

\subsection{Noisy Binary Search} \label{app:noisy-binary-search}
Noisy binary search appears frequently in the differential privacy literature \cite{blum2013learning,bun2017make,feldman2017generalization,smith2021non,aamand2025lightweight}.

\paragraph{Problem Statement:}
The setting is that we have a function $\ell : \mathcal{X}^* \times \mathcal{Y} \to \mathbb{R}$ that is low sensitivity in its first argument and decreasing in its second argument. That is, $|\ell(x,y)-\ell(x',y)| \le |x \setminus x'| + |x' \setminus x|$ and $y \le y' \implies \ell(x,y) \ge \ell(x,y')$. Informally, the goal is, given a private input $x$, to find $y$ such that $\ell(x,y) \approx \tau$ for some target value $\tau$.
To make this tractable we must restrict the search space $\mathcal{Y} \subseteq \mathbb{R}$ to be finite; namely, let $y_1 \le y_2 \le \cdots \le y_{|\mathcal{Y}|}$ be a sorted enumeration of the search space $\mathcal{Y}$.
To be precise, our goal is to find an index $i$ such that $\ell(x,y_{i-1}) > \tau - \eta$ and $\ell(x,y_i) < \tau + \eta$, where $i$ ranges from $i=1$ (in which case we define $\ell(x,y_{i-1})=\infty$) to $i=|\mathcal{Y}|+1$ (in which case we define $\ell(x,y_i)=-\infty$). Here $\eta>0$ is some tolerance. The output $i$ must be differentially private in terms of the input $x$.

\paragraph{Na\"ive Solution:}
We can find an appropriate index $i$ via binary search using $\log_2|\mathcal{Y}|$ steps, each time comparing the function value $\ell(x,y_i)$ to the target value $\tau$.
To ensure concentrated differential privacy, we add Gaussian noise $\mathcal{N}(0,\sigma^2)$ to the function value $\ell(x,y_i)$. This means that each comparison may be incorrect.
The parameter $\eta>0$ allows us to tolerate some amount of error.
By composition over $\log_2|\mathcal{Y}|$ steps, the scale of the noise grows as $\sigma = O(\sqrt{\log|\mathcal{Y}|})$.
Na\"ively, we must take a union bound over all $O(\log|\mathcal{Y}|)$ steps, which adds a $\sqrt{\log\log|\mathcal{Y}|}$ factor to our error guarantee -- i.e., $\eta = O(\sigma \cdot \sqrt{\log \log |\mathcal{Y}|}) = O(\sqrt{\log|\mathcal{Y}| \cdot \log \log |\mathcal{Y}|})$. (This is because the maximum of $k$ Gaussians is $\sqrt{\log k}$ standard deviations above the mean.) Fortunately, we can remove this asymptotic factor.

There are two ways to reduce this union bound factor: 
We could add non-independent noise that is carefully tailored to reduce the probability of one value being large \cite{Steinke_Ullman_2017,ganesh2021privately,dagan2022bounded,ghazi2021avoiding}. 
Alternatively, we can modify the binary search procedure itself to be noise-tolerant.

\paragraph{Binary Search Over Biased Coins:}
There is a rich literature on noise-tolerant versions of binary search, although most of this work considers settings that are not directly relevant to our setting.

\citet{karp2007noisy} consider a setting in which there are $m$ biased coins. The coins are sorted by bias, but otherwise the biases are unknown other than what we can learn by flipping the coins. The goal is to find a nearly unbiased coin by flipping the coins as few times as possible.\footnote{Each coin can be flipped multiple times and the outcomes are all independent. If no unbiased coin exists, the goal is to find a successive pair of coins whose biases are approximately on opposite sides of the unbiased threshold. In general, we can search of an arbitrary bias (instead of an unbiased coin).}
Our setting can be reduced to their setting.

\citet{karp2007noisy} give an algorithm that flips $O(\log m)$ coins and has a constant probability of success. (Obviously, the probability of success can be boosted by repetition.)
\citet{gretta2024sharp} give an improved algorithm with better constants and high-probability success bounds. 

\begin{theorem}[{\cite[Theorem 1.1]{gretta2024sharp}}]\label{thm:binarycoins}
    Let $\beta,\gamma\in(0,1/4)$ and $m \in \mathbb{N}$.
    Then there exists \[q = \frac{\log_2(m) + O\big((\log(m))^{2/3} (\log(1/\beta))^{1/3} + \log(1/\beta)\big)}{1 - H(1/2-\gamma)} \label{eq:noisy-bs}\] and an algorithm with the following properties. Here $H(p):=p\log_2(1/p)+(1-p)\log_2(1/(1-p))$ is the binary entropy function.
    
    Let $0 = p_0 \le p_1 \le p_2 \le \cdots \le p_m \le p_{m+1} = 1$.
    The algorithm has access to an oracle that, given an index $i \in [m]$, returns an independent sample from $\mathsf{Bernoulli}(p_i)$; otherwise the algorithm does not have access to $p_1, \cdots, p_m$.
    The algorithm makes $q$ queries to this oracle and, with probability at least $1-\beta$, returns $i \in [m+1]$ with $[p_{i-1},p_i] \cap (1/2-\gamma,1/2+\gamma) \ne \emptyset$.
\end{theorem}

We can simplify the bound \eqref{eq:noisy-bs} to $q\le O(\log(m/\beta)/\gamma^2)$.
\citet{gretta2024sharp} state a more general form of the result in which the target bias of $1/2$ can be set differently. The above special case of their result suffices for our application.

In our setting, the oracle returns a noisy value $V = \ell(x,y_i)+\mathcal{N}(0,\sigma^2)$. We can then compare this noisy value to the target threshold $\tau$. The binary indicator variable $\mathbb{I}[V>\tau]$ is then a Bernoulli random variable (i.e., a coin flip) with expectation \[\pr{}{\ell(x,y_i)+\mathcal{N}(0,\sigma^2) > \tau} = \pr{}{\mathcal{N}(0,1) > \frac1\sigma\big(\tau - \ell(x,y_i)\big)} = \frac12 + \Theta\left(\frac{\ell(x,y_i)-\tau}{\sigma}\right),\label{eq:threshold}\] where the asymptotic expression holds for $\ell(x,y_i) \approx \tau$.
Thresholding a noisy real value to a binary indicator loses information, but this suffices for our application. An avenue for future work is to improve noisy binary search to fully exploit the information given by noisy values. 

\begin{proposition}[DP Binary Search]\label{prop:binarysearch}
    Let $\ell : \mathcal{X}^* \times [m] \to \mathbb{R}$ have sensitivity $1$ in its first argument and be a decreasing function of its second argument -- i.e., $|\ell(x,y)-\ell(x',y)| \le | x \setminus x' | + | x' \setminus x |$ and $y \le y' \implies \ell(x,y) \ge \ell(x,y')$.
    Let $\rho, \beta > 0$ and $\tau \in \mathbb{R}$.
    Then there exists $\eta = O(\sqrt{\log(m/\beta)/\rho})$ and a $\rho$-zCDP \cite{bun2016concentrated} and $\sqrt{2\rho}$-GDP \cite{dong2022gaussian} algorithm $M : \mathcal{X}^* \to [m+1]$ with the following property.
    For all $x \in \mathcal{X}^*$, with probability at least $1-\beta$, we have $\ell(x,M(x)) < \tau + \eta$ and $\ell(x,M(x)-1) > \tau - \eta$, where we define $\ell(x,0)=\infty$ and $\ell(x,m+1)=-\infty$. 
\end{proposition}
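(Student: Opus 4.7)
The plan is to reduce the problem to the noisy binary search algorithm of \citet{gretta2024sharp} (Theorem \ref{thm:binarycoins}). For each query $i \in [m]$, the mechanism draws a fresh $Z \sim \mathcal{N}(0,\sigma^2)$, computes the noisy value $\ell(x,i) + Z$, and outputs only the single bit $B_i := \mathbb{I}[\ell(x,i)+Z < \tau]$. This bit is a $\mathsf{Bernoulli}(p_i)$ sample with $p_i := \Phi((\tau-\ell(x,i))/\sigma)$, where $\Phi$ is the Gaussian CDF. Because $\ell(x,\cdot)$ is decreasing, the biases satisfy $0 = p_0 \le p_1 \le \cdots \le p_m \le p_{m+1} = 1$, where the extremes correspond to the conventions $\ell(x,0) = \infty$ and $\ell(x,m+1) = -\infty$. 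Thus the sorted-bias hypothesis of Theorem \ref{thm:binarycoins} is met and we may feed this oracle into their algorithm directly.

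Plugging the oracle into Theorem \ref{thm:binarycoins} with a fixed constant $\gamma \in (0,1/4)$ (so that $1 - H(1/2-\gamma)$ is an absolute positive constant) yields an algorithm that makes $q = O(\log(m/\beta))$ oracle calls and returns $M(x) \in [m+1]$ such that, with probability $\ge 1-\beta$, $[p_{M(x)-1}, p_{M(x)}] \cap (1/2-\gamma, 1/2+\gamma) \ne \emptyset$. Unpacking, this yields $p_{M(x)-1} < 1/2+\gamma$ and $p_{M(x)} > 1/2-\gamma$ simultaneously, which via the definition of $p_i$ translate (using strict monotonicity of $\Phi$) to $\ell(x, M(x)-1) > \tau - c\sigma$ and $\ell(x, M(x)) < \tau + c\sigma$, where $c := \Phi^{-1}(1/2+\gamma) > 0$ is a universal constant. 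Setting $\eta := c\sigma$ gives exactly the two-sided accuracy guarantee claimed.

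For privacy, each oracle call releases a post-processing of $\ell(x,i) + Z$, which by the Gaussian mechanism on a sensitivity-$1$ query is $\tfrac{1}{2\sigma^2}$-zCDP and $(1/\sigma)$-GDP. Adaptive composition over the $q$ (possibly adaptive) calls then gives $\tfrac{q}{2\sigma^2}$-zCDP and $\tfrac{\sqrt{q}}{\sigma}$-GDP. Choosing $\sigma = \sqrt{q/(2\rho)}$ therefore yields $\rho$-zCDP and $\sqrt{2\rho}$-GDP simultaneously, and produces the claimed $\eta = c\sigma = O(\sqrt{\log(m/\beta)/\rho})$. Note that freshness and independence of the noise across queries lets us invoke standard composition without any union-bound loss in the privacy parameters.

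The main point requiring care is the bookkeeping at the boundary: I need to verify that the biases $p_i$ sort as claimed, that the ``dummy'' coins $p_0 = 0$ and $p_{m+1} = 1$ are consistent with $\ell(x,0) = \infty$, $\ell(x,m+1) = -\infty$ (so that the edge outputs $M(x) = 1$ and $M(x) = m+1$ give vacuously-true one-sided bounds), and that the interval-intersection conclusion of Theorem \ref{thm:binarycoins} really does unpack to the two-sided accuracy bound on $\ell$. Because the hard work is delegated to Theorem \ref{thm:binarycoins}, I do not anticipate a fundamental obstacle --- the key is simply to package its biased-coin oracle guarantee into the language of sensitivity-$1$ monotone functions and to verify that thresholding the Gaussian-noised value indeed realises a $\mathsf{Bernoulli}(p_i)$ with the right monotonicity.
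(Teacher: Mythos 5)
Your proof is correct and follows essentially the same route as the paper's: reduce to Theorem \ref{thm:binarycoins} by thresholding Gaussian-noised values of $\ell$ to obtain monotone-bias Bernoulli samples, set $\gamma$ to a constant so $q=O(\log(m/\beta))$, calibrate $\sigma=\sqrt{q/(2\rho)}$ via composition of the Gaussian mechanism, and read off $\eta=\Theta(\sigma)$. Your write-up actually fills in the boundary bookkeeping and the unpacking of the interval-intersection guarantee more explicitly than the paper does.
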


Proposition \ref{prop:binarysearch} follows from Theorem \ref{thm:binarycoins} and basic properties of differential privacy (namely, composition and the Gaussian mechanism) \cite{steinke2022composition}. The algorithm asks $q=O(\log(m/\beta)/\gamma^2)$ queries in the form of an index $i \in [m]$ and gets answers in the form of samples $\mathcal{N}(\ell(x,y_i),\sigma^2)$.
To ensure $\rho$-zCDP and $\sqrt{2\rho}$-GDP we set $\sigma = \sqrt{q/2\rho}$.
Per Equation \ref{eq:threshold}, we can set $\gamma = \Theta\left(\frac{\eta}{\sigma}\right)$ to translate between the magnitude $\sigma$ of the noise added, the tolerance $\gamma$ in the coin bias, and the tolerance $\eta$ in the values. We set the tolerance in the coin biases $\gamma$ to some constant (e.g., $\gamma=1/5$).
Thus $\eta = \Theta(\sigma) = \Theta(\sqrt{q/\rho}) = \Theta(\sqrt{\log(m/\beta)/\rho})$.

Theorem \ref{thm:shi-cdp} follows by combining Propositions \ref{prop:shi-loss} and \ref{prop:binarysearch}.
Proposition \ref{prop:shi-loss} gives a low-sensitivity/decreasing loss function. Proposition \ref{prop:binarysearch} gives a $\rho$-zCDP algorithm for approximately inverting it. Setting $\tau=\eta$ in Proposition \ref{prop:binarysearch} means, with probability at least $1-\beta$, we get an index $i$ such that $\ell(x,y_{i-1})>0$ and $\ell(x,y_i)<t=\tau+\eta=O(\sqrt{\log(|\mathcal{Y}|/\beta)/\rho})$.
The guarantee $\ell(x,y_{i-1})>0$ entails $y_i \le g(x)$, while $\ell(x,y_i)<t$ entails $y_i \ge \min\{ g(x') : x' \subseteq x, |x'| \ge |x|-t \}$, as required.

\section{Covering Designs} \label{app:coveringdesigns}

For completeness, we provide proofs of the bounds we use on the size of covering designs.

\begin{proposition}[Proposition \ref{prop:covering-size}]
    For all $n,m,t \in \mathbb{N}$ with $n \ge m \ge t$ we have
    \[\frac{{n \choose t}}{{m \choose t}} \le C(n,m,t) < \frac{{n \choose t}}{{m \choose t}} \left(1 + \log{ m \choose t} \right) + 1.\]
\end{proposition}
\begin{proof}
    The lower bound is due to \citet{Schonheim64} and the upper bound is due to \textcite[Theorem 13.4]{ErdosS74}.
    Both proofs rely on the probabilistic method.

    First the lower bound: 
    We claim that $C(n,m,t) \ge \frac{n}{m} C(n-1,m-1,t-1)$ for all $n \ge m \ge t \ge 1$.\footnote{Here we define $C(n,m,0)=1$ to make the induction work for $t=1$. (A $(n,m,0)$-covering design  needs one set $S_1 \subseteq [n]$ that ``contains'' the empty set $\emptyset \subseteq S_1$.)}
    Induction then gives \[C(n,m,t) \ge \frac{n}{m} C(n-1,m-1,t-1) \ge \frac{n}{m} \frac{n-1}{m-1} C(n-2,m-2,t-2) \ge \cdots \ge \prod_{i=0}^{t-1} \frac{n-i}{m-i} = \frac{{n \choose t}}{{m \choose t}} .\label{eq:schonheim}\]
    The claim follows from the following argument. 
    Let $S_1, S_2, \cdots, S_k \subseteq [n]$ be a $(n,m,t)$-covering design of size $k=C(n,m,t)$. Pick $U \in [n]$ uniformly at random.
    For each $i\in[k]$, if $U \in S_i$, we remove the element $U$ and call the new set $\widehat{S}_i = S_i \setminus \{U\}$; if $U \notin S_i$, we discard $S_i$.
    Next we renumber $[n]\setminus\{U\}$ to map to $[n-1]$ and reindex so that $\widehat{S}_1, \cdots, \widehat{S}_{\widehat k}$ excludes the discarded indices $i$ with $U \notin S_i$.
    Now $\widehat{S}_1, \cdots, \widehat{S}_{\widehat k} \subseteq [n-1]$ is a $(n-1,m-1,t-1)$-covering design. In particular, any $\widehat{T} \subseteq [n-1]$ of size $|\widehat{T}|\le t-1$ can be extended to $T \subseteq [n]$ with $|T|=|\widehat{T}|+1$ (by adding $U \in T$) such that there exists $i \in [k]$ with $T \subseteq S_i$, which implies $\widehat{T} \subseteq \widehat{S}_{\widehat{i}}$.
    The size $\widehat{k}$ of the new covering design depends on $U$. Specifically, $\widehat{k} = |\{ i \in [k] : U \in S_i \}|$. Since $U$ is uniformly random, we have $\ex{}{\widehat{k}} = \sum_{i \in [k]} \pr{}{U \in S_i} = \sum_{i \in [k]} \frac{|S_i|}{n} = \frac{m}{n} k$.
    There must exist a fixed choice of $U$ such that $\widehat{k} \le \frac{m}{n} k$, which rearranges to $C(n,m,t)=k \ge \frac{n}{m} \widehat{k} \ge \frac{n}{m} C(n-1,m-1,t-1)$, as required.

    Second the upper bound:
    Let $S_1, S_2, \cdots, S_{k_1} \subseteq [n]$ be independent and uniformly random of size $|S_i|=m$ for each $i \in [k_1]$.
    For any fixed $T \subseteq [n]$ of size $|T|=t$, we have \[\pr{}{\not\exists i \in [k_1] ~~ T \subseteq S_i} = \prod_{i \in [k_1]} \big( 1- \pr{}{T \subseteq S_i} \big) = \left( 1 - \frac{{n-t \choose m-t}}{{n \choose m}} \right)^{k_1} .\]
    Let $K_2$ be the number of sets $T \subseteq [n]$ of size $|T|=t$ such that there is no $i \in [k_1]$ with $T \subseteq S_i$.
    We have \[\ex{}{K_2} = {n \choose t} \pr{}{\not\exists i \in [k_1] ~~ T \subseteq S_i} = {n \choose t} \left( 1 - \frac{{n-t \choose m-t}}{{n \choose m}} \right)^{k_1},\] where $T \subseteq [n]$ with $|T|=t$ is arbitrary.\footnote{If $K_2=0$, then $S_1, \cdots, S_{k_1}$ form a covering design (i.e., there is no need to add more sets). Since $K_2 \ge 0$ is an integer, if $\ex{}{K_2}<1$, then $K_2=0$ happens with nonzero probability; this already proves $C(n,m,t) \le \frac{{n \choose t}}{{m \choose t}} \log{ n \choose t}$ \cite{chang1996set}. Going beyond existential results, if $\ex{}{K_2}$ is small (e.g., $\le 0.01$), then $S_1, \cdots, S_{k_1}$ form a covering design with high probability (e.g., $\ge 0.99$). \label{fn:cover-prob}}
    Now we can create a $(n,m,t)$-covering design of size $k_1+K_2$ by taking $S_1, \cdots, S_{k_1}$ and, for each uncovered $T \subseteq [n]$ of size $|T|=t$, adding an additional set that covers it.
    There must exist a fixed choice of the randomness such that $K_2 \le \lfloor \ex{}{K_2} \rfloor$. Thus we have
    \begin{align}
        C(n,m,t) &\le k_1 + \left\lfloor {n \choose t} \left( 1 - \frac{{n-t \choose m-t}}{{n \choose m}} \right)^{k_1} \right\rfloor \\
        &= k_1 + \left\lfloor {n \choose t} \left( 1 - \frac{{m \choose t}}{{n \choose t}} \right)^{k_1} \right\rfloor \tag{simplifying}\\
        &\le k_1 + \left\lfloor {n \choose t} \exp\left(  - k_1 \frac{{m \choose t}}{{n \choose t}} \right) \right\rfloor \tag{$1-x \le \exp(-x)$}. 
    \end{align}
    Setting $k_1 = \left\lceil \frac{{n \choose t}}{{m \choose t}} \log { m \choose t } \right\rceil$ yields \[C(n,m,t) \le \left\lceil \frac{{n \choose t}}{{m \choose t}} \log { m \choose t } \right\rceil + \left\lfloor \frac{{n \choose t}}{{m \choose t}} \right\rfloor < \frac{{n \choose t}}{{m \choose t}} \left( \log {m \choose t} + 1 \right) + 1,\] which is the desired result.\footnote{\citet{ErdosS74} state the result without the $+1$ at the end, but it is not clear to us how they obtain this result. Their proof ignores the need to round $k_1$ to an integer and they leave fixing this issue as an exercise.}
\end{proof}

We remark that the lower bound in Proposition \ref{prop:covering-size} can be improved:  Noting that $C(n,m,t)$ must be an integer, the inductive step can be improved to\\$C(n,m,t) \ge \left\lceil \frac{n}{m} C(n-1,m-1,t-1) \right\rceil$ for all $n \ge m \ge t \ge 1$. Equation \ref{eq:schonheim} then becomes \[C(n,m,t) \ge \left\lceil \frac{n}{m} C(n-1,m-1,t-1) \right\rceil \ge \left\lceil \frac{n}{m} \left\lceil \frac{n-1}{m-1} \left\lceil \frac{n-2}{m-2} \left\lceil \cdots \left\lceil \frac{n-t+1}{m-t+1} \right\rceil  \cdots \right\rceil \right\rceil \right\rceil \right\rceil\]

\begin{corollary}[Corollary \ref{cor:covering-size}]
    For all $n,m,t \in \mathbb{N}$ with $n \ge m \ge t>1$ we have
    \[
        \left(\frac{n}{m}\right)^t \le C(n,m,t) <  \left(\frac{n e}{m}\right)^t \cdot\min\{ 1 + m \log 2, 1 + t \log m \} +1.
    \]
\end{corollary}
\begin{proof}
    Proposition \ref{prop:covering-size} states that \[\frac{{n \choose t}}{{m \choose t}} \le C(n,m,t) < \frac{{n \choose t}}{{m \choose t}} \left(1 + \log{ m \choose t} \right)+1.\]
    We simplify these bounds, starting with the upper bound.
    We use standard bounds on binomial coefficients: For all $m \ge t > 1$, we have \[\left(\frac{m}{t}\right)^t \le {m \choose t} \le \min \left\{ \left(\frac{em}{t}\right)^t, m^t, 2^m \right\}.\]
    Thus $\log{ m \choose t} \le \min\{ t + t \log(m/t), t \log m, m \log 2\}$ and \[\frac{{n \choose t}}{{m \choose t}} \le \frac{\left(\frac{en}{t}\right)^t}{\left(\frac{m}{t}\right)^t} = \left(\frac{en}{m}\right)^t,\] which gives the upper bound.
    For the lower bound, write \[\frac{{n \choose t}}{{m \choose t}} = \prod_{i=0}^{t-1} \frac{n-i}{m-i} = \exp\left(\sum_{i=0}^{t-1} f(i) \right),\]
    where we define $f(x) = \log\left(\frac{n-x}{m-x}\right) = \log(n-x)-\log(m-x)$.
    For $0 \le x < m \le n$, we have $f'(x) = \frac{-1}{n-x}-\frac{-1}{m-x} = \frac{n-m}{(m-x)(n-x)} \ge 0$. 
    Thus $f$ is an increasing function on $[0,m)$. Using $\log(\frac{n}{m}) = f(0) \le f(i) \le f(t-1) = \log(\frac{n-t+1}{m-t+1})$ gives \[\left(\frac{n}{m}\right)^t \le \frac{{n \choose t}}{{m \choose t}} = \exp\left(\sum_{i=0}^{t-1} f(i) \right) \le  \left(\frac{n-t+1}{m-t+1}\right)^t,\]
    which yields the lower bound.
    For $0 \le x < m \le n$, we also have $f''(x) = \frac{-1}{(n-x)^2}-\frac{-1}{(m-x)^2} = \frac{(n-x)^2-(m-x)^2}{(n-x)^2(m-x)^2} = \frac{(n-m)((n-x)+(m-x))}{(n-x)^2(m-x)^2} \ge 0$. Thus $f$ is convex on $[0,m)$. This yields additional bounds:
    Jensen's inequality gives \[f\left(\frac{t-1}{2}\right) = f\left( \frac{1}{t}\sum_{i=0}^{t-1} i \right) \le \frac{1}{t}\sum_{i=0}^{t-1} f(i) \le \frac{1}{t}\sum_{i=0}^{t-1} \frac{(t-1-i)f(0)+if(t-1)}{t-1} = \frac{f(0) + f(t-1)}{2},\] which rearranges to \[\left(\frac{n-\frac{t-1}{2}}{m-\frac{t-1}{2}}\right)^t \le \frac{{n \choose t}}{{m \choose t}} = \exp\left(\sum_{i=0}^{t-1} f(i) \right) \le  \left(\frac{n}{m} \cdot \frac{n-t+1}{m-t+1}\right)^{t/2}.\] 
\end{proof}

Several bounds in Table \ref{tab:tradeoff} are derived from Corollary \ref{cor:covering-size}. In particular, if $n-m = \frac{cn}{t+c}$, then 
\begin{align}
    C(n,m,t) &< \left(\frac{ne}{m}\right)^t \cdot\min\{ 1 + m \log 2, 1 + t \log m \} +1  \\
    &= e^t \left(1 - \frac{n - m}{n}\right)^{-t}  \cdot\min\{ 1 + m \log 2, 1 + t \log m \} +1  \notag \\
    &= e^t \left(1 - \frac{c}{t+c}\right)^{-t}  \cdot\min\{ 1 + m \log 2, 1 + t \log m \} +1  \notag \\
    &= e^t \left(1 + \frac{c}{t}\right)^{t}   \cdot\min\{ 1 + m \log 2, 1 + t \log m \} +1  \notag \\
    &\le e^{t+c}   \cdot (1 + t \log m ) +1 = O(e^{t+c} \cdot t \log m) \tag{$1+x \le e^x$} .
\end{align}
Alternatively, Equation \ref{eq:chunks} gives \[C(n,m,t) = C(\ell(t+c),\ell t, t) \le C(t+c,t,t) = {t+c \choose t} \le (t+c)^c.\] Setting $\ell = \frac{n}{t+c} \in \mathbb{Z}$ yields $n-m = \ell c = \frac{cn}{t+c}$.

Corollary \ref{cor:covering-size} is not the tightest (asymptotic) bound possible. In particular, we use $\left(\frac{n}{m}\right)^t \le \frac{{n \choose t}}{{m \choose t}} \le \left(\frac{en}{m}\right)^t$ for all integers $1 \le t \le m \le n$. The gap between the upper and lower bounds is a factor of $e^t$, which can be significant. When $m \approx n$, the lower bound is tighter than the upper bound -- i.e., the $e^t$ factor becomes superfluous in this setting.
A tighter bound is \[\frac{{n \choose t}}{{m \choose t}} \le \left(c_{n,m} \frac{n}{m}\right)^t ~~~ \text{ where } ~~~ c_{n,m} := \left(1 + \frac{1}{n/m-1}\right)^{n/m-1} \in [1,e],\] which holds for all integers $1 \le t \le m < n$.

\end{document}